\newcommand{\SHARPP}{{\#\rm{P}}}
\newcommand{\Arg}{{\mathrm{Arg}}}
\newcommand{\ptime}{{\rm{P}}}
\theoremstyle{plain}
\theoremstyle{definition}
\begin{document}

\title{Holant Problems for Regular Graphs with Complex Edge Functions}

\author[lab1]{M. Kowalczyk}{Michael Kowalczyk}
\address[lab1]{Department of Mathematics and Computer Science, Northern Michigan University
  \newline Marquette, MI 49855, USA}  
\email{mkowalcz@nmu.edu}  

\author[lab2]{J-Y. Cai}{Jin-Yi Cai}
\address[lab2]{Computer Sciences Department, University of Wisconsin, Madison, WI 53706, USA}	
\email{jyc@cs.wisc.edu}  

\thanks{The second author is supported by NSF CCF-0830488 and CCF-0914969.}

\keywords{Computational complexity}
\subjclass{F.2.1}


\begin{abstract}
	\noindent We prove a complexity dichotomy theorem for
	Holant Problems on $3$-regular graphs
	with an arbitrary complex-valued edge function.
	Three new techniques are introduced:
	(1) higher dimensional iterations in interpolation;
	(2) Eigenvalue Shifted Pairs, which allow us to 
	prove that a pair  of combinatorial
	gadgets \emph{in combination} succeed in proving \#P-hardness;
	and (3) algebraic symmetrization, which significantly
	lowers the \emph{symbolic complexity} of the proof for
	computational complexity.
	With \emph{holographic reductions} the classification
	theorem also applies to problems beyond the basic model.
\end{abstract}

\maketitle



\section{Introduction}\label{section1}
In this paper we consider the following subclass of
Holant Problems~\cite{FOCS08,TAMC}.
An input regular graph  $G = (V, E)$ is given, where
every  $e \in E$ is labeled with a (symmetric) edge function $g$.
The function $g$ takes 0-1 inputs from its incident
nodes and outputs arbitrary values
in $\mathbb{C}$. 
The problem is to compute the quantity
${\rm Holant}(G) = \sum_{\sigma: V \rightarrow \{0,1\}}
\prod_{\{u,v\} \in E} g (\{\sigma(u), \sigma(v)\})$.

Holant Problems are a natural class of
counting problems. As introduced 
in~\cite{FOCS08,TAMC}, the general  Holant Problem framework
can encode all Counting
Constraint Satisfaction Problems (\#CSP). This
includes special cases such as weighted {\sc Vertex Cover},
{\sc Graph Colorings},  {\sc Matchings}, and {\sc Perfect
Matchings}. The subclass of Holant Problems
in this paper can also be considered as (weighted)
$H$-homomorphism (or $H$-coloring) 
problems~\cite{BulatovG05,Homomorphisms,acyclic,DyerG00,Goldberg-4,Hell}
with an arbitrary
$2 \times 2$ symmetric complex matrix $H$,
however \emph{restricted to} regular graphs $G$ as input.
E.g.,  {\sc Vertex Cover} is the case when
$H = 
{
\left [
\begin{array}{cc}
0 & 1 \\
1 & 1
\end{array}
\right ]
}
$. 
When the matrix $H$ is a 0-1 matrix, it is called unweighted.
Dichotomy theorems (i.e., the problem
is either in $\ptime$ or \#P-hard, depending
on $H$)
 for unweighted $H$-homomorphisms with undirected graphs
$H$ and directed acyclic graphs $H$ are given in \cite{DyerG00} and
\cite{acyclic} respectively.
A dichotomy theorem for any symmetric matrix $H$
with non-negative real entries is proved in~\cite{BulatovG05}.
Goldberg et~al.~\cite{Goldberg-4} 
proved a dichotomy theorem for all real symmetric matrices $H$.
Finally, Cai, Chen, and Lu have proved a dichotomy theorem for all 
complex symmetric matrices $H$~\cite{Homomorphisms}.

The crucial difference between Holant Problems and \#CSP 
is that in \#CSP, {\sc Equality} functions of arbitrary
arity are \emph{presumed} to be present. In terms of
$H$-homomorphism problems, this means that the input graph is allowed to
have vertices of arbitrarily high degrees.  This may appear
to be a minor distinction; in fact it has a major impact
on complexity. It turns out that if  {\sc Equality} gates 
of arbitrary arity are
freely available in possible inputs  then
it is technically easier to prove \#P-hardness. 
Proofs of previous
dichotomy theorems make extensive use of constructions called
thickening and stretching. These
constructions require the availability of {\sc Equality} gates of
arbitrary arity (equivalently, vertices of arbitrarily high degrees)
 to carry out.
Proving
 \#P-hardness becomes more
challenging in the degree restricted case.
 Furthermore there are indeed cases
within this class of counting problems where the problem
is \#P-hard for general graphs, but
solvable in $\ptime$ when restricted to 3-regular graphs.

We denote the (symmetric) edge function $g$ by
$[x, y, z]$, where $x = g(0,0)$, $y = g(0,1) = g(1,0)$ and $z = g(1,1)$.
Functions will also be called gates or signatures.
(For  {\sc Vertex Cover}, the function corresponding to $H$
is the {\sc Or} gate,
and is denoted by the signature $[0,1,1]$.)
In this paper we give a dichotomy
theorem for the complexity of Holant Problems
on 3-regular
graphs with arbitrary signature
 $g = [x, y, z]$, where $x, y, z \in \mathbb{C}$.
First, if $y=0$, the Holant Problem is easily solvable
in $\ptime$. Assuming $y \not = 0$ we may normalize $g$ and assume $y=1$.
Our main theorem is as follows:
\begin{theorem}\label{thm:main-intro}
        Suppose $a, b \in \mathbb{C}$, and let $X = a b$, 
$Z = (\frac{a^3 + b^3}{2})^2$.  Then the Holant Problem 
on 3-regular graphs
with $g=[a,1,b]$ is \SHARPP-hard except in the following cases, 
for which the problem is in $\ptime$.
        \begin{enumerate}
                \item $X=1$.
                \item $X=Z=0$.
                \item $X = -1$ and $Z = 0$.
                \item $X = -1$ and $Z = -1$.
        \end{enumerate}
If we restrict the input to planar 3-regular graphs, 
then these four categories are solvable in $\ptime$, 
as well as a fifth category $X^3 = Z$. The problem 
remains \SHARPP-hard in all other cases.
\footnote{Technically, computational complexity involving complex or
real numbers should,  in the Turing model,
be restricted to computable numbers. In other models such as
the Blum-Shub-Smale model~\cite{BSS} no such restrictions are
needed. Our results are not
sensitive to the exact model of computation.}
\qed
\end{theorem}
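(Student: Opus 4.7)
The plan is to split the theorem into two halves: establishing polynomial-time algorithms for the listed exceptional cases (the easiness side), and proving \SHARPP-hardness for every other parameter value (the hardness side). After normalizing so that $y=1$, the symmetric invariants $X = ab$ and $Z = ((a^3+b^3)/2)^2$ appear naturally: $X$ governs the behavior of length-$2$ paths built from copies of $[a,1,b]$, while $(a^3+b^3)/2$ is the Holant contribution of the minimal closed $3$-regular gadget, so $Z$ is the square of this basic $3$-regular closure. All of the hardness and tractability conditions will turn out to depend only on $X$ and $Z$, which is the algebraic symmetrization idea: working throughout in $(X,Z)$ rather than in $(a,b)$ keeps the symbolic case analysis bounded.

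For the easiness side I would dispatch each exceptional case by a holographic reduction that sends $[a,1,b]$ to a signature in a known tractable class. In case $(1)$, $X = 1$ forces $b = 1/a$ and a diagonal transformation turns the signature into a product form computable by a Fibonacci-style recurrence. Cases $(2)$ and $(3)$ correspond to signatures that factor through a rank-one matrix and can be evaluated directly. Case $(4)$ transforms to an affine signature whose Holant has a closed-form parity evaluation. The planar-only locus $X^3 = Z$ transforms holographically into the perfect-matching signature on a planar $3$-regular graph, tractable by Kasteleyn's theorem.

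For the hardness side the strategy is gadget construction plus polynomial interpolation. A base binary gadget built from $[a,1,b]$ has an associated $2 \times 2$ transfer matrix $M$ with entries polynomial in $a$ and $b$; iterating the gadget $k$ times yields a family of binary signatures encoded by $M^k$. When $M$ is diagonalizable and its two eigenvalues have a ratio that is not a root of unity, a Vandermonde argument recovers, from Holants on polynomially many enlarged graphs, enough linear data to simulate a binary signature already known to be \SHARPP-hard on $3$-regular graphs (for instance $[0,1,1]$). This reduces an established \SHARPP-hard Holant problem to the given one, yielding hardness at generic parameters.

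The main obstacle will be the exceptional spectra where the one-parameter iteration fails: $M$ has a repeated eigenvalue, the eigenvalue ratio is a root of unity, or the starting vector lies in an $M$-invariant subspace. These loci contain precisely the values closest to the tractability boundary, so each must be cleared individually. This is the role of Eigenvalue Shifted Pairs: when a single iterated gadget degenerates, I would pair it with a second construction whose spectrum is offset so that the two gadgets, individually useless for interpolation, jointly span a larger space. The higher-dimensional iteration is the formal tool for this: the state space is enlarged to tuples of binary signatures, and the combined transfer operator is shown to have generic spectrum off the true tractable loci. Clearing every degenerate case by a suitable shifted pair leaves only the parameter values listed in $(1)$--$(4)$ (and $X^3 = Z$ in the planar case) uncovered, which completes the dichotomy.
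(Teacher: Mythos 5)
Your high-level outline matches the paper's architecture (normalize to $y=1$, switch to symmetric invariants, split into tractability and hardness, prove hardness by recursive gadgets plus Vandermonde interpolation, patch the degenerate spectra), but several of the load-bearing details are wrong or missing. On the tractability side, your cases (3) and (4) are not settled by ``rank-one'' or ``affine'' arguments. The paper's actual tool is the algebraic-symmetrization lemma itself: the Holant over a fixed 3-regular graph is an integer polynomial $P(ab,\,a^3+b^3)$, so tractability at any one $(a,b)$ with given $(X,Y)$ propagates for free to every $(a',b')$ with the same $(X,Y)$. Concretely, $X=-1,\ Y=0$ inherits tractability from $\mathrm{Hol}(1,-1)$, and $X=-1,\ Y=\pm 2\mathfrak{i}$ from $\mathrm{Hol}(\pm\mathfrak{i},\pm\mathfrak{i})$, which sit in the known tractable families of the complex graph-homomorphism dichotomy. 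Without this ``evaluate at a tractable representative'' step your proof of the exceptional cases does not go through.

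On the hardness side you misdescribe both of the paper's new devices. Eigenvalue Shifted Pairs do not make two individually useless gadgets ``jointly span a larger space''; the mechanism is a disjunction: if $M' = M + \delta I$ and both $M$ and $M'$ had eigenvalues of equal norm, then $\mathrm{tr}(M)$ and $\det(M)$ would be forced to be real multiples of $\delta$ and $\delta^2$, which pins $(X,Y)$ to a small (typically real) locus; off that locus \emph{one} of the two gadgets already has eigenvalues of distinct norm and succeeds alone. Likewise, ``higher-dimensional iteration'' is not about tuples of binary signatures or a combined transfer operator; it means iterating a \emph{binary} recursive gadget with a $3\times3$ recurrence matrix and then applying one of three $2\times 3$ \emph{finisher} gadgets to project to a unary signature. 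The new content is the pigeonhole lemma guaranteeing that, among finisher gadgets whose row spaces intersect trivially, at least one keeps polynomially many of the iterates pairwise independent. Finally, you omit entirely the real-$(X,Y)$ regime, which in the paper is handled by four explicit binary gadgets, a reduction of the failure criteria to a semi-algebraic condition, and a certified {\sc CylindricalDecomposition} computation; without this (or some substitute) the dichotomy has a large unproved region.
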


These results can be extended to $k$-regular graphs (we detail how this
is accomplished in a forthcoming work).
One can also use holographic reductions~\cite{HA_FOCS}
to extend this theorem to more general Holant Problems.

In order to achieve this result, some new proof techniques are introduced.
To discuss this we first take a look at some previous results.
Valiant~\cite{Valiant79b,Valiant:sharpP}
 introduced the powerful technique of \emph{interpolation},
which was further developed by many others.
In \cite{FOCS08} a dichotomy theorem is proved for the case when $g$ is
a Boolean function.  
The technique from  \cite{FOCS08} is to provide 
certain algebraic criteria which ensure that
\emph{interpolation} succeeds, and then apply these
criteria to prove that
(a large number yet) finitely many individual problems 
are \SHARPP-hard.  This involves (a small number of)
   gadget constructions,
and the algebraic criteria are powerful
enough to show that they succeed in  each case.
Nonetheless this involves a case-by-case verification.
In~\cite{TAMC}
this theorem is extended to all real-valued $a$ and $b$,
and we have to deal with infinitely
many problems. So instead of focusing on one problem,
 we devised (a large  number of)
recursive gadgets and analyzed the regions  of $(a,b) \in \mathbb{R}^2$
where they fail to prove \SHARPP-hardness. 
The algebraic criteria from \cite{FOCS08}
are not suitable (Galois theoretic) for general  $a$ and $b$,
and so we formulated weaker but simpler criteria.
Using these criteria, the analysis of the failure set
becomes expressible as containment of semi-algebraic sets.
As semi-algebraic sets are decidable, this offers
the ultimate possibility that \emph{if} we found enough
gadgets to prove \SHARPP-hardness, \emph{then} there is a \emph{computational}
proof (of computational intractability)
 in a finite number of steps.  However this turned out
to be a tremendous undertaking in symbolic
computation, and many
additional ideas were needed to finally carry out this plan.
In particular, it would seem hopeless to extend
that approach to all complex  $a$ and $b$.

In this paper, we introduce three new ideas.
(1) We introduce a method to construct gadgets that
carry out iterations at a higher dimension, and then collapse
to a lower dimension for the purpose of constructing
unary signatures. This involves
a starter gadget,  a recursive iteration gadget, and 
a finisher gadget.  We prove a lemma that guarantees that
among polynomially many iterations, some subset of them
satisfies properties sufficient for interpolation to succeed
(it may not be known \emph{a priori} which subset worked, but
that does not matter).  
(2) Eigenvalue Shifted Pairs are coupled pairs of gadgets
whose transition matrices differ by $\delta I$ where $\delta \ne 0$.
They have shifted eigenvalues, and by analyzing their failure
conditions, we can show that except on very rare points,
one or the other gadget succeeds.
(3) Algebraic symmetrization.  We derive a new expression of the Holant
polynomial over 3-regular graphs, with  a  crucially reduced degree.
This simplification of the Holant and related polynomials condenses the
problem of proving $\SHARPP$-hardness to the point where 
all remaining cases can be handled by symbolic computation.
We also use the same expression to prove tractability.

The rest of this paper is organized as follows.  In Section
\ref{background} we discuss notation and background information.
In Section \ref{interpolation} we cover interpolation techniques,
including how to collapse higher dimensional iterations to
interpolate unary signatures.  In Section \ref{complexSignatures}
we show how to perform algebraic symmetrization of the Holant,
and introduce Eigenvalue Shifted Pairs (ESP) of gadgets.
Then we combine the new techniques
to prove Theorem~\ref{thm:main-intro}.


\section{Notations and Background}\label{background}

We state the counting framework more formally.
A {\it signature grid} $\Omega = (G, {\mathcal F}, \pi)$
consists of a labeled graph $G=(V,E)$ where $\pi$ labels
each vertex $v \in V$  with
a function $f_v \in {\mathcal F}$.
We consider all edge assignments $\xi: E \rightarrow \{0,1\}$;
$f_v$ takes inputs
from its incident edges $E(v)$ at $v$
and outputs values in $\mathbb{C}$. 
The counting problem on the instance $\Omega$
is to compute\footnote{The term
Holant was first introduced by
Valiant in \cite{HA_FOCS}
to denote a related exponential sum.} 
\[{\rm Holant}_\Omega=\sum_{\xi}
\prod_{v\in V} f_v(\xi\mid_{E(v)}).\]

Suppose $G$ is a bipartite graph $(U, V, E)$
such that each $u \in U$ has degree 2.
Furthermore suppose each $v \in V$ is labeled by
an {\sc Equality} gate $=_k$ where $k = {\rm deg}(v)$.
Then any non-zero term in ${\rm Holant}_\Omega$ 
corresponds to a 0-1 assignment $\sigma: V \rightarrow \{0,1\}$.
In fact, we can merge the two incident edges at $u \in U$
 into one edge $e_u$, and label this edge $e_u$
 by the function $f_u$.
This gives an edge-labeled graph  $(V, E')$
where $E' = \{e_u : u \in U\}$.
For an edge-labeled graph  $(V, E')$ where $e \in E'$ has label $g_e$,
 ${\rm Holant}_\Omega = \sum_{\sigma: V \rightarrow \{0,1\}}
\prod_{e = (v,w) \in E'}  g_e (\sigma(v), \sigma(w))$.
If each $g_e$ is the same function $g$ (but assignments
$\sigma: V \rightarrow [q]$ take values in a finite set $[q]$)   this
is exactly the $H$-coloring problem (for undirected graphs
$g$ is a symmetric function). 
In particular, if $(U, V, E)$ is a $(2,k)$-regular bipartite graph,
equivalently $G' = (V, E')$ is a  $k$-regular graph, then this is
the $H$-coloring problem restricted to $k$-regular graphs.
In this paper we will 
discuss 3-regular graphs,
where each $g_e$ is the same symmetric complex-valued function.
We also remark that
for general bipartite graphs $(U, V, E)$,   giving {\sc Equality} (of various
arities)
to all vertices on one side $V$ defines  \#CSP 
as a special case of Holant Problems.
But whether {\sc Equality} of
various arities are present has a major 
impact on complexity, thus Holant Problems are a refinement of
\#CSP.

A symmetric function $g: \{0,1\}^k \rightarrow \mathbb{C}$ can be 
denoted as
$[g_0, g_1,
\ldots, g_k]$, where $g_i$ is the value of $g$
on inputs of Hamming weight $i$.
They are also called {\it signatures}.
Frequently we will revert back to the bipartite view:
for $(2,3)$-regular bipartite graphs $(U, V, E)$,
if every $u \in U$ is labeled $g= [g_0, g_1, g_2]$
and every $v \in V$ is labeled $r =[r_0, r_1, r_2, r_3]$, then
we also use $\#[g_0, g_1, g_2] \mid [r_0, r_1, r_2, r_3]$
to denote the Holant Problem.
 Note that $[1,0,1]$ and $[1,0,0,1]$
are {\sc Equality} gates $=_2$ and $=_3$ respectively,
and the main dichotomy theorem in this paper
is about $\#[x,y,z] \mid [1,0,0,1]$, for all $x,y,z \in \mathbb{C}$.
We will also denote $\mathrm{Hol}(a,b) = \#[a,1,b] \mid [1,0,0,1]$.
More generally,
If ${\mathcal G}$ and ${\mathcal R}$ are sets of signatures,
and vertices of  $U$ (resp. $V$)
 are labeled by signatures from ${\mathcal G}$ (resp. ${\mathcal R}$),
then we also use $\# {\mathcal G} \mid {\mathcal R}$ to denote 
the bipartite Holant Problem.
Signatures in ${\mathcal G}$ are called {\it generators}
and signatures in ${\mathcal R}$ are called {\it recognizers}.
This notation is 
particularly convenient when we perform holographic
transformations.
Throughout this paper, all $(2,3)$-regular bipartite graphs 
are arranged with generators on the degree 2 side and 
recognizers on the degree 3 side.

We use $\Arg$ to denote the principal value of the complex argument; i.e.,
$\Arg(c) \in (-\pi, \pi]$ for all nonzero $c \in \mathbb{C}$.

\subsection{${\mathcal F}$-Gate}
Any signature from ${\mathcal F}$ is available at a vertex
as part of an input graph.
 Instead of a single vertex, we can use graph fragments to generalize 
this notion. An ${\mathcal F}$-gate $\Gamma$ is a pair $(H, {\mathcal F})$,
where $H = (V,E,D)$ is a graph with some dangling edges $D$ 
(Figure \ref{exampleGadgets} contains some examples).
Other than these dangling edges,
an ${\mathcal F}$-gate is the same as a signature grid. The role of
dangling edges is similar to that of external nodes in Valiant's
notion~\cite{Valiant:Qciricuit}, however
we allow more than one dangling edge for a node.
In $H=(V,E,D)$ each node is assigned a
function in ${\mathcal F}$
(we do not consider ``dangling'' leaf nodes at the end of a dangling
edge among these), $E$ are the regular edges, and $D$ are the 
dangling edges. Then we can define a function for this 
${\mathcal F}$-gate $\Gamma = (H, {\mathcal F})$,
\[\Gamma(y_1, y_2,
\ldots, y_q)=\sum_{(x_1, x_2, \ldots, x_p) \in \{0,1\}^p}H(x_1, x_2, \ldots,
x_p, y_1, y_2, \ldots, y_q),\] where 
$p=|E|$, $q=|D|$, 
$(y_1, y_2, \ldots, y_q) \in
\{0,1\}^q$ denotes an assignment on the dangling edges, and
$H(x_1, x_2, \ldots, x_p, y_1, y_2, \ldots, y_q)$ denotes
the value of the partial signature grid on an assignment of all edges,
i.e., the product of evaluations at every vertex of $H$,
for $(x_1, x_2, \ldots, x_p, y_1, y_2, \ldots, y_q) \in
\{0,1\}^{p+q}$.
\begin{figure}[h!tb]
\centering
\subfigure[A starter gadget]{
  \qquad\includegraphics[scale=.65]{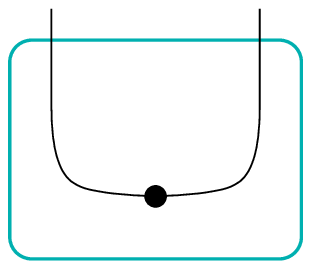}\qquad
  \label{starterGadget}
}
\subfigure[A recursive gadget]{
  \qquad\includegraphics[scale=.65]{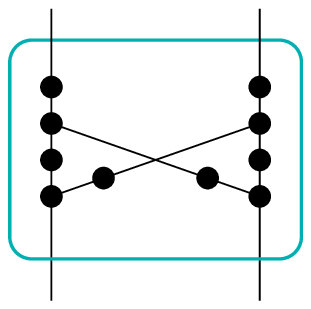}\qquad
  \label{recursiveGadget}
}
\subfigure[A finisher gadget]{
  \qquad\includegraphics[scale=.65]{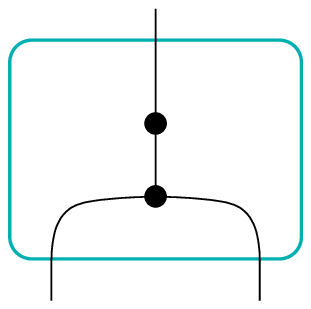}\qquad
  \label{finisherGadget}
}
\subfigure[A planar embedding of a single iteration]{
  \qquad\includegraphics[scale=.65]{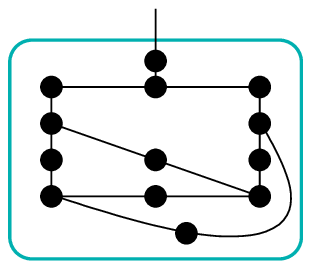}\qquad
  \label{iteration}
}
\caption{Examples of binary starter, recursive, and finisher gadgets}
\label{exampleGadgets}
\end{figure}
We
will also call this function the signature of the ${\mathcal F}$-gate
$\Gamma$. 
An ${\mathcal F}$-gate can be used in a signature grid as if
it is just a single node with the same signature. We note that
even for a very simple signature set ${\mathcal F}$, the signatures for
all ${\mathcal F}$-gates can be quite complicated and expressive.
Matchgate signatures are an example~\cite{Valiant:Qciricuit}.

The dangling edges of an $\mathcal{F}$-gate are
considered as input or output variables.
Any $m$-input $n$-output $\mathcal{F}$-gate can be 
viewed as a $2^n$ by $2^m$ matrix $M$ which transforms arity-$m$ 
signatures into arity-$n$ signatures
(this is true even if $m$ or $n$ are 0). 
Our construction will transform symmetric signatures to 
symmetric signatures. This implies that there exists
 an equivalent $n+1$ by $m+1$ matrix $\widetilde{M}$
which operates directly on column vectors written 
in symmetric signature notation.
We will henceforth identify the matrix $\widetilde{M}$ with the 
$\mathcal{F}$-gate itself.  The constructions in this paper are based upon
three different types of bipartite $\mathcal{F}$-gates which we call 
{\em starter gadgets}, {\em recursive gadgets}, and {\em finisher gadgets}.
An {\em arity-$r$ starter gadget} is an $\mathcal{F}$-gate with no input
but $r$ output edges.
If an $\mathcal{F}$-gate has $r$ input and $r$ output edges
then it is called an {\em arity-$r$ 
recursive gadget}.  Finally, an $\mathcal{F}$-gate is an {\em arity-$r$
finisher gadget} if it has $r$ input edges
1 output edge.
As a matter of convention, we consider any dangling edge incident 
with a generator as an output edge and any dangling edge incident
with a recognizer as an input edge; see Figure \ref{exampleGadgets}.


\section{Interpolation Techniques} \label{interpolation}

\subsection{Binary recursive construction}
In this section, we develop our new technique of higher dimensional
iterations  for interpolation of unary signatures.

\begin{lemma}\label{linAlg}
        Suppose $M \in \mathbb{C}^{3 \times 3}$
is a nonsingular matrix, $s \in \mathbb{C}^{3}$ is a 
nonzero vector, and for all integers $k \ge 1$, $s$ is not 
a column eigenvector of $M^k$.  Let $F_i \in 
\mathbb{C}^{2 \times 3}$ be three matrices, 
where ${\rm rank}(F_i) = 2$ for $1 \le i \le 3$, and the intersection
of the row spaces of $F_i$ is trivial $\{0\}$. 
Then for every $n$,
there exists some $F \in \{F_i : 1 \le i \le 3\}$,
and some $S \subseteq \{F M^k s : 0 \leq k \leq n^3\}$, 
such that $|S| \ge n$ and vectors in $S$ are 
\emph{pairwise} linearly independent.
\end{lemma}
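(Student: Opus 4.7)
The plan is to define, for each $k \in \{0, 1, \ldots, n^3\}$, the label triple
\[
\ell(k) := \bigl([F_1 M^k s],\, [F_2 M^k s],\, [F_3 M^k s]\bigr),
\]
where $[\cdot]$ denotes the projective direction in $\mathbb{P}^1$ when the vector is nonzero and a formal ``zero'' symbol otherwise, and to prove that $\ell$ is injective on $\{0, 1, \ldots, n^3\}$. Once injectivity is established, $\ell$ ranges in a product of size at most $(t_1{+}1)(t_2{+}1)(t_3{+}1)$, where $t_i$ counts the distinct nonzero projective directions attained by $\{F_i M^k s\}_{k=0}^{n^3}$ and hence equals the maximum size of a pairwise linearly independent subset of that set. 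Then $n^3 + 1 \leq (t_1{+}1)(t_2{+}1)(t_3{+}1)$ forces $t_i \geq n$ for some $i$, and selecting one representative from each of those $n$ nonzero direction classes gives the required $S$ with $F = F_i$.

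Two geometric facts underpin the injectivity step. First, the hypothesis that $s$ is not a column eigenvector of any $M^k$ with $k \geq 1$, combined with invertibility of $M$, forces the orbit points $s_k := M^k s$ to be pairwise nonparallel in $\mathbb{C}^3$: if $s_j$ and $s_k$ were parallel with $j > k$, then $M^{j-k} s$ would be a nonzero scalar multiple of $s$. Second, dualizing the row-space condition yields $K_1 + K_2 + K_3 = \mathbb{C}^3$ where $K_i := \ker F_i$ is 1-dimensional; consequently the $K_i$ are pairwise distinct (else their sum would be at most 2-dimensional), $K_i \cap K_j = \{0\}$ for $i \neq j$, and (crucially) $(K_1+K_2) \cap (K_1+K_3) \cap (K_2+K_3) = \{0\}$.

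For the injectivity step I would suppose distinct $k, k'$ satisfy $\ell(k) = \ell(k')$ and derive a contradiction. If some coordinate is the ``zero'' label, then $s_k$ and $s_{k'}$ both lie in the 1-dimensional $K_i$ and are parallel, contradicting the first fact. Otherwise, for each $i$ there is a nonzero scalar $c_i$ with $F_i s_{k'} = c_i F_i s_k$, equivalently $s_{k'} - c_i s_k \in K_i$. If $c_i = c_j$ for some $i \neq j$, then $s_{k'} - c_i s_k \in K_i \cap K_j = \{0\}$ again yields parallelism. If all three $c_i$ are pairwise distinct, then the pairwise differences $(c_j - c_i) s_k = (s_{k'} - c_i s_k) - (s_{k'} - c_j s_k) \in K_i + K_j$ place the nonzero vector $s_k$ in the triple intersection $\{0\}$ from the second fact, a contradiction.

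I expect the main technical obstacle to be the triple-intersection identity $(K_1+K_2) \cap (K_1+K_3) \cap (K_2+K_3) = \{0\}$. It rests on two consequences of the row-space condition: $K_i \cap K_j = \{0\}$ for $i \neq j$ (immediate from the $K_i$'s being distinct 1-dimensional subspaces) and $K_i \not\subseteq K_j + K_m$ for distinct $i, j, m$ (immediate from $K_1 + K_2 + K_3 = \mathbb{C}^3$, since otherwise the whole sum would collapse into a single 2-plane). Together these show that each pair $(K_i + K_j),(K_i + K_m)$ intersects in exactly the common summand $K_i$, after which $K_i \cap (K_j + K_m) = \{0\}$ yields the final collapse. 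With this identity in place, the remainder of the proof --- orbit nonparallelism, the case split on the $c_i$, and the final pigeonhole --- is routine.
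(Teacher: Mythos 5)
Your proof is correct, and it follows the paper's overall framework: label each $k$ by the triple of projective directions of $F_i M^k s$, show the labeling is injective on $\{0, 1, \ldots, n^3\}$, and conclude by pigeonhole that some coordinate must attain at least $n$ distinct nonzero directions. The only substantive difference is how you establish injectivity, i.e., that each pair $j \ne k$ is separated by at least one $F_i$. The paper observes that $N = [M^j s, M^k s]$ has rank two, so $\mathrm{ker}(N^\mathrm{T})$ is a single line; since the three row spaces meet only in $\{0\}$, at least one row space misses that line, forcing $F_i N$ to be nonsingular and hence $F_i M^j s$ and $F_i M^k s$ to be independent. You instead pass to the kernels $K_i = \mathrm{ker}(F_i)$, note that $K_1 + K_2 + K_3 = \mathbb{C}^3$ is the dual form of the row-space hypothesis, and run a case analysis on the scalar ratios $c_i$, ultimately invoking the identity $(K_1 + K_2) \cap (K_1 + K_3) \cap (K_2 + K_3) = \{0\}$. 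Both arguments rest on the same duality; the paper's is more compact, collapsing the separation into one rank statement, whereas yours spells out the geometry of the three kernel lines and where each case breaks. There are no gaps.
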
\label{one-finisher-works}
\proof
Let $k > j \ge 0$ be integers. 
Then $M^k s$ and $M^j s$ are nonzero and also linearly independent, since
otherwise $s$ is an eigenvector of $M^{k-j}$.
Let $N = [M^j s, M^k s] \in \mathbb{C}^{3 \times 2}$,
then ${\rm rank}(N) =2$, and 
$\mathrm{ker}(N^\mathrm{T})$ is a 1-dimensional linear subspace.
It follows that there exists an $F \in \{F_i : 1 \le i \le 3\}$
such that the row space of $F$ does not contain
$\mathrm{ker}(N^\mathrm{T})$, and hence has trivial
intersection with $\mathrm{ker}(N^\mathrm{T})$.
In other words, 
$\mathrm{ker}(N^\mathrm{T}F^\mathrm{T}) = \{0\}$.
We conclude that $FN \in \mathbb{C}^{2 \times 2}$
has rank 2, and $FM^j s$ and $FM^k s$ are linearly independent.

Each $F_i$, where $1 \le i \le 3$, defines a coloring of the set 
$K = \{0, 1, \dots, n^3\}$ as follows: color $k \in  K$ with the 
linear subspace spanned by $F_i M^k s$. Thus, $F_i$ defines an 
equivalence relation $\approx_i$ where $k \approx_i k'$ 
iff they receive the same color.  Assume for a contradiction 
that for each $F_i$, where $1 \le i \le 3$, there are not $n$ pairwise 
linearly independent vectors among $\{F_i M^k s : k \in  K\}$.  
Then, including possibly the 0-dimensional space $\{0\}$, 
there can be at most $n$ distinct colors assigned by $F_i$.  
By the pigeonhole principle, some $k$ and $k'$ with $0 \leq k < k' \leq n^3$ 
must receive the same color for all $F_i$, where $1 \le i \le 3$. 
This is a contradiction and we are done.
\qed

The next lemma says that under suitable conditions we can construct
all unary signatures $[x, y]$. The method will be interpolation
at a higher dimensional iteration, and finishing up with a suitable
\emph{finisher} gadget.  The crucial new technique here is
that when iterating at a higher dimension, we can guarantee
the existence of \emph{one} finisher gadget that succeeds on 
polynomially many steps, which results in overall success.
Different finisher gadgets may work for different initial signatures
and different input size $n$, but these need not be known in advance and 
have no impact on the final success of the reduction.

\begin{lemma}\label{mainLemma} 
        Suppose that the following gadgets can be built using complex-valued
        signatures from a finite generator set $\mathcal{G}$ and a finite 
        recognizer set $\mathcal{R}$.
        \begin{enumerate}
		\item A binary starter gadget with nonzero signature $[z_0, z_1, z_2]$.
		\item A binary recursive gadget with nonsingular recurrence matrix $M$, 
		for which $[z_0,z_1,z_2]^\mathrm{T}$ is not a column eigenvector of 
		$M^k$ for any positive integer $k$.
		\item Three binary finisher gadgets with rank 2 matrices $F_1, F_2, F_3
		 \in \mathbb{C}^{2 \times 3}$, where the intersection of the 
		row spaces of $F_1$, $F_2$, and $F_3$ is the zero vector.
		\label{finisherRequirement}
	\end{enumerate}
	Then for any $x, y  \in \mathbb{C}$, $\#\mathcal{G} \cup \{[x, y]\} \mid 
	\mathcal{R} \leq_T \#\mathcal{G} \mid \mathcal{R}$.
\end{lemma}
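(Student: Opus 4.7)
The plan is to use polynomial interpolation through the $\mathcal{F}$-gate constructions provided by the three hypotheses. Let $\Omega$ be a signature grid for $\#\mathcal{G}\cup\{[x,y]\}\mid\mathcal{R}$ in which the unary signature $[x,y]$ appears at exactly $n$ vertices. Grouping terms by the $\{0,1\}$-assignment on these $n$ vertices, I would first observe that
\[\mathrm{Holant}_\Omega \;=\; \sum_{i=0}^{n} c_i\, x^{\,n-i}\, y^{\,i},\]
where each $c_i \in \mathbb{C}$ depends only on the graph and the signatures from $\mathcal{G}\cup\mathcal{R}$, not on $(x,y)$. Hence it suffices to recover the $n+1$ coefficients $c_0,\ldots,c_n$ using oracle calls to $\#\mathcal{G}\mid\mathcal{R}$ and then evaluate the polynomial at the given pair $(x,y)$.

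To generate many different unary signatures to interpolate with, I would build, for each $k\in\{0,1,\ldots,(n+1)^3\}$ and each $i\in\{1,2,3\}$, the $\mathcal{F}$-gate obtained by composing the starter gadget, $k$ copies of the recursive gadget, and the finisher $F_i$. Since the starter contributes the vector $s=[z_0,z_1,z_2]^\mathrm{T}$ and the recursive gadget acts on symmetric binary signatures by $M$, this composite gadget has unary signature $F_i M^k s\in\mathbb{C}^2$. Replacing every occurrence of $[x,y]$ in $\Omega$ by this unary gadget yields a signature grid $\Omega_{i,k}$ whose Holant is an instance of $\#\mathcal{G}\mid\mathcal{R}$ and equals $\sum_{j=0}^{n} c_j\, a_{i,k}^{\,n-j}\, b_{i,k}^{\,j}$, where $(a_{i,k},b_{i,k})^\mathrm{T}=F_i M^k s$. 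All $3\bigl((n+1)^3+1\bigr)$ such values can be obtained with polynomially many oracle calls.

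Now I would apply Lemma \ref{linAlg} with $n+1$ in place of $n$: the nondegeneracy of $s$ and $M$ (hypothesis 2) and the trivial row-space intersection of $F_1,F_2,F_3$ (hypothesis 3) guarantee the existence of some $i^\star\in\{1,2,3\}$ and some subset $K\subseteq\{0,\ldots,(n+1)^3\}$ of size at least $n+1$ such that the unary signatures $\{F_{i^\star}M^k s:k\in K\}$ are pairwise linearly independent. Although $i^\star$ and $K$ are not known a priori, this is harmless: the reduction tries all three finishers, and for at least one value of $i$ it will find an index set of size $n+1$ for which the associated $(n+1)\times(n+1)$ matrix with rows $(a_{i,k}^{\,n},a_{i,k}^{\,n-1}b_{i,k},\ldots,b_{i,k}^{\,n})$ is nonsingular. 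Inverting this linear system recovers $c_0,\ldots,c_n$, and substituting the original $(x,y)$ finishes the computation.

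The only step requiring care beyond routine interpolation bookkeeping is the passage from \emph{pairwise linear independence} of the unary signatures $(a_{i,k},b_{i,k})$ to \emph{nonsingularity} of the homogeneous Vandermonde-type matrix above; this reduces to the standard observation that distinct projective ratios produce a nonsingular homogeneous Vandermonde. In effect, Lemma \ref{linAlg} is doing the real work, and the present lemma just packages its output into a working interpolation scheme; the delicate part of the design is the hypothesis structure — in particular, having three finishers with trivially intersecting row spaces — which is precisely what prevents the pathological alignment that could otherwise make every iterate collapse to the same projective line.
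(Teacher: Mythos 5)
Your proof is correct and follows the paper's own approach: both compose a starter, $k$ recursive iterations, and one of three finishers, then invoke Lemma~\ref{linAlg} to guarantee that one finisher yields enough pairwise linearly independent unary signatures to interpolate the coefficients $c_i$ via oracle calls. The only difference is cosmetic: you invoke the homogeneous Vandermonde directly on $n+1$ pairwise independent vectors, whereas the paper extracts $n+2$ of them, discards the at most one with vanishing second coordinate, and normalizes by $Y_k^{-n}$ to obtain an ordinary Vandermonde system.
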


\proof
	The construction begins with the binary starter gadget with signature
	$[z_0, z_1, z_2]$, which we call $N_0$.  Let $\mathcal{F} = \mathcal{G}
	\cup \mathcal{R}$.  Recursively, $\mathcal{F}$-gate $N_{k+1}$ is 
	defined to be $N_k$ connected to the binary recursive gadget in such 
	a way that the input edges of the binary recursive gadget are 
	merged with the output edges of $N_k$.  Then $\mathcal{F}$-gate 
	$G_k$ is defined to be $N_k$ connected to one of the finisher gadgets,
	with the input edges of the finisher gadget merged with the output
	edges of $N_k$ (see Figure \ref{iteration}).  Herein we
	analyze the construction with respect to a given bipartite signature
	grid $\Omega$ for the Holant Problem 
$\#\mathcal{G} \cup \{[x, y]\} \mid \mathcal{R}$, with underlying graph
	$G = (V, E)$.
  Let $Q \subseteq V$ be the
	set of vertices with $[x,y]$ signatures, and let $n = |Q|$.
	By Lemma \ref{linAlg} fix $j$ so that at least $n+2$ of the first
	$(n+2)^3 + 1$ vectors of the form $F_j M^k [z_0,z_1,z_2]^\mathrm{T}$
	are pairwise linearly independent.  We use finisher gadget $F_j$ in
	the recursive construction, so that the signature of $G_k$ is 
	$F_j M^k [z_0,z_1,z_2]^\mathrm{T}$, which we denote by $[X_k, Y_k]$.
	We note that there exists a subset $S$ of these signatures for which
	each $Y_k$ is nonzero and $|S| = n + 1$.  We will argue using only
	the existence of $S$, so there is no need to algorithmically ``find''
	such a set, and for that matter, one can try out all three finisher
	gadgets without any need to determine which finisher gadget is 
	``the correct one'' beforehand.  If we replace every element of 
	$Q$ with a copy of $G_k$, we obtain an instance of 
	$\#\mathcal{G} \mid \mathcal{R}$ (note that the correct bipartite 
	signature structure is preserved), and we denote this new signature 
	grid by $\Omega_k$.  Then
	\begin{eqnarray*}
		\mathrm{Holant}_{\Omega_k} 
			= \sum_{0\leq i \leq n} c_i X_k^i Y_k^{n-i}
	\end{eqnarray*}
	where $c_i = \sum_{\sigma \in J_i} \prod_{v \in V \setminus Q} 
	f_v(\sigma |_{E(v)})$, $J_i$ is the set of $\{0,1\}$ edge
	assignments where the number of 0s assigned to the edges incident 
	to the copies of $G_k$ is $i$, $f_v$ is the signature at $v$,
	and $E(v)$ is the set of edges incident to $v$.
	The important point is that
	the $c_i$ values do not depend on $X_k$ or $Y_k$.  Since each 
	signature grid $\Omega_k$ is an instance of $\#\mathcal{G} \mid 
	\mathcal{R}$, $\mathrm{Holant}_{\Omega_k}$ can be solved exactly 
	using the oracle.  Carrying out this process for every $k \in 
	\{0, 1, \dots, (n+2)^3\}$, we arrive at a linear system where the 
	$c_i$ values are the unknowns.
	\begin{eqnarray*}
		\left[\begin{array}{c} \mathrm{Holant}_{\Omega_{0}} \\ 
		\mathrm{Holant}_{\Omega_{1}} \\ \vdots \\ 
		\mathrm{Holant}_{\Omega_{(n+2)^3}} \end{array} \right] &=& 
		\left[\begin{array}{cccc}
			X_{0}^0 Y_{0}^n & X_{0}^1 Y_{0}^{n-1} & 
			\cdots & X_{0}^n Y_{0}^{0} \\
			X_{1}^0 Y_{1}^n & X_{1}^1 Y_{1}^{n-1} & 
			\cdots & X_{1}^n Y_{1}^{0} \\
			\vdots & \vdots & \ddots & \vdots \\
			X_{(n+2)^3}^0 Y_{(n+2)^3}^n & 
			X_{(n+2)^3}^1 Y_{(n+2)^3}^{n-1} & 
			\cdots & X_{(n+2)^3}^n Y_{(n+2)^3}^{0} \\
		\end{array}\right]
		\left[\begin{array}{c} c_0 \\ c_1 \\ \vdots \\ 
		c_n \end{array}\right].
	\end{eqnarray*}
	For $0 \leq i \leq n$, let $k_i$ such that
	$S = \{[X_{k_0}, Y_{k_0}], [X_{k_1}, Y_{k_1}], \dots, [X_{k_n}, Y_{k_n}]\}$, 
	and let $[x_i, y_i] = [X_{k_i}, Y_{k_i}]$.  Then we have a subsystem
	\begin{eqnarray*}
		\left[\begin{array}{c} y_{0}^{-n} \cdot 
		\mathrm{Holant}_{\Omega_{k_0}} \\ y_{1}^{-n} \cdot 
		\mathrm{Holant}_{\Omega_{k_1}} \\ \vdots \\ y_{n}^{-n} 
		\cdot \mathrm{Holant}_{\Omega_{k_n}} \end{array} \right] &=& 
		\left[\begin{array}{cccc}
			x_{0}^0 y_{0}^0 & x_{0}^1 y_{0}^{-1} & \cdots 
			& x_{0}^n y_{0}^{-n} \\
			x_{1}^0 y_{1}^0 & x_{1}^1 y_{1}^{-1} & \cdots 
			& x_{1}^n y_{1}^{-n} \\
			\vdots & \vdots & \ddots & \vdots \\
			x_{n}^0 y_{n}^0 & x_{n}^1 y_{n}^{-1} & \cdots 
			& x_{n}^n y_{n}^{-n} \\
		\end{array}\right]
		\left[\begin{array}{c} c_0 \\ c_1 \\ \vdots \\ 
		c_n \end{array}\right].
	\end{eqnarray*}
	The matrix above has entry $(x_{r} / y_{r})^{c}$ at index $(r,c)$.  
	Due to pairwise linear independence of $[x_{r}, y_{r}]$,  
	$x_{r} / y_{r}$ is pairwise distinct for $0 \leq r \leq n$.
	Hence this is a Vandermonde system of full rank.
	Therefore the initial feasible linear system has full rank and 
	we can solve it for the $c_i$ values.  With these values in hand,
	we can calculate $\mathrm{Holant}_\Omega = \sum_{0\leq i \leq n}
	c_i x^i y^{n-i}$ directly, completing the reduction.
\qed

The ability to simulate all unary signatures will allow us
to prove $\SHARPP$-hardness.
The next lemma says that, if $\mathcal{R}$ contains the
{\sc Equality} gate $=_3$, then
 other than on a 1-dimensional curve $ab=1$ and
an isolated point $(a,b)=(0,0)$, the ability to simulate unary signatures
gives a reduction from {\sc Vertex Cover}.
Note that counting {\sc Vertex Cover} on 3-regular
graphs is just $\#[0,1,1] \mid [1,0,0,1]$.
Xia et al. showed that this is   $\SHARPP$-hard even 
when the input is restricted to 
3-regular planar graphs~\cite{XiaZZ07}.
We will see shortly that
on the curve $ab=1$ and at $(a,b)=(0,0)$,
the problem  $\mathrm{Hol}(a,b)$ is tractable.

\begin{lemma}\label{mainLemma2} 

Suppose that $(a, b) \in \mathbb{C}^2 - \{(a,b) : ab = 1\} -
\{(0,0)\}$ and let $\mathcal{G}$ and $\mathcal{R}$ be finite signature sets 
where $[a,1,b] \in \mathcal{G}$ and $[1,0,0,1] \in \mathcal{R}$.  Further 
assume that $\#\mathcal{G} \cup \{[x_i, y_i] : 0 \leq i < m\} \mid \mathcal{R}
\leq_T \#\mathcal{G} \mid \mathcal{R}$ for any $x_i, y_i  \in \mathbb{C}$ and
$m \in \mathbb{Z}^+$.
Then $\#\mathcal{G} \cup \{[0, 1, 1]\} \mid \mathcal{R}
\leq_T \#\mathcal{G} \mid \mathcal{R}$, and
 $\#\mathcal{G} \mid \mathcal{R}$ is $\SHARPP$-hard.

\end{lemma}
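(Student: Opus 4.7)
The plan is to construct a binary generator gadget with signature $[0,1,1]$ using $[a,1,b]$, $[1,0,0,1]$, and unary signatures; substituting this gadget for each $[0,1,1]$-labeled vertex converts any instance of $\#\mathcal{G} \cup \{[0,1,1]\} \mid \mathcal{R}$ into an instance of $\#\mathcal{G} \cup \{\text{finitely many unaries}\} \mid \mathcal{R}$ with the same Holant value, and the hypothesis then yields the displayed Turing reduction. For $\SHARPP$-hardness, counting Vertex Cover on 3-regular planar graphs is exactly $\#[0,1,1] \mid [1,0,0,1]$ and is $\SHARPP$-hard by~\cite{XiaZZ07}; composing trivial inclusion ($[0,1,1] \in \mathcal{G} \cup \{[0,1,1]\}$ and $[1,0,0,1] \in \mathcal{R}$) with the Turing reduction transfers this hardness to $\#\mathcal{G} \mid \mathcal{R}$.

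To build $[0,1,1]$ I first observe a rescaling principle: if a binary generator with signature $[0, m, \ell]$ is available with $m\ell \ne 0$, then attaching a free unary $[U,V]$ through an intermediate $[1,0,0,1]$ to each of the two output edges transforms this signature into $[0,\,mUV,\,\ell V^2]$, and choosing $V = 1/\sqrt{\ell}$ and $U = \sqrt{\ell}/m$ yields exactly $[0,1,1]$. The problem therefore reduces to producing a signature of the form $[0, m, \ell]$ with $m\ell \ne 0$.

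I proceed by cases on $(a,b)$. When $a=0$ (so $b \ne 0$ by hypothesis), $[a,1,b]$ itself already has the required form. When $a \ne 0$ and $ab \ne -1$ (with $ab \ne 1$ by hypothesis), I use a \emph{shared-recognizer} gadget: two copies of $[a,1,b]$, each contributing one dangling output edge, whose remaining edges meet at a common $[1,0,0,1]$ whose third edge is attached to a free unary $[u,v]$. A direct computation gives signature $u[a^2,a,1] + v[1,b,b^2]$; setting $v = -a^2 u$ zeros the first coordinate and leaves middle entry $au(1-ab)$ and last entry $u(1-ab)(1+ab)$, both nonzero under the case assumptions.

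The main obstacle is the remaining case $a \ne 0$ and $ab = -1$. Here the shared-recognizer family's image lies in the 2-dimensional span of $[a^2,a,1]$ and $[a^2,-a,1]$, and imposing first-coordinate $=0$ in this span forces the last coordinate also to zero; so a more intricate gadget is needed. I resolve this with a \emph{triangle} gadget: three copies of $[a,1,b]$, where two supply the dangling output edges and the third bridges two $[1,0,0,1]$ recognizers, each attached to a common unary $[u,v]$ (the two unaries are taken equal to enforce symmetry in the two outputs). Its signature is quadratic in $(u,v)$, and imposing that the first coordinate vanishes yields $v/u = a^2(1 \pm \sqrt{2})$; substituting into the middle and last coordinates gives nonzero values, completing the reduction.
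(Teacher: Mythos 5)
Your overall plan matches the paper's: construct a gadget with generator signature $[0,1,1]$ out of $[a,1,b]$, $[1,0,0,1]$, and unaries, substitute it for every $[0,1,1]$-vertex, and invoke $\SHARPP$-hardness of counting vertex covers on $3$-regular (planar) graphs from~\cite{XiaZZ07}. Your shared-recognizer gadget and its algebra are correct, and so is the triangle gadget for $ab=-1$ (the matrix form $GDGDG$ with $b=-1/a$ does give $v/u=a^2(1\pm\sqrt2)$ and nonzero middle/last entries). The case split you chose differs from the paper's (which uses Gadgets~1, 2, 3 with several internal unaries and a chain of three reductions), but that part of your design is fine.

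The gap is in the rescaling step from $[0,m,\ell]$ to $[0,1,1]$. You splice a $[1,0,0,1]$ recognizer with a hanging unary into each output edge and then dangle from the $[1,0,0,1]$. That splice indeed multiplies the edge by $U$ on assignment $0$ and $V$ on assignment $1$ as a function, but as a graph fragment the resulting dangling edge is now incident with a recognizer vertex, not a generator. When you substitute this gadget for a degree-$2$ generator vertex whose two external neighbors are degree-$3$ recognizers, you create recognizer--recognizer adjacencies, so the output is not a valid instance of $\#\mathcal{G}\mid\mathcal{R}$ (the whole framework is over $(2,3)$-regular bipartite graphs with generators on the degree-$2$ side). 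The natural fix of interposing another $[a,1,b]$ after the recognizer to restore bipartiteness does not act diagonally --- it applies the non-diagonal matrix $\left[\begin{smallmatrix} aU & V \\ U & bV \end{smallmatrix}\right]$ to each output --- and destroys the $[0,*,*]$ form. Also note that you cannot absorb the rescaling into the internal unary of the shared-recognizer gadget: with $v=-a^2u$ the ratio of the last to middle entry is fixed at $(1+ab)/a$, so $c[0,1,1]$ is only reached when $a(1-b)=1$. To close the gap you need a gadget with enough internal unary degrees of freedom to produce $[0,1,1]$ (or a known scalar multiple) directly, which is exactly what the paper's Gadget~3 (three internal unaries) and the Gadget~1/2 reduction chain achieve.
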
 

\proof
	Assume $ab \ne 1$ and $(a,b) \ne (0,0)$.
	Since $\mathrm{Hol}(0,1)$ (which is
	the same as $\#[0,1,1] \mid [1,0,0,1]$, or counting
	vertex covers on 3-regular graphs) is 
	$\SHARPP$-hard, we only need to show how to simulate the 
	generator signature $[0,1,1]$.  We split this into three
	cases, and use a chain of three reductions, each involving
	a gadget in Figure \ref{vcGadgets} (each gadget has
	$[1,0,0,1]$ assigned to the degree 3 vertices
	and $[a,1,b]$ assigned to the degree 2 vertices).
	\begin{enumerate}
		\item $ab \ne 0$ and $ab \ne -1$
		\item $ab = 0$
		\item $ab = -1$
	\end{enumerate}

	If $ab \ne 0$ and $ab \ne -1$, then we use
	Gadget 3, and we set its unary signatures
	to be
	$\theta = [(a b + 1)/(1 - a b), -a^2 (a b + 1)/(1 - a b)]$, 
	$\gamma = [-a^{-2}, b^{-1}(1 + a b)^{-1}]$, and
	$\rho= [-b/(a b - 1), a/(a b - 1)]$.
	Calculating the resulting signature of Gadget 3, we find
	that it is $[0,1,1]$ as desired.

	If $ab = 0$ then assume without loss of generality that $a=0$ 
	and $b \ne 0$.  This time we use Gadget 1, setting
	$\theta = [b, b^{-1}]$.  Then Gadget 1 simulates a
	$[b^{-1}, 1, 2b]$ generator signature, but since this 
	signature fits
	the criteria of case 1 above, we are done by reduction
	from that case.

	Similarly, if $ab=-1$, then Gadget 2 exhibits 
	a generator signature of the form
	$[0,1,5/(2a)]$ under the signatures 
	$\theta = [1/(6 a), -a/24]$ and
	$\gamma = [-3/a, a]$.  Since $5/(2a)$ is
	nonzero, we are done by reduction from case 2.
\qed
\begin{figure}[h!tb]
\centering
\subfigure[Gadget 1]{
  \includegraphics[scale=.6]{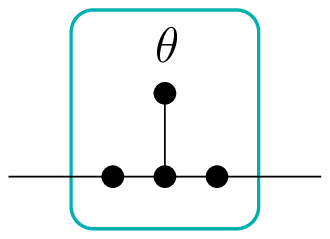}
  \label{gadget1}
}
\subfigure[Gadget 2]{
  \includegraphics[scale=.6]{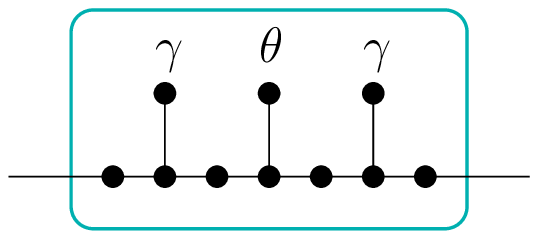}
  \label{gadget2}
}
\subfigure[Gadget 3]{
  \includegraphics[scale=.6]{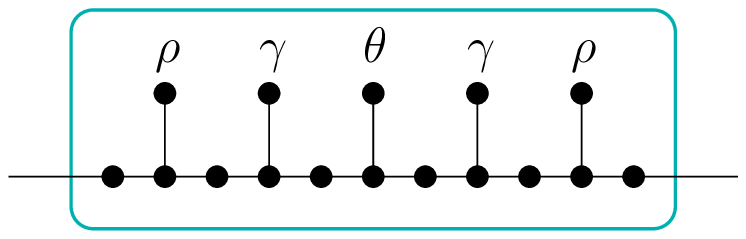}
  \label{gadget3}
}
\caption{Gadgets used to simulate the [0,1,1] signature}
\label{vcGadgets}
\end{figure}
It will be helpful to have conditions that are easier to check than those in Lemma \ref{mainLemma}.  To this end, we establish condition 2 in terms of eigenvalues, and we build general-purpose finisher gadgets to eliminate condition 3.  Let $M_4$, $M_5$, and $F$ be the recurrence matrices for Gadget 4, Gadget 5, and the simplest possible binary finisher gadget (each built using generator signature $[a,1,b]$ and recognizer signature $[1,0,0,1]$; see Figures \ref{gadget4}, \ref{gadget5}, and \ref{finisherGadget}).  Provided that $ab \ne 1$ and $a^3 \ne b^3$, it turns out that the finisher gadget sets $\{F, F M_4, F M_4^2\}$ and $\{F, F M_4, F M_5\}$ satisfy condition 3 of Lemma \ref{mainLemma} when $ab \ne 0$ and $ab = 0$, respectively.  Together with Lemma \ref{mainLemma2}, these observations yield the following.

\begin{theorem}\label{mainBinary}
        If the following gadgets can be built using generator
        $[a,1,b]$ and recognizer $[1,0,0,1]$ where 
        $a, b \in \mathbb{C}$, 
        $ab \ne 1$, and $a^3 \ne b^3$, then the problem $\mathrm{Hol}(a,b)$ is $\SHARPP$-hard.
        \begin{enumerate}
        	\item A binary recursive gadget with nonsingular recurrence 
        	matrix $M$ which has eigenvalues $\alpha$ and $\beta$ such 
        	that $\frac{\alpha}{\beta}$ is not a root of unity. 
        	
                \item A binary starter gadget with signature $s$ which 
                is not orthogonal to any row eigenvector of $M$.
        \end{enumerate}
\end{theorem}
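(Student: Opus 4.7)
\proof[Proof Plan]
The plan is to prove $\SHARPP$-hardness by reducing to Lemma~\ref{mainLemma} and Lemma~\ref{mainLemma2}. The hypotheses $ab \ne 1$ and $a^3 \ne b^3$ in particular exclude $(a,b) = (0,0)$, so once we have the ability to interpolate every unary signature $[x,y]$, Lemma~\ref{mainLemma2} immediately yields $\SHARPP$-hardness of $\mathrm{Hol}(a,b) = \#[a,1,b] \mid [1,0,0,1]$. So the task reduces to verifying the three hypotheses of Lemma~\ref{mainLemma} with $\mathcal{G} = \{[a,1,b]\}$ and $\mathcal{R} = \{[1,0,0,1]\}$.

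For condition~(1), the starter signature $s$ is nonzero since by hypothesis $s$ is not orthogonal to any row eigenvector of $M$ (which exists, because $M$ is nonsingular). For condition~(2), $M$ is nonsingular by assumption, and the key step is showing that $s$ is not a column eigenvector of $M^k$ for any $k \ge 1$. Suppose for contradiction that $M^k s = \lambda s$ for some $k \ge 1$ and $\lambda \in \mathbb{C}$. Let $u^\mathrm{T}$ and $v^\mathrm{T}$ be row eigenvectors of $M$ corresponding to $\alpha$ and $\beta$ respectively (these exist because $\alpha \ne \beta$, since their ratio is not a root of unity and in particular not $1$). Computing $u^\mathrm{T} M^k s$ in two ways yields $\alpha^k (u^\mathrm{T} s) = \lambda (u^\mathrm{T} s)$; since $u^\mathrm{T} s \ne 0$ by hypothesis, $\lambda = \alpha^k$. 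The same argument with $v$ gives $\lambda = \beta^k$, so $(\alpha/\beta)^k = 1$, contradicting the assumption that $\alpha/\beta$ is not a root of unity.

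Condition~(3) of Lemma~\ref{mainLemma} asks for three binary finisher gadgets with rank-$2$ matrices whose row spaces intersect only in the zero vector. As indicated in the paragraph immediately preceding the theorem, we take the three finishers to be $\{F, FM_4, FM_4^2\}$ when $ab \ne 0$ and $\{F, FM_4, FM_5\}$ when $ab = 0$, where $F$ is the simplest binary finisher gadget and $M_4, M_5$ are the recurrence matrices of the auxiliary Gadgets~4 and~5 (all built from $[a,1,b]$ and $[1,0,0,1]$). Each of these matrices is rank~$2$, and the triviality of the common row space can be checked by a direct $3 \times 3$ determinant computation in the variables $a, b$; the vanishing locus of that determinant is exactly captured by the excluded cases $ab = 1$ and $a^3 = b^3$. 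Once condition~(3) is verified, Lemma~\ref{mainLemma} supplies $\#\mathcal{G}\cup\{[x,y]\}\mid \mathcal{R} \le_T \#\mathcal{G}\mid\mathcal{R}$ for all $x,y\in\mathbb{C}$, and Lemma~\ref{mainLemma2} closes the argument.

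The main obstacle is condition~(3): the eigenvalue argument for condition~(2) is a clean three-line calculation, whereas verifying that the three concrete finisher matrices $\{F, FM_4, FM_4^2\}$ (resp.\ $\{F, FM_4, FM_5\}$) fail to share any nonzero row vector is an explicit linear-algebra calculation in $a$ and $b$ whose failure set must be shown to coincide precisely with $\{ab = 1\} \cup \{a^3 = b^3\}$. This is where the two parametric hypotheses of the theorem are actually used, and it is the step that forces the particular choice of auxiliary gadgets made in the setup before the theorem.
\qed
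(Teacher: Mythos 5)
Your proposal follows essentially the same route as the paper: reduce to Lemma~\ref{mainLemma} and Lemma~\ref{mainLemma2}, using the finisher sets $\{F, FM_4, FM_4^2\}$ for $ab\ne 0$ and $\{F, FM_4, FM_5\}$ for $ab=0$. Your treatment of condition~(2) is a small simplification of the paper's: the paper invokes Jordan normal form and looks at the first two rows of $T$, while you pair directly against row eigenvectors $u^\mathrm{T}, v^\mathrm{T}$ for $\alpha,\beta$ and immediately get $\alpha^k = \lambda = \beta^k$; these are the same idea, yours slightly cleaner. Two minor points on condition~(3), which you quite reasonably left as a plan rather than a computation: (i) your statement that the vanishing locus of the cross-product determinant is ``exactly $\{ab=1\}\cup\{a^3=b^3\}$'' is imprecise --- the determinant (for the $ab\ne0$ triple) is a unit times $ab(ab-1)^7(a^3-b^3)$, so it also vanishes on $\{ab=0\}$, which is precisely why the second finisher triple is introduced; (ii) when $ab=0$ the matrix $M_4$ is singular ($\det M_4 = ab(ab-1)^3$), so the rank-$2$ claim for $FM_4$ is not automatic there and must be checked directly, as the paper does by writing out $F' = FM_4$ explicitly for $b=0$. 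Neither point is an error in the approach, just places where the finished proof must supply the computation.
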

\proof
	First we show how to build general-purpose binary finisher gadgets
	for the main construction using the assumed generator 
	and recognizer, starting first with the case where $ab \ne 0$.  Using
	the simplest possible choice for a finisher gadget $F$ (Figure
	\ref{finisherGadget}), we get $F = \left[\begin{array}{ccc}a & 0 &
	1 \\ 1 & 0 & b \end{array}\right]$.  Let $M_4$ be the recurrence 
	matrix for binary recursive Gadget 4 (Figure \ref{gadget4}),
	and we calculate that
	\begin{eqnarray*}
		M_4 &=& 
		\left[\begin{array}{ccc}
			a^3 & 2 a & b \\
			a^2 & a b + 1 & b^2 \\
			a & 2 b & b^3
		\end{array}\right].
	\end{eqnarray*}
	We build two more finisher 
	gadgets $F'$ and $F''$ using Gadget 4 so that $F' = FM_4$ and 
	$F'' = FM_4^2$.  Since $F$ and $M_4$ both have full rank (note 
	$\mathrm{det}(M_4)=ab(ab-1)^3$), it follows that $F'$ and $F''$ 
	also have full rank.  Now we will show that the row spaces of
	$F$, $F'$ and $F''$ have trivial intersection, and it suffices
	to verify that the cross products of the row vectors of $F$, $F'$,
	and $F''$ (denoted respectively by $v$, $v'$, and $v''$) are linearly independent.  
	(To see this, suppose $u$ is a complex vector in the intersection of the row spaces of $F_1$, $F_1'$, and $F_1''$.  Then $v$, $v'$, and $v''$ are all orthogonal to $u$, but since $v$, $v'$, and $v''$ are linearly independent, they span the conjugate vector $\overline{u}$ which is then also orthogonal to $u$.  This means $|u|^2 = u \overline{u} = 0$, and that $u = 0$.)  
	The cross products of the row vectors of $F$, $F'$,
	and $F''$ are $[0,1-ab,0]$, $(ab-1)^2 [2b^2,-ab(ab+1),2a^2]$,
	and $(ab-1)^3 [2b(a^2b^3+a^2+ab^2+b^4),
	-ab(a^3b^3 + 2a^3 + 2a^2b^2 + ab + 2b^3),2a(a^4 + a^3b^2 + a^2b + b^2)]$ 
	respectively.  Then to see that these 3 vectors are linearly
	independent, it suffices to verify that the matrix 
	$\left[\begin{array}{cc} 2 b^2 & 2 a^2 \\ 
	2b (a^2b^3+a^2+ab^2+b^4) & 2a (a^4 + a^3b^2 + a^2b + b^2) \end{array}\right]$
	is nonsingular.  Since $a \ne 0$, $b \ne 0$, and $ab \ne 1$, we just check
	$\mathrm{det}\left[\begin{array}{cc} b & a \\ 
	a^2b^3 + a^2 + ab^2 + b^4 & a^4 + a^3b^2 + a^2b + b^2
	\end{array}\right] = (ab-1)(a^3-b^3) \ne 0$, so the row spaces of 
	$F$, $F'$, and $F''$ have trivial intersection when $ab \ne 0$.
	
	Now suppose $ab=0$.  Since $a^3 \ne b^3$, by symmetry, if $ab = 0$ we may 
	assume without loss of generality that $a \ne 0$ 
	and $b=0$.  Let $M_5$ be the recurrence matrix for binary
	recursive Gadget 5 (Figure \ref{gadget5}).
	\begin{eqnarray*}
		M_5 &=&
		\left[\begin{array}{ccc}
			a^6 + 2 a^3 + 1 & 2 a^4 + 2 a & a^2 \\
			a^5 + a^2 & 2 a^3 + 1 & a \\
			a^4 & 2 a^2 & 1
		\end{array}\right].
	\end{eqnarray*}
	Composing $F$ with $M_5$, we get a
	finisher gadget with matrix $FM_5$, which has full rank since
	$F$ has full rank and $\mathrm{det}(M_5) = 1$.  It is also
	straightforward to see that $F'$ has full rank, as 
	$F' = 
	\left[\begin{array}{ccc}
		a^4 + a & 2a^2 & 0 \\
		a^3 & 2 a & 0 
	\end{array}\right]
	$.
	The cross products of the rows of $F$, $F'$, and $FM_5$ are 
	$[0,1,0]$, $[0,0,2a^2]$, and $[-2a, 2a^3 + 1, -2a^2(1+a)(a^2-a+1)]$
	respectively.  Then the matrix of cross products 
	is clearly nonsingular, and we conclude that for any $a, b \in
	\mathbb{C}$, we have 3 finisher gadgets satisfying 
	item \ref{finisherRequirement} of Lemma 
	\ref{mainLemma} unless $ab = 1$
	or $a^3 = b^3$.

	Now we want to show that $s$ is not a column eigenvector of $M^k$ for 
	any positive integer $k$ (note that $s$ is nonzero by assumption).
	Writing out the Jordan Normal Form for $M$, we have 
	$M^k s = T^{-1} D^k T s$, where $D^k$ has the form 
	$\left[\begin{array}{ccc} \alpha^k & 0 & 0 \\ 0 & \beta^k & 0 
	\\ 0 & * & * \end{array}\right]$, and where $\alpha$ and
	$\beta$ are eigenvalues of $M$ for which $\frac{\alpha}{\beta}$ is not
	a root of unity.  Let $t = Ts$ and write 
	$t = [c,d,e]^\mathrm{T}$.  By hypothesis, $s$ is not orthogonal 
	to the first two rows of $T$, thus $c, d \ne 0$.  If $s$ were 
	an eigenvector of $M^k$ for some positive integer $k$, then 
	$T^{-1} D^k T s = M^k s = \lambda s$ for some nonzero complex
	value $\lambda$, and $D^k t = T (\lambda s) = \lambda t$.  But 
	then $c \alpha^k =  \lambda c$ and $d \beta^k   =  \lambda  d$,
	which means $\frac{\alpha^k}{\beta^k} = 1$, contradicting the 
	fact that $\frac{\alpha}{\beta}$ is not a root of unity.
	
	We have now met all the criteria for Lemma \ref{mainLemma}, so the
	reduction $\# \mathcal{S} \cup \{[a, 1, b], [x, y]\} \mid 
	[1,0,0,1] \leq_T \# \mathcal{S} \cup \{[a, 1, b]\} \mid [1,0,0,1]$
	holds for any $x,y \in \mathbb{C}$ and any finite signature set
	$\mathcal{S}$.  By Lemma \ref{mainLemma2} 
	the problem $\mathrm{Hol}(a,b)$ is $\SHARPP$-hard.
\qed

\subsection{Unary recursive construction}

Now we consider the unary case.  The following lemma 
arrives from \cite{Vadhan01} and is stated explicitly in \cite{TAMC}.  It 
can be viewed as a unary version of Lemma \ref{mainLemma} without finisher 
gadgets.

\begin{lemma}\label{unaryConstruction}
	Suppose there is a unary recursive gadget with nonsingular matrix 
	$M$ and a unary starter gadget with nonzero signature vector $s$.  
	If the ratio of the eigenvalues of $M$ is not a root of unity and 
	$s$ is not a column eigenvector of $M$, then these gadgets can 
	be used to interpolate all unary signatures.
\qed
\end{lemma}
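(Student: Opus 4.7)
The plan is to carry out a Vandermonde-style interpolation in the spirit of the proof of Lemma~\ref{mainLemma}, but directly in dimension two so that no finisher gadget is needed. Given a signature grid $\Omega$ for $\#\mathcal{G} \cup \{[x,y]\} \mid \mathcal{R}$, let $Q$ be the set of vertices labeled $[x,y]$ and set $n = |Q|$. Let $N_0$ denote the starter gadget (of signature $s$), and recursively let $N_k$ be $N_{k-1}$ composed with the recursive gadget, so that $N_k$ realizes the unary signature $M^k s =: [X_k, Y_k]^{\mathrm{T}}$. Replacing each vertex of $Q$ by a copy of $N_k$ produces a grid $\Omega_k$ in $\#\mathcal{G} \mid \mathcal{R}$ whose Holant satisfies
\[
\mathrm{Holant}_{\Omega_k} \;=\; \sum_{i=0}^{n} c_i\, X_k^{\,i} Y_k^{\,n-i},
\]
where the coefficients $c_i$ depend only on $\Omega$ and not on $k$, and each $\mathrm{Holant}_{\Omega_k}$ is obtainable from the oracle for $\#\mathcal{G} \mid \mathcal{R}$. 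Recovering $\mathrm{Holant}_\Omega = \sum_i c_i x^i y^{n-i}$ then reduces to solving this linear system for the $c_i$.

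The reduction succeeds as soon as some $n+1$ of the rows $(X_k^{\,i} Y_k^{\,n-i})_{i=0}^n$ form a nonsingular matrix. Dividing row $k$ by $Y_k^n$ (when $Y_k \ne 0$) turns this into a Vandermonde system in the ratios $X_k / Y_k$, so it suffices to exhibit $n+1$ indices $k$ for which the vectors $M^k s$ are pairwise non-proportional and have nonzero second coordinate.

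To establish pairwise linear independence I use the Jordan form. Since the eigenvalue ratio $\alpha/\beta$ is not a root of unity, in particular $\alpha \ne \beta$, so $M = T^{-1} D T$ with $D = \mathrm{diag}(\alpha, \beta)$. Writing $t := Ts = [c, d]^{\mathrm{T}}$, the hypothesis that $s$ is not a column eigenvector of $M$ forces $c, d \ne 0$. If $M^j s$ and $M^k s$ were linearly dependent for some $j \ne k$, the same would hold of $D^j t = [c\alpha^j, d\beta^j]^{\mathrm{T}}$ and $D^k t = [c\alpha^k, d\beta^k]^{\mathrm{T}}$, forcing $(\alpha/\beta)^{k-j} = 1$ and contradicting the hypothesis. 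Hence all the iterates $M^0 s, M^1 s, M^2 s, \ldots$ are pairwise non-proportional.

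The only remaining wrinkle is that an iterate might have $Y_k = 0$, but among pairwise non-proportional vectors in $\mathbb{C}^2$ at most one can have vanishing second coordinate, so taking the first $n+2$ iterates yields at least $n+1$ usable ones. The resulting Vandermonde system is invertible, we solve for the $c_i$ in polynomial time, and reassemble $\mathrm{Holant}_\Omega$. No step presents a real obstacle; the only delicate point is the Jordan-form argument, which is where both hypotheses (eigenvalue ratio not a root of unity, and $s$ not an eigenvector of $M$) are used in an essential way.
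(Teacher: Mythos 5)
Your proof is correct and is precisely the standard Vandermonde-interpolation argument; in fact the paper states this lemma without reproving it (attributing it to Vadhan and to the TAMC paper), and your argument is the natural unary specialization of the paper's own proof of Lemma~\ref{mainLemma}, where the $2\times 2$ setting lets you drop the finisher gadgets and the pigeonhole step entirely because pairwise non-proportional vectors in $\mathbb{C}^2$ already give distinct ratios directly. The Jordan-form argument correctly uses both hypotheses: $c,d\neq 0$ from ``$s$ not an eigenvector,'' nonzero eigenvalues from nonsingularity, and $(\alpha/\beta)^{k-j}\neq 1$ from the root-of-unity hypothesis, and the observation that at most one iterate can have vanishing second coordinate cleanly handles the $Y_k=0$ wrinkle.
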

Surprisingly, a set of general-purpose starter 
gadgets can be made for this construction as long as
$a b \ne 1$ and $a^3 \ne b^3$, so we refine this lemma by
eliminating the starter gadget requirement.  The starter gadgets are
$Fs$, $F M_4 s$, and $F M_6 s$ where $M_6$ is Gadget 6 and $s$ is the 
single-vertex starter gadget (see Figures \ref{gadget6} and \ref{starterGadget}).

\begin{theorem}\label{mainUnary}
	Suppose there is a unary recursive gadget with nonsingular matrix 
	$M$, and the ratio of the eigenvalues of $M$ is not a root of unity.
	Then for any $a, b \in \mathbb{C}$ where $ab \ne 1$ and $a^3 \ne b^3$,
	there is a starter gadget built using generator $[a,1,b]$ and recognizer
	$[1,0,0,1]$ for which the resulting construction
	can be used to interpolate all unary signatures.
\end{theorem}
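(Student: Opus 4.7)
\medskip
\noindent\textbf{Proof proposal.} The plan is to invoke Lemma~\ref{unaryConstruction}. Its matrix $M$ and the eigenvalue-ratio condition are already given by hypothesis, so all that is needed is a starter signature which is (i)~nonzero and (ii)~not a column eigenvector of $M$. I will produce three candidate unary starter signatures built from $[a,1,b]$ and $[1,0,0,1]$, and argue that at least one of them fulfils both requirements.

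The three candidates are $Fs$, $FM_4 s$, and $FM_6 s$, each a vector in $\mathbb{C}^2$. Here $F$ is the simplest binary finisher from the proof of Theorem~\ref{mainBinary}, $M_4$ is the recurrence matrix of Gadget~4 computed there, $M_6$ is the recurrence matrix of a new Gadget~6, and $s$ is the $3$-vector associated with the single-vertex starter. The key step is the claim that these three vectors are pairwise linearly independent in $\mathbb{C}^2$ whenever $ab \ne 1$ and $a^3 \ne b^3$. This reduces to checking that the three $2 \times 2$ determinants $\det[Fs \mid FM_4 s]$, $\det[Fs \mid FM_6 s]$, and $\det[FM_4 s \mid FM_6 s]$ are each nonzero. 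Each is a polynomial in $a$ and $b$, and the natural expectation --- mirroring the factorizations that appeared in the proof of Theorem~\ref{mainBinary} --- is that every one of them splits as a unit scalar multiple of a product of powers of $(ab-1)$ and $(a^3-b^3)$. This is a purely symbolic verification, and it is the place where the precise choice of Gadget~6 matters: $M_6$ must be chosen so that the extra degree of freedom it provides breaks all collinearities not already forced by the two excluded loci.

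Once pairwise linear independence is established, the conclusion is short. Because $\alpha/\beta$ is not a root of unity, $\alpha \ne \beta$ and $M$ is diagonalizable with exactly two one-dimensional column eigenspaces in $\mathbb{C}^2$. Three pairwise linearly independent vectors meet the union of these two lines in at most two of their members, so at least one of $Fs$, $FM_4 s$, $FM_6 s$ is nonzero and not a column eigenvector of $M$. Choosing that candidate as the starter and invoking Lemma~\ref{unaryConstruction} yields interpolation of all unary signatures. The principal obstacle is the symbolic determinant calculation --- specifically, designing Gadget~6 so that its recurrence matrix $M_6$ yields only $(ab-1)$ and $(a^3-b^3)$ as obstructions --- but this is a bookkeeping task in keeping with the algebraic-symmetrization theme of the paper rather than a conceptual one.
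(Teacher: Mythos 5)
Your proposal matches the paper's proof essentially line for line: the paper also takes the three candidates $Fs$, $FM_4 s$, $FM_6 s$ (with $s = [a,1,b]^{\mathrm{T}}$), verifies by symbolic computation that the three $2\times 2$ block matrices are nonsingular whenever $ab\ne 1$ and $a^3\ne b^3$, and concludes from pairwise linear independence plus the two one-dimensional eigenspaces of $M$ that at least one candidate is a valid starter for Lemma~\ref{unaryConstruction}. The only difference is cosmetic — the paper invokes Mathematica directly rather than predicting the factorization pattern — so the argument is the same.
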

\proof
	Let $M_4$, $M_6$, $F$, and $s$ be Gadget 4, 
	Gadget 6, the binary finisher 
	gadget in Figure \ref{finisherGadget}, 
	and single-vertex binary starter gadget 
	(Figure \ref{starterGadget}), respectively.
	Note $s = [a,1,b]^\mathrm{T}$ and 
	\begin{eqnarray*}
		M_6 &=& 
		\left[\begin{array}{ccc}
			a^3 + 1 & 0 & a^2 + b \\
			a^2 + b & 0 & a + b^2 \\
			a + b^2 & 0 & b^3 + 1
		\end{array}\right].
	\end{eqnarray*}
	Using Mathematica$^{\rm TM}$, we verify 
	that the block matrices $[F M_4 s\ Fs]$, $[F M_6 s\ Fs]$, and $[F M_4 s\ F M_6 s]$ are
	all nonsingular provided that $a b \ne 1$ and $a^3 \ne b^3$, so
	the vectors $Fs$, $F M_4 s$, and $F M_6 s$ are pairwise linearly 
	independent.  If the ratio of the eigenvalues of $M$ is not a root of unity, 
	then the eigenvalues of $M$ are distinct, each eigenvalue 
	corresponds to an eigenspace of dimension 1, and at least one 
	element of $\{F M_4 s, F M_6 s, Fs\}$ is not a column eigenvector of $M$. 
	The corresponding starter gadget can be used with $M$ in a recursive construction and 
	the result follows from Lemma \ref{unaryConstruction}.
\qed

\begin{figure}[b!ht]
\centering
\subfigure[Gadget 4]{
  \includegraphics[scale=.6]{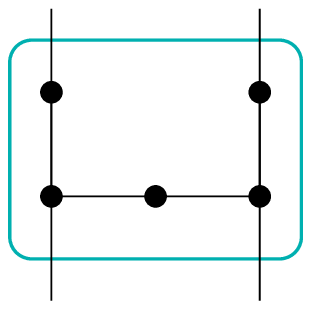}
  \label{gadget4}
}
\subfigure[Gadget 5]{
  \includegraphics[scale=.6]{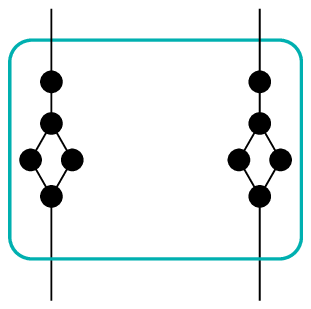}
  \label{gadget5}
}
\subfigure[Gadget 6]{
  \includegraphics[scale=.6]{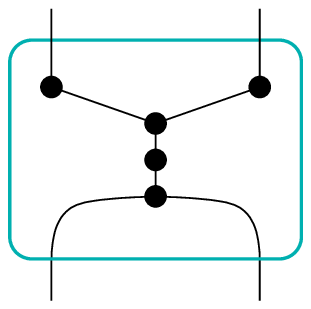} 
  \label{gadget6}
}
\subfigure[Gadget 7]{
  \includegraphics[scale=.6]{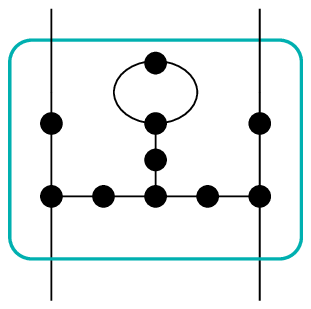}
  \label{gadget7}
}
\subfigure[Gadget 8]{
  \includegraphics[scale=.6]{figs/gadget8.ps}
  \label{gadget8}
}
\subfigure[Gadget 9]{ 
  \includegraphics[scale=.6]{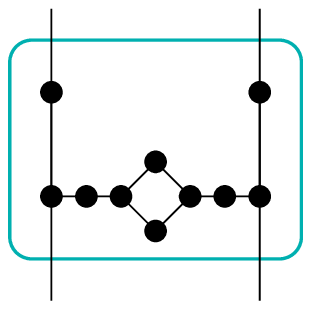} 
  \label{gadget9}
}
\caption{Binary recursive gadgets}
\label{binaryGadgets}
\end{figure}


\section{Complex Signatures} \label{complexSignatures}

Now we aim to characterize $\mathrm{Hol}(a,b)$
where $a, b \in \mathbb{C}$.  The next lemma introduces the technique
of algebraic symmetrization.
We show that over 3-regular graphs, the Holant value is expressible
as an integer polynomial $P(X, Y)$, where 
$X = ab$ and $Y = a^3 + b^3$. 
This change of variable, from $(a,b)$ to $(X, Y)$, is crucial in two ways.
First, it allows us to derive tractability results easily, drawing 
connections between problems that may appear unrelated, and the 
tractability of one implies the other.  Second, it facilitates the 
proof of hardness for those $(a,b)$ where the problem is indeed 
$\SHARPP$-hard by reducing the degree of the polynomials involved.
Once this transformation is made, 
four binary recursive gadgets easily 
cover all of the $\SHARPP$-hard problems where $X$ and $Y$ are real-valued, 
with a straightforward symbolic computation using 
{\sc CylindricalDecomposition} in Mathematica$^{\rm TM}$.
All gadget constructions in this section use
$[a,1,b]$ and $[1,0,0,1]$ signatures exclusively, and we
henceforth denote $X = ab$ and $Y = a^3 + b^3$ for the remainder of
this paper.

\begin{lemma}\label{transform}
Let $G$ be a 3-regular graph. Then there exists a polynomial 
$P(\cdot, \cdot)$ with two variables and
integer coefficients such that
for any signature grid $\Omega$ having underlying graph $G$
and every edge labeled $[a,1,b]$,
the Holant value is
$\mathrm{Holant}_\Omega = P(ab,a^3+b^3)$.
\end{lemma}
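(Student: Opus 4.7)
The plan is to expand the Holant sum and use the natural symmetry $\sigma \leftrightarrow \bar\sigma$ (swapping 0s and 1s) together with 3-regularity to collapse the sum into a polynomial in $X = ab$ and $Y = a^3 + b^3$. Since $g(0,1) = g(1,0) = 1$, the Holant reads
$$\mathrm{Holant}_\Omega = \sum_{\sigma : V \to \{0,1\}} a^{n_{00}(\sigma)}\, b^{n_{11}(\sigma)},$$
where $n_{ij}(\sigma)$ counts edges whose endpoint colors form the multiset $\{i,j\}$. A double count of edge endpoints at each color class gives $2 n_{00} + n_{01} = 3|V_0|$ and $2 n_{11} + n_{01} = 3|V_1|$, whence $2(n_{00} - n_{11}) = 3(|V_0| - |V_1|)$. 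Because $3|V| = 2|E|$ forces $|V|$ to be even, $|V_0|$ and $|V_1|$ share parity, and the crucial divisibility $3 \mid (n_{00} - n_{11})$ follows.

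Next I would pair each assignment $\sigma$ with its complement $\bar\sigma$ (distinct whenever $|V| > 0$; the empty case gives $\mathrm{Holant}_\Omega = 1$, which is trivially in $\mathbb{Z}[X,Y]$). Each such pair contributes
$$a^{n_{00}} b^{n_{11}} + a^{n_{11}} b^{n_{00}} = X^{m}\bigl(a^{3k} + b^{3k}\bigr),$$
where $m = \min(n_{00}, n_{11})$ and $3k = |n_{00} - n_{11}|$, a nonnegative integer by the previous paragraph. Summing over the $2^{|V|-1}$ pairs expresses $\mathrm{Holant}_\Omega$ as a $\mathbb{Z}_{\ge 0}$-linear combination of terms of the form $X^{m}(a^{3k} + b^{3k})$.

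To finish I would rewrite each $a^{3k} + b^{3k}$ in the desired variables. Setting $u = a^3$ and $v = b^3$, this is the power sum $u^k + v^k$, which by the Newton-type recurrence $p_k = (u+v)\, p_{k-1} - (uv)\, p_{k-2}$ with $p_0 = 2$ and $p_1 = u+v$ is an integer polynomial in $u+v = Y$ and $uv = X^3$. Substituting back places every paired contribution, and therefore $\mathrm{Holant}_\Omega$ itself, in $\mathbb{Z}[X,Y]$, yielding the required $P(X,Y)$. The argument is essentially bookkeeping; the one genuinely substantive step is the divisibility $3 \mid (n_{00} - n_{11})$, which is precisely what lets the non-symmetric monomial $a^{n_{00}} b^{n_{11}}$ be encoded by the coarser invariant $Y = a^3 + b^3$ rather than the finer $a + b$.
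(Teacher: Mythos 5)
Your proof is correct and follows the same overall strategy as the paper: pair each assignment $\sigma$ with its complement, observe that the resulting sum $a^{n_{00}}b^{n_{11}} + a^{n_{11}}b^{n_{00}}$ factors through $X^{\min(n_{00},n_{11})}(a^{3k}+b^{3k})$ because $3 \mid (n_{00}-n_{11})$, and then express the power sums $a^{3k}+b^{3k}$ in $Y$ and $X^3$ via the standard Newton recurrence. The one place where you genuinely deviate is the proof of the $3$-divisibility: the paper proves $n_{00}\equiv n_{11}\pmod 3$ by induction, starting from the all-$0$ assignment (where the edge count is a multiple of $3$) and flipping one vertex at a time, noting each flip changes $n_{00}-n_{11}$ by exactly $3$; you instead get it in one shot by double-counting edge--endpoint incidences within each color class, giving $2(n_{00}-n_{11}) = 3(|V_0|-|V_1|)$ and hence $3\mid(n_{00}-n_{11})$ since $\gcd(2,3)=1$. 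Your counting argument is cleaner and non-inductive; note, though, that the parity observation about $|V_0|$ and $|V_1|$ is superfluous for your conclusion --- the coprimality of $2$ and $3$ already does the job directly from the displayed identity.
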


\proof
	  Consider any $\{0,1\}$ vertex assignment $\sigma$ with a non-zero
	valuation. If $\sigma'$ is the complement assignment switching all
	0's and 1's in $\sigma$, then for $\sigma$ and $\sigma'$,
	we have the sum of valuations 
	$a^i b^j + a^j b^i$ for some $i$ and $j$. Here $i$ (resp. $j$)
	is the number of
	edges connecting two degree 3 vertices  both assigned 0 (resp. 1) by
	$\sigma$. We note that $a^i b^j + a^j b^i = (ab)^{{\rm min}(i,j)}
	(a^{|i-j|}+ b^{|i-j|})$.
	
	We prove $i \equiv j \pmod 3$ inductively.
	For the all-0 assignment, this is clear since every edge
	contributes a factor $a$ and the number of edges is divisible
	by 3 for a 3-regular graph. Now starting from any
	assignment $\sigma$,  if we switch the assignment on one vertex
	from 0 to 1, it is easy to verify that it changes the valuation 
	from $a^i b^j$ to $a^{i'} b^{j'}$, where $i-j = i' - j' +3$.
	As every $\{0,1\}$ assignment is obtainable from the all-0 assignment
	by a sequence of switches, the conclusion $i \equiv j \pmod 3$ follows.
	
	Now $a^i b^j + a^j b^i = (ab)^{{\rm min}(i,j)} (a^{3k}+ b^{3k})$,
	for some $k \ge 0$ and a simple induction
	$a^{3(k+1)} + b^{3(k+1)}
	=(a^{3k}+ b^{3k})(a^3 + b^3) - (ab)^3 (a^{3(k-1)} + b^{3(k-1)})$
	shows that the Holant is a polynomial $P(ab, a^3 + b^3)$
	with integer coefficients.
\qed

\begin{corollary}\label{neatLittleTrick}
        If $X=-1$ and $Y \in \{0, \pm 2\mathfrak{i}\}$, then $\mathrm{Hol}(a,b)$ is 
        in $\ptime$.
\end{corollary}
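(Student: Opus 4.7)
The plan is to use Lemma \ref{transform} to transfer tractability from a convenient representative pair. By Lemma \ref{transform}, for every 3-regular graph $G$ the Holant with signature $[a,1,b]$ equals $P_G(X,Y)$ for a fixed integer polynomial $P_G$, so the value depends on $(a,b)$ only through $X = ab$ and $Y = a^3 + b^3$. Consequently, to prove $\mathrm{Hol}(a,b) \in \ptime$ at a given $(X,Y)$ it suffices to exhibit any single pair $(a', b')$ satisfying $a'b' = X$ and $a'^3 + b'^3 = Y$ for which $\mathrm{Hol}(a', b') \in \ptime$.

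I would choose the following representatives. For $(X,Y) = (-1, 0)$, take $(a', b') = (1, -1)$, giving the real signature $[1,1,-1]$. For $(X,Y) = (-1, \mp 2\mathfrak{i})$, observe that $Y^2 - 4X^3 = 0$, so the recurrence $s_{k+1} = Y s_k - X^3 s_{k-1}$ underlying Lemma \ref{transform} has a double root $\pm \mathfrak{i}$; the cleanest choice is $a' = b' = \pm\mathfrak{i}$, yielding the signature $[\pm\mathfrak{i}, 1, \pm\mathfrak{i}]$, which is (up to a global scalar) a ``cut-type'' signature of the form $[1, z, 1]$ with $z$ a primitive fourth root of unity. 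In each case the per-edge factor depends only on whether the two endpoints carry equal labels.

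To establish $\ptime$ computability at each representative I would apply a holographic transformation that collapses the generator. For the $a' = b'$ case, the Hadamard matrix $T = \begin{pmatrix}1 & 1 \\ 1 & -1\end{pmatrix}$ gives $T\begin{pmatrix}a' & 1 \\ 1 & a'\end{pmatrix}T = \begin{pmatrix}2(a'+1) & 0 \\ 0 & 2(a'-1)\end{pmatrix}$, diagonalizing the generator into a signature $[\lambda_+, 0, \lambda_-]$ with vanishing middle entry, while simultaneously sending $[1,0,0,1]$ to a scalar multiple of the even-parity signature $[1,0,1,0]$. A vanishing middle entry in a degree-2 binary signature forces its two incident edges to carry equal labels, so after contraction the bipartite Holant reduces to a weighted sum on $G$ of the form $\lambda_+^{|E|}\sum_{S} t^{|S|}$, where $S$ ranges over partial $2$-factors of $G$ (disjoint unions of cycles covering subsets of vertices) and $t = \lambda_-/\lambda_+$ evaluates to $\mathfrak{i}$ at our parameter values. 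A parallel orthogonal diagonalization handles the $(a', b') = (1, -1)$ case.

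The main obstacle is the final step: showing that the resulting root-of-unity weighted partial-$2$-factor sum is polynomial-time computable on a general $3$-regular graph. My expectation is that the key lever is the identity $\mathrm{cut}(\sigma) \equiv |\sigma^{-1}(1)| \pmod 2$ for $3$-regular $G$, which together with MacWilliams-type duality between the cycle and cut spaces collapses the sum at these specific roots of unity to a closed-form expression in elementary invariants of $G$ (such as $|V|$, $|E|$, and the number of connected components), yielding a direct polynomial-time algorithm and completing the reduction.
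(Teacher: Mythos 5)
Your structural move---apply Lemma~\ref{transform} so that tractability at the $(X,Y)$ points in question follows from tractability at any single representative pair $(a',b')$, and then pick $(1,-1)$, $(\mathfrak{i},\mathfrak{i})$, $(-\mathfrak{i},-\mathfrak{i})$---is exactly what the paper does. Where you diverge is in how tractability of those three representatives is established: the paper simply notes that they fall into the tractable families $\mathcal{F}_1$, $\mathcal{F}_2$, $\mathcal{F}_3$ of \cite{Homomorphisms} and stops. You instead attempt a from-scratch algorithm via a Hadamard holographic transformation. The first part of that plan is fine: the Hadamard conjugation does send $[a',1,a']$ to a signature with vanishing middle entry and $[1,0,0,1]$ to (a multiple of) the even-parity signature $[1,0,1,0]$, and after contraction the Holant does become $\lambda_+^{|E|}\sum_{S}t^{|S|}$ with $S$ ranging over the cycle space (even subgraphs) of $G$ and $t=(a'-1)/(a'+1)=\pm\mathfrak{i}$.

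The gap is the final step, and you flag it yourself as an ``expectation.'' The claim that $\sum_{S}\mathfrak{i}^{|S|}$ collapses to a closed form in $|V|$, $|E|$, and the number of components alone is not justified and is in fact false in general: two graphs can have identical Betti data yet different cycle-length profiles, and $\mathfrak{i}^{|S|}$ depends on $|S|\bmod 4$, not just on the cycle space dimension (compare a $3$-cycle with a $4$-cycle: both contribute a one-dimensional cycle space, but $1+\mathfrak{i}^3=1-\mathfrak{i}$ while $1+\mathfrak{i}^4=2$). What is true is that $\sum_S \mathfrak{i}^{|S|}$ is a $\mathbb{Z}_4$-valued quadratic Gauss sum over an $\mathbb{F}_2$-subspace, and such sums \emph{are} polynomial-time computable, but via Gaussian elimination on the associated quadratic form (the affine-signature/Arf-invariant machinery underlying the $\mathcal{F}_3$ tractable class of \cite{Homomorphisms}), not a closed form in elementary graph invariants. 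The $(1,-1)$ representative has the analogous issue: you gesture at ``a parallel orthogonal diagonalization'' without carrying it out. So as written the argument does not close; the economical fix is the paper's---cite the known tractability of these three representatives---or, if you want a self-contained proof, you must actually develop the Gauss-sum evaluation rather than appeal to a (false) elementary-invariant formula.
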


\proof
	The problems $\mathrm{Hol}(1,-1)$, $\mathrm{Hol}(-\mathfrak{i},-\mathfrak{i})$, and
	$\mathrm{Hol}(\mathfrak{i},\mathfrak{i})$ are all solvable in 
	$\ptime$ (these fall within the families ${\mathcal F}_1$, ${\mathcal F}_2$, and ${\mathcal F}_3$ 
	in~\cite{Homomorphisms}); $X = -1$ for each, whereas the value of
	$Y$ for these problems is $0$, $2\mathfrak{i}$, and $-2\mathfrak{i}$ respectively.
	Since the value of any 3-regular
	signature grid is completely determined by $X$, $Y$, and 
	the polynomial $P(\cdot, \cdot)$ (which in turn depends only on the 
	underlying graph $G$), any $a$ and $b$ such that $ab = -1$ and $a^3 + b^3 \in \{0, \pm 2\mathfrak{i}\}$
	(i.e. $ab=-1$ and $a^{12}=1$) is computable in polynomial time.
\qed

We now list all of the cases where $\mathrm{Hol}(a,b)$
is computable in polynomial time.

\begin{theorem}\label{tractable}
        If any of the following four conditions is true, then
$\mathrm{Hol}(a,b)$ is solvable in $\ptime$:
        \begin{enumerate}
                \item $X=1$,
                \item $X=Y=0$,
                \item $X=-1$ and $Y \in \{0, \pm 2\mathfrak{i}\}$
                \item $4X^3 = Y^2$ and the input is restricted to planar graphs.
        \end{enumerate}
\end{theorem}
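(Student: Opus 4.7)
\emph{Approach.} By Lemma~\ref{transform} the Holant on any 3-regular input $G$ depends on the edge signature $[a,1,b]$ only through the pair $(X,Y) = (ab,\, a^3+b^3)$, since it equals $P(X,Y)$ for an integer polynomial $P$ determined by $G$. This \emph{algebraic symmetrization} is the engine of the proof: to evaluate $P(X,Y)$ we are free to replace $(a,b)$ by any convenient $(a',b')$ satisfying $a'b'=X$ and ${a'}^{3}+{b'}^{3}=Y$ and compute the Holant at the new signature. Case~3 is Corollary~\ref{neatLittleTrick}, and for Case~2 ($X=Y=0$) the only complex solution is $(a',b')=(0,0)$, giving the ``disagreement'' signature $[0,1,0]$; the resulting Holant counts proper $2$-colorings of $G$, hence equals $2$ if $G$ is bipartite and $0$ otherwise, and is evaluable in linear time.

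For Case~1 ($X=ab=1$), the edge signature $[a,1,1/a]$ is degenerate (its defining matrix has determinant $ab-1=0$), so $g(i,j) = h(i)h(j)$ with $h(0)=\sqrt{a}$ and $h(1)=1/\sqrt{a}$. Distributing the factor $h(\sigma(v))$ to each of the three half-edges at every vertex $v$ decouples the sum over $\sigma$ into a product over vertices, yielding $\mathrm{Holant} = (h(0)^3 + h(1)^3)^{|V|} = (a^{3/2}+a^{-3/2})^{|V|}$. Since $|V|$ is even for every 3-regular graph, this equals $(a^3 + a^{-3} + 2)^{|V|/2}$, evaluable in $\ptime$ by repeated squaring.

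For Case~4 ($4X^3 = Y^2$, planar $G$), the identity $4(ab)^3 - (a^3+b^3)^2 = -(a^3-b^3)^2$ shows the hypothesis is equivalent to $a^3 = b^3$. If $X=0$ we fall into Case~2; otherwise let $a' = b' = Y/(2X)$ and check $a'b' = Y^2/(4X^2) = X$ together with ${a'}^{3}+{b'}^{3} = 2Y^3/(8X^3) = Y^3/Y^2 = Y$. Hence $P(X,Y)$ equals the Holant of the symmetric signature $[a',1,a']$, which on a 3-regular graph is $\sum_\sigma (a')^{m(\sigma)}$, where $m(\sigma)$ counts monochromatic edges under the $\{0,1\}$-assignment $\sigma$ -- precisely the 2-state Ising partition function. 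This is evaluable in $\ptime$ on planar graphs via the Fisher--Kasteleyn--Temperley reduction to planar perfect matchings (equivalently, via a holographic transformation of $[a',1,a']$ and $[1,0,0,1]$ into matchgate signatures followed by the matchgate algorithm).

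The main conceptual step is Case~4: the payoff from algebraic symmetrization is that the three-line locus $\{b = \omega^{k} a : k = 0,1,2\}$, with $\omega = e^{2\pi\mathfrak{i}/3}$, collapses under $(a,b) \mapsto (X,Y)$ to a single classical ``Ising curve,'' so a single planar-matching algorithm discharges the entire case without treating each complex line separately.
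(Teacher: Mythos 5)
Your proof is correct and follows the same high-level strategy as the paper: invoke Lemma~\ref{transform} to reduce everything to the $(X,Y)$ coordinates, cite Corollary~\ref{neatLittleTrick} for Case~3, observe degeneracy for Case~1, 2-coloring for Case~2, and planar matchgates for Case~4, with your write-up simply filling in details the paper leaves terse. The one genuine twist is Case~4: the paper invokes holographic algorithms~\cite{STOC07} directly on $[a,1,b]$ with $a^3=b^3$ (and elsewhere reduces to $a=b$ via a holographic basis change $\mathrm{diag}(\omega,\omega^2)$), whereas you use Lemma~\ref{transform} itself to replace $(a,b)$ by $a'=b'=Y/(2X)$ and then land squarely on the planar Ising partition function, which is a slightly more elementary and self-contained route to the same conclusion.
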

\proof
        If $X = 1$ then the signature $[a,1,b]$ is degenerate and the Holant 
        can be computed in polynomial time.  If $X = Y = 0$ then $a = b = 0$, and a 2-coloring 
        algorithm can be employed on the edges.  If $X = -1$ and 
        $Y \in \{0, \pm 2\mathfrak{i}\}$ then we are done by Corollary \ref{neatLittleTrick}.  
        If we restrict the input to planar graphs and $4X^3 = Y^2$ (equivalently, $a^3 = b^3$), holographic
        algorithms can be applied~\cite{STOC07}.
\qed

\begin{figure}[h!tb]
\centering
\subfigure[Gadget 10]{
  \includegraphics[scale=.62]{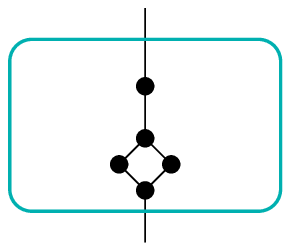}
  \label{gadget10}
}
\subfigure[Gadget 11]{
  \includegraphics[scale=.62]{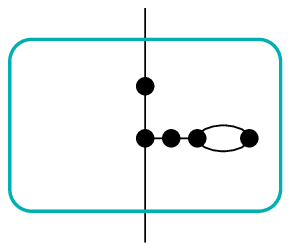}
  \label{gadget11}
}
\subfigure[Gadget 12]{
  \includegraphics[scale=.62]{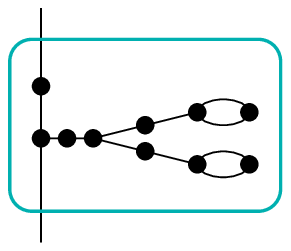}
  \label{gadget12}
}
\subfigure[Gadget 13]{
  \includegraphics[scale=.62]{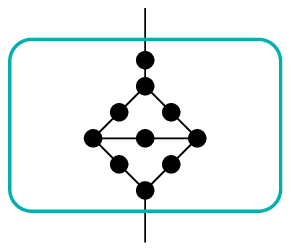}
  \label{gadget13}
}
\subfigure[Gadget 14]{
  \includegraphics[scale=.62]{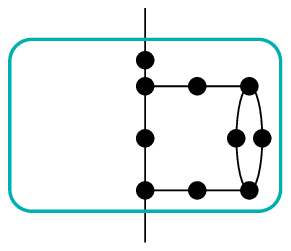}
  \label{gadget14}
}
\subfigure[Gadget 15]{
  \includegraphics[scale=.62]{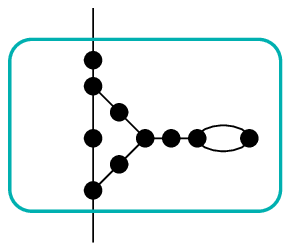}
  \label{gadget15}
}
\subfigure[Gadget 16]{
  \includegraphics[scale=.62]{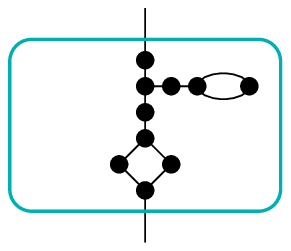}
  \label{gadget16}
}
\caption{Unary recursive gadgets}
\label{unaryGadgets}
\end{figure}

Our main task in this paper is to prove that
all remaining problems are $\SHARPP$-hard.  
The following two lemmas provide sufficient conditions 
to satisfy the eigenvalue 
requirement of the recursive constructions.

\begin{lemma} \label{relaxedUnary}
	If both roots of the complex polynomial $x^2 + Bx + C$ 
	have the same norm, then $B|C| = \overline{B}C$ and
	$B^2 \overline{C} = \overline{B}^2 C$.
	If further $B \ne 0$ and $C \ne 0$, then $\Arg(B^2) = \Arg(C)$.
\end{lemma}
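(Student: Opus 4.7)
The plan is to use Vieta's formulas to parameterize $B$ and $C$ in terms of the two roots $r_1, r_2$ of $x^2 + Bx + C$, which by hypothesis have a common modulus $\rho \ge 0$. Writing $r_j = \rho e^{i\theta_j}$ for $j=1,2$, we get $B = -(r_1+r_2) = -\rho(e^{i\theta_1} + e^{i\theta_2})$, $C = r_1 r_2 = \rho^2 e^{i(\theta_1+\theta_2)}$, and $|C| = \rho^2$. The degenerate case $\rho = 0$ forces $B = C = 0$, so all three identities hold trivially; in the rest of the argument we assume $\rho > 0$.

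First I would verify $B|C| = \overline{B}C$ by direct substitution: $B|C| = -\rho^3(e^{i\theta_1} + e^{i\theta_2})$, while $\overline{B}C = -\rho(e^{-i\theta_1} + e^{-i\theta_2})\cdot \rho^2 e^{i(\theta_1+\theta_2)} = -\rho^3(e^{i\theta_2}+e^{i\theta_1})$, which match. For the second identity $B^2\overline{C} = \overline{B}^2 C$, rather than recomputing in polar form I would square the first: $B^2|C|^2 = \overline{B}^2 C^2$, i.e.\ $B^2 C \overline{C} = \overline{B}^2 C^2$. When $C \ne 0$ one cancels $C$ to get the claim; when $C = 0$ both sides $B^2\overline{C}$ and $\overline{B}^2 C$ vanish, so the identity holds in every case.

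For the final assertion about $\Arg$, I would use the half-angle factorization $e^{i\theta_1}+e^{i\theta_2} = 2e^{i(\theta_1+\theta_2)/2}\cos\bigl((\theta_1-\theta_2)/2\bigr)$ to compute
\[
B^2 \overline{C} \;=\; \rho^2 \bigl(e^{i\theta_1}+e^{i\theta_2}\bigr)^2 \cdot \rho^2 e^{-i(\theta_1+\theta_2)} \;=\; 4\rho^4 \cos^2\bigl((\theta_1-\theta_2)/2\bigr),
\]
which is a non-negative real number. Under the additional assumptions $B \ne 0$ and $C \ne 0$, we have $\rho > 0$ and the cosine factor is nonzero, so $B^2 \overline{C}$ is strictly positive. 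Therefore $\Arg(B^2) + \Arg(\overline{C}) = 0$, and since $\Arg(\overline{C}) = -\Arg(C)$, this gives $\Arg(B^2) = \Arg(C)$.

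There is no real obstacle here: once the roots are written in polar form the statement becomes a direct computation. The only care needed is bookkeeping of the degenerate cases, namely $\rho = 0$ (where everything vanishes) and the distinction between the second identity (valid unconditionally, via squaring and case-splitting on $C = 0$) and the $\Arg$ statement (which genuinely requires both $B \ne 0$ and $C \ne 0$ so that the positive real number $B^2 \overline{C}$ is nonzero and its argument is well-defined).
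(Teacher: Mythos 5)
Your proposal is correct and follows essentially the same path as the paper: both factor $x^2+Bx+C$ with roots on a common circle of radius $\rho$ (the paper writes $ra, rb$ with $|a|=|b|=1$), verify $B|C|=\overline{B}C$ by direct substitution, and obtain $B^2\overline{C}=\overline{B}^2C$ by squaring and handling the $C=0$ case separately. The only difference is cosmetic — you finish by showing $B^2\overline{C}$ is a positive real via the half-angle identity, whereas the paper multiplies $B|C|=\overline{B}C$ by $B$ to get $B^2|C|=|B|^2C$; both give $\Arg(B^2)=\Arg(C)$ (your intermediate step $\Arg(B^2)+\Arg(\overline{C})=0$ should strictly be read modulo $2\pi$, but the conclusion is unaffected).
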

\proof
	If the roots have equal norm, then for some $a,b\in \mathbb{C}$ and nonnegative $r\in \mathbb{R}$ and we can write $x^2 + Bx + C = (x-ra)(x-rb)$, where $|a| = |b| = 1$, so $B|C| = -r(a+b)r^2 = -r(a^{-1}+b^{-1}) r^2 ab = \overline{B}C$.  Squaring both sides and dividing by $C$, we have $B^2 \overline{C} = \overline{B}^2 C$ (note this is justified since this equality still holds when $C = 0$).  Multiplying $B|C| = \overline{B}C$ by $B$ we get $B^2|C| = |B^2|C$, and if $B$ and $C$ are both nonzero then $\frac{B^2}{|B^2|} = \frac{C}{|C|}$, that is, $\Arg(B^2) = \Arg(C)$.
\qed

\begin{lemma} \label{relaxedBinary}
	If all roots of the complex polynomial $x^3 + Bx^2 + Cx + D$
	have the same norm, then $C |C|^2 = \overline{B} |B|^2 D$.
\end{lemma}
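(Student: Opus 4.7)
The plan is to parametrize the roots on a common circle, apply Vieta's formulas, and exploit the fact that the reciprocal of a unit complex number is its conjugate.

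First I would handle the degenerate case: if the common norm is $r = 0$, then all roots vanish, so $B = C = D = 0$ and the identity is trivial. Hence assume $r > 0$ and write the three roots as $r\alpha$, $r\beta$, $r\gamma$ with $|\alpha| = |\beta| = |\gamma| = 1$. Vieta's formulas then give
\begin{align*}
B &= -r(\alpha + \beta + \gamma), \\
C &= r^{2}(\alpha\beta + \beta\gamma + \gamma\alpha), \\
D &= -r^{3}\alpha\beta\gamma.
\end{align*}

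The key observation, which drives everything, is that since each of $\alpha,\beta,\gamma$ lies on the unit circle, $\bar\alpha = 1/\alpha$ and similarly for $\beta,\gamma$. Therefore
\[
\alpha\beta + \beta\gamma + \gamma\alpha \;=\; \alpha\beta\gamma\Bigl(\tfrac{1}{\gamma} + \tfrac{1}{\alpha} + \tfrac{1}{\beta}\Bigr) \;=\; \alpha\beta\gamma \cdot \overline{\alpha + \beta + \gamma}.
\]
Two consequences follow immediately. Taking absolute values gives $|\alpha\beta + \beta\gamma + \gamma\alpha| = |\alpha + \beta + \gamma|$, and hence $|C| = r|B|$, so $|C|^{2} = r^{2}|B|^{2}$. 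On the other hand, multiplying the identity by $r^{2}$ and using the Vieta expressions for $C$ and $D$ directly yields $r^{2}C = \overline{B}\,D$ (modulo sign bookkeeping: the two minus signs in $B$ and $D$ cancel).

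Combining these two facts, I obtain
\[
C\,|C|^{2} \;=\; C \cdot r^{2}|B|^{2} \;=\; (r^{2}C)\,|B|^{2} \;=\; \overline{B}\,|B|^{2}\,D,
\]
which is the desired identity. I do not anticipate a serious obstacle here: the only thing to be careful about is the degenerate case $r = 0$ (handled above) and getting the signs right when passing between $B$, $D$ and their root-symmetric counterparts; the rest is a clean two-line consequence of the fact that on the unit circle, conjugation coincides with inversion.
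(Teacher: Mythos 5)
Your proof is correct and follows essentially the same approach as the paper: parametrize the roots as $r\alpha, r\beta, r\gamma$ on a circle of radius $r$, apply Vieta's formulas, and use $\bar z = z^{-1}$ on the unit circle to obtain $\alpha\beta+\beta\gamma+\gamma\alpha = \alpha\beta\gamma\,\overline{\alpha+\beta+\gamma}$. The only cosmetic difference is that you split the conclusion into two clean sub-identities ($|C| = r|B|$ and $r^2 C = \overline{B}D$) and you treat $r=0$ explicitly, whereas the paper allows $r \ge 0$ and does the algebra in a single chain.
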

\proof
	If the roots have equal norm, then for some $a,b,c\in \mathbb{C}$ and nonnegative $r\in \mathbb{R}$ we can write $x^3 + Bx^2 + Cx + D = (x-ra)(x-rb)(x-rc)$, where $|a| = |b| = |c| = 1$, so $B = -r(a+b+c)$, $C = r^2(ab+bc+ca)$, and $D = -r^3 abc$.  Then 
	\begin{eqnarray*}
		C |C|^2 = r^2(ab+bc+ca)r^4|ab+bc+ca|^2 = r(\overline{a+b+c})r^2|a+b+c|^2 r^3 abc = \overline{B} |B|^2 D,
	\end{eqnarray*}
	where we used the fact that $|ab+bc+ca| = |ab+bc+ca|\cdot|a^{-1}b^{-1}c^{-1}| = |a^{-1}+b^{-1}+c^{-1}| = |\overline{a+b+c}| = |a+b+c|$.
\qed

Now we introduce a powerful new technique called 
{\em Eigenvalue Shifted Pairs}.
\begin{definition}
        A pair of nonsingular square matrices $M$ and $M'$ is called an 
        {\em Eigenvalue Shifted Pair (ESP)} if $M' = M + \delta I$ for some non-zero 
        $\delta \in \mathbb{C}$, and $M$ has distinct eigenvalues.
\end{definition}
Clearly for such a pair, $M'$ also has distinct eigenvalues.
The recurrence matrices of Gadgets 10 and 11
(Figure \ref{unaryGadgets})
differ only by $ab-1$ along the diagonal, and
form an Eigenvalue Shifted Pair for nearly all $a,b \in \mathbb{C}$.
We will make significant use of 
such Eigenvalue Shifted Pairs, but first we state a technical lemma.

\begin{lemma}\label{eigenvalueShift}
	Suppose $\alpha, \beta, \delta \in \mathbb{C}$, $|\alpha| = |\beta|$,
	$\alpha \ne \beta$, $\delta \ne 0$, and 
	$|\alpha+\delta| = |\beta+\delta|$.  Then there exists 
	$r,s \in \mathbb{R}$ such that $r\delta = \alpha + \beta$ and 
	$s\delta^2 = \alpha \beta$.
\end{lemma}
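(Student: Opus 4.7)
The idea is purely geometric. The two hypotheses $|\alpha|=|\beta|$ and $|\alpha+\delta|=|\beta+\delta|$ say that both $\alpha$ and $\beta$ are equidistant from the points $0$ and $-\delta$ in the complex plane, so both must lie on the perpendicular bisector $L$ of the segment joining $0$ and $-\delta$. Expanding $|z|^2 = |z+\delta|^2$ gives $\mathrm{Re}(z\overline{\delta}) = -|\delta|^2/2$, so $L$ passes through $-\delta/2$ and is perpendicular to $\delta$. Since multiplication by $i$ is a rotation by a right angle, a convenient parametrization is
\[
L = \bigl\{\, -\tfrac{\delta}{2} + i t \delta : t \in \mathbb{R} \,\bigr\}.
\]

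Thus I can write $\alpha = -\delta/2 + i t_\alpha \delta$ and $\beta = -\delta/2 + i t_\beta \delta$ for some real $t_\alpha, t_\beta$. Next I would use the remaining hypothesis. Computing
\[
|\alpha|^2 = |\delta|^2\bigl(\tfrac{1}{4} + t_\alpha^2\bigr), \qquad |\beta|^2 = |\delta|^2\bigl(\tfrac{1}{4} + t_\beta^2\bigr),
\]
the equality $|\alpha|=|\beta|$ forces $t_\alpha^2 = t_\beta^2$, so $t_\alpha = \pm t_\beta$. The case $t_\alpha = t_\beta$ would yield $\alpha = \beta$, contradicting hypothesis, so we must have $t_\beta = -t_\alpha$.

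With this resolved, the conclusion drops out immediately by direct computation:
\[
\alpha + \beta = -\delta + i(t_\alpha + t_\beta)\delta = -\delta,
\]
so $r = -1 \in \mathbb{R}$, and
\[
\alpha\beta = \bigl(-\tfrac{\delta}{2} + it_\alpha \delta\bigr)\bigl(-\tfrac{\delta}{2} - it_\alpha \delta\bigr) = \bigl(\tfrac{1}{4} + t_\alpha^2\bigr)\delta^2,
\]
so $s = \tfrac{1}{4} + t_\alpha^2 \in \mathbb{R}$. There is no real obstacle here; the only subtlety is noticing early that the two equidistance conditions pin $\alpha$ and $\beta$ to a single line, which immediately reduces a question about complex numbers to one about a real parameter $t$, after which the conclusion is algebraic bookkeeping. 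The role of the $\alpha \ne \beta$ hypothesis is precisely to rule out the degenerate sign choice $t_\alpha = t_\beta$.
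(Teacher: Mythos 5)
Your geometric reading of the hypotheses is reversed, and this breaks the proof at the first step. The conditions $|\alpha|=|\beta|$ and $|\alpha+\delta|=|\beta+\delta|$ say that $0$ is equidistant from $\alpha$ and $\beta$, and that $-\delta$ is equidistant from $\alpha$ and $\beta$. In other words, $0$ and $-\delta$ both lie on the perpendicular bisector of the segment $[\alpha,\beta]$ --- not the other way around. They do \emph{not} say that $\alpha$ (or $\beta$) is equidistant from $0$ and $-\delta$, which would be $|\alpha|=|\alpha+\delta|$, a condition that is nowhere in the hypotheses. Consequently the parametrization $\alpha=-\delta/2+it_\alpha\delta$, $\beta=-\delta/2+it_\beta\delta$ is unjustified, and the derived identity $\alpha+\beta=-\delta$ is simply false in general: take $\delta=1$, $\alpha=2+i$, $\beta=2-i$. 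Then $|\alpha|=|\beta|=\sqrt5$ and $|\alpha+1|=|\beta+1|=\sqrt{10}$, so all hypotheses hold, yet $\alpha+\beta=4\ne-1$ and $\mathrm{Re}(\alpha)=2\ne-1/2$.

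The correct dualization of your picture does work, and is close in spirit to what you wrote. Since $0$ and $-\delta$ are distinct points on the perpendicular bisector $\ell$ of $[\alpha,\beta]$, that bisector is the whole line $\mathbb{R}\delta$. Two consequences: the midpoint $(\alpha+\beta)/2$ lies on $\ell$, giving $\alpha+\beta=r\delta$ for some $r\in\mathbb{R}$; and the chord direction $\alpha-\beta$ is perpendicular to $\ell$, hence $\alpha-\beta=iu\delta$ for some $u\in\mathbb{R}$. Then
\[
4\alpha\beta=(\alpha+\beta)^2-(\alpha-\beta)^2=r^2\delta^2-(iu\delta)^2=(r^2+u^2)\delta^2,
\]
so $s=(r^2+u^2)/4\in\mathbb{R}$ works. (The paper instead rotates so that $\alpha=\overline\beta$, which forces $\alpha+\beta,\alpha\beta\in\mathbb{R}$, and then shows the second hypothesis forces $\delta\in\mathbb{R}$; that argument and the corrected geometric one are essentially equivalent, but the paper's form keeps the $\alpha\ne\beta$ hypothesis visibly doing work at the final cancellation step.) As written, though, your proof has a genuine gap and would not convince a referee; it proves a stronger conclusion ($\alpha+\beta=-\delta$ exactly) from a reading of the hypotheses they do not support.
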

\proof
	After a rotation in the complex plane, we can assume 
	$\alpha = \overline{\beta}$, and then since $\alpha + \beta, 
	\alpha \beta \in \mathbb{R}$ we just need to prove $\delta \in 
	\mathbb{R}$.  Then $(\alpha + \delta)\overline{(\alpha + \delta)} 
	= |\alpha + \delta|^2 = |\beta + \delta|^2 = (\beta + \delta)
	\overline{(\beta + \delta)} = (\overline{\alpha} + \delta)
	(\alpha + \overline{\delta})$ and we distribute to get $\alpha 
	\overline{\alpha} + \delta \overline{\delta} + \alpha 
	\overline{\delta} + \overline{\alpha} \delta = \alpha 
	\overline{\alpha} + \delta \overline{\delta} + 
	\overline{\alpha}\overline{\delta} + \alpha \delta$.  
	Canceling repeated terms and factoring, we have $(\overline{\alpha} 
	- \alpha)(\overline{\delta} - \delta) = 0$, and since $\alpha \ne 
	\beta = \overline{\alpha}$ we know $\overline{\delta} = \delta$ 
	therefore $\delta \in \mathbb{R}$.
\qed

\begin{corollary} \label{ESP}
        Let $M$ and $M'$ be an Eigenvalue Shifted Pair of $2$ by $2$ 
        matrices.  If both $M$ and $M'$ have eigenvalues of equal norm, then 
        there exists $r,s \in \mathbb{R}$ such that $\mathrm{tr}(M) = 
        r \delta$ (possibly $0$) and $\det(M) = s \delta^2$.
\end{corollary}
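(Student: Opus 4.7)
The plan is to reduce Corollary~\ref{ESP} directly to the previous Lemma~\ref{eigenvalueShift}. Since $M$ is a $2 \times 2$ matrix with distinct eigenvalues, let $\alpha, \beta \in \mathbb{C}$ denote them, so that $\mathrm{tr}(M) = \alpha + \beta$ and $\det(M) = \alpha\beta$. Because $M' = M + \delta I$, the eigenvalues of $M'$ are exactly $\alpha + \delta$ and $\beta + \delta$ (this is immediate from, say, the Jordan form, and here $M$ is even diagonalizable since its eigenvalues are distinct).

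Next I would verify the hypotheses of Lemma~\ref{eigenvalueShift}. We are given $\delta \ne 0$ by the definition of an Eigenvalue Shifted Pair, and $\alpha \ne \beta$ by the distinctness clause of that same definition. The assumption that both $M$ and $M'$ have eigenvalues of equal norm translates into $|\alpha| = |\beta|$ and $|\alpha + \delta| = |\beta + \delta|$, which are precisely the remaining hypotheses of Lemma~\ref{eigenvalueShift}.

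Applying Lemma~\ref{eigenvalueShift} then yields $r, s \in \mathbb{R}$ with $r\delta = \alpha + \beta = \mathrm{tr}(M)$ and $s\delta^2 = \alpha\beta = \det(M)$, as required. There is really no obstacle here: once the bookkeeping between ``eigenvalues of $M$ and $M'$'' and ``$\alpha, \beta$ and $\alpha+\delta, \beta+\delta$'' is set up, the corollary is just Lemma~\ref{eigenvalueShift} restated in the language of traces and determinants. The only subtlety worth flagging is the possibility that $\mathrm{tr}(M) = 0$, which corresponds to $r = 0$; the statement allows this explicitly, and the proof handles it without change because Lemma~\ref{eigenvalueShift} places no nonvanishing requirement on $r$.
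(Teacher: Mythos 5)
Your proposal is correct and follows essentially the same route as the paper's own proof: identify the eigenvalues $\alpha, \beta$ of $M$, observe that $M'$ has eigenvalues $\alpha+\delta, \beta+\delta$, and invoke Lemma~\ref{eigenvalueShift}. You are slightly more explicit than the paper about checking the hypotheses $\alpha \ne \beta$ and $\delta \ne 0$ against the definition of an Eigenvalue Shifted Pair, but the substance is identical.
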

\proof
        Let $\alpha$ and $\beta$ be the eigenvalues of $M$, so
        $\alpha+\delta$ and $\beta+\delta$ are the eigenvalues of $M'$.  
        Suppose that $|\alpha| = |\beta|$ and $|\alpha + \delta| = 
        |\beta + \delta|$.  Then by Lemma \ref{eigenvalueShift}, there 
        exists $r,s \in \mathbb{R}$ such that $\mathrm{tr}(M) = \alpha 
        + \beta = r \delta$ and $\det(M) = \alpha \beta = s \delta^2$.
\qed

We now apply an ESP to prove that most settings of 
$\mathrm{Hol}(a,b)$ are $\SHARPP$-hard.
\begin{lemma} \label{condition0}
        Suppose $X \ne \pm 1$, $X^2+X+Y \ne 0$, 
        and $4(X-1)^2 (X+1) \ne (Y+2)^2$.  Then either unary Gadget 10 or unary Gadget 
        11 has nonzero eigenvalues with distinct norm, unless $X$ and $Y$ 
        are both real numbers.
\end{lemma}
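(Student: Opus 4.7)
The plan is to prove the contrapositive: assume that \emph{both} Gadgets 10 and 11 have eigenvalues of equal norm, and derive that $X,Y\in\mathbb{R}$ (using the three excluded conditions to avoid degenerate cases). The starting point is the observation, already noted in the setup of ESP, that the recurrence matrices $M_{10}$ and $M_{11}$ are related by $M_{11} = M_{10} + (X-1)\,I$, so they form an Eigenvalue Shifted Pair with shift $\delta = X-1$. The hypothesis $X\neq 1$ guarantees $\delta\neq 0$, and $X\neq -1$ (together with the other two exclusions) will be needed to rule out the cases where $M_{10}$ is singular or has repeated eigenvalues.

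First I would write down $M_{10}$ and $M_{11}$ explicitly from Figure~\ref{unaryGadgets}, verify the identity $M_{11} = M_{10} + (X-1)I$, and confirm that under the stated exclusions both matrices are nonsingular with distinct eigenvalues (so Corollary~\ref{ESP} is applicable). The next step is algebraic symmetrization: since the entries of $M_{10}$ are polynomials in $a,b$, but the trace and determinant are invariants of the characteristic polynomial, they can be re-expressed as polynomials $T(X,Y) = \mathrm{tr}(M_{10})$ and $D(X,Y) = \det(M_{10})$ in the variables $X=ab$ and $Y=a^3+b^3$, in the same spirit as Lemma~\ref{transform}. (One expects $T$ and $D$ to be low-degree polynomials in $X,Y$.)

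Now assume both gadgets fail, i.e.\ both $M_{10}$ and $M_{11}$ have equal-norm eigenvalues. By Corollary~\ref{ESP} applied with $\delta = X-1$, there exist $r,s\in\mathbb{R}$ with
\begin{equation*}
T(X,Y) = r(X-1), \qquad D(X,Y) = s(X-1)^2.
\end{equation*}
Equivalently, $T(X,Y)/(X-1)$ and $D(X,Y)/(X-1)^2$ are both real. Each of these is a polynomial in $X$ and $Y$ (after the expected cancellation of the factor $X-1$), so the real-ness conditions become two simultaneous algebraic constraints on the complex pair $(X,Y)$. I would then show that these two constraints are ``linearly independent enough'' in $(X,Y)$ that their only complex solutions are real pairs. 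The remaining excluded conditions in the hypothesis should correspond exactly to the loci where one of $T/(X-1)$, $D/(X-1)^2$ vanishes identically (giving a trivial realness condition): specifically, $X^2+X+Y=0$ is the expected vanishing locus of one of $T/(X-1)$, and $4(X-1)^2(X+1)=(Y+2)^2$ for the other (these guesses should be verified by direct calculation but the forms match the shapes one gets from symmetrization of $2\times 2$ gadget matrices).

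The main obstacle is the algebraic step of showing that the two real-ness conditions jointly force $X,Y\in\mathbb{R}$. This is a finite symbolic computation — essentially decomposing the variety cut out by $\mathrm{Im}(T/(X-1))=0$ and $\mathrm{Im}(D/(X-1)^2)=0$ in $\mathbb{C}^2$ — but one must be careful to identify the precisely three exceptional subvarieties as those listed in the hypothesis. I would carry this out by writing $X = X_1 + iX_2$ and $Y = Y_1 + iY_2$, taking imaginary parts of both conditions, and using \textsc{CylindricalDecomposition} (as referenced elsewhere in the paper) to certify that any complex solution with $X_2\neq 0$ or $Y_2\neq 0$ lies on one of the three excluded varieties, yielding the contrapositive and hence the lemma.
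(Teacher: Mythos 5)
Your overall strategy is exactly the paper's: recognize $M_{11}=M_{10}+(X-1)I$ as an Eigenvalue Shifted Pair with shift $\delta=X-1\ne 0$, argue by contrapositive, and invoke Corollary~\ref{ESP} to get $\mathrm{tr}(M_{10})=r(X-1)$, $\det(M_{10})=s(X-1)^2$ with $r,s\in\mathbb{R}$. Where you go astray is in the parts you leave speculative. First, the three excluded conditions are not, as you guess, ``loci where $T/(X-1)$ or $D/(X-1)^2$ vanishes identically, giving a trivial realness condition.'' Rather, they are exactly the hypotheses needed for Corollary~\ref{ESP} to apply: a direct computation gives $\det(M_{10})=(X-1)^2(X+1)$, $\det(M_{11})=(X-1)(X^2+X+Y)$, and $\mathrm{tr}(M_{10})^2-4\det(M_{10})=(Y+2)^2-4(X-1)^2(X+1)$, so $X\ne\pm 1$ and $X^2+X+Y\ne 0$ guarantee both matrices are nonsingular (and hence have nonzero eigenvalues), and $4(X-1)^2(X+1)\ne(Y+2)^2$ guarantees $M_{10}$ has distinct eigenvalues.

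Second, the final step you propose -- imaginary-part extraction and \textsc{CylindricalDecomposition} -- is unnecessary and somewhat misplaced: once you have $\mathrm{tr}(M_{10})=Y+2$ and $\det(M_{10})=(X-1)^2(X+1)$ explicitly, Corollary~\ref{ESP} gives $(X-1)^2(X+1)=s(X-1)^2$, i.e.\ $X=s-1\in\mathbb{R}$, and then $Y+2=r(X-1)$ with $X,r\in\mathbb{R}$ forces $Y\in\mathbb{R}$ immediately. The whole conclusion falls out of two lines of arithmetic, not a semi-algebraic decomposition. So: right framework, but you need to actually compute the invariants of $M_{10}$ and $M_{11}$ before you can see both the correct role of the exceptional conditions and how quickly the realness of $X$ and $Y$ follows.
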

\proof
        Gadgets 10 and 11 have
                $M_{10} = \left[\begin{array}{cc}
                        a^3 + 1& a + b^2 \\
                        a^2 + b & b^3 + 1
                       	\end{array}\right]$ and
                $M_{11} = \left[\begin{array}{cc}
                        a^3 + a b & a + b^2 \\
                        a^2 + b & a b + b^3
                        \end{array}\right]$
	as their recurrence matrices,
        so $M_{11} = M_{10} + (X - 1)I$, and the eigenvalue shift is nonzero.  
        Checking the determinants, $\det(M_{10}) = (X-1)^2 (X+1) \ne 0$ and 
        $\det(M_{11}) = (X-1)(X^2+X+Y) \ne 0$.  Also, 
        $\mathrm{tr}(M_{10})^2 - 4 \det(M_{10}) = (Y+2)^2 - 4 (X-1)^2 (X+1) \ne 0$,
        so the eigenvalues of $M_{10}$ are distinct.  Therefore by Corollary 
        \ref{ESP}, either $M_{10}$ or $M_{11}$ has nonzero eigenvalues of distinct 
        norm unless $\mathrm{tr}(M_{10}) = r (X-1)$ and $\det(M_{10}) = s (X-1)^2$ 
        for some $r, s \in \mathbb{R}$.  Then we would have $(X-1)^2 (X+1) = 
        s (X-1)^2$ so $X = s - 1 \in \mathbb{R}$ and $Y+2 = r (X-1)$ so $Y = 
        r(X-1)-2 \in \mathbb{R}$.
\qed
Now we will deal with the following exceptional cases from Lemma 
\ref{condition0} ($X =1$ is tractable by Theorem~\ref{tractable}).
\begin{enumerate}
        \item[0.] $X \in \mathbb{R}$ and $Y \in \mathbb{R}$
        \item[1.] $X^2 + X + Y = 0$
        \item[2.] $X = -1$
        \item[3.] $4(X-1)^2 (X+1) = (Y + 2)^2$
\end{enumerate}

The case where $X$ and $Y$ are both real 
is dealt with
using the tools developed in Section~\ref{interpolation},
 and some symbolic
computation.
This
includes the case where $a$ and $b$ are both
real as a subcase.
When $a$ and $b$ are both real,
a dichotomy theorem for the complexity of  $\mathrm{Hol}(a,b)$
has been proved in~\cite{TAMC} with a significant effort.
With the new tools developed, we offer a simpler proof.
This also covers some cases where $a$ or $b$ is complex.  Working
with real-valued $X$ and $Y$ is a significant advantage, since the failure 
condition given by Lemma \ref{relaxedBinary} is simplified by the 
disappearance of norms and conjugates.  This brings the problem of proving 
$\SHARPP$-hardness within reach of symbolic 
computation via cylindrical decomposition.  We apply Theorem \ref{mainBinary}
to Gadgets 4, 7, 8, and 9 (Figure \ref{binaryGadgets}) together with a
starter gadget (Figure \ref{starterGadget}) to prove that these problems are $\SHARPP$-hard.  
Conditions 1 and 2 of Theorem \ref{mainBinary} are encoded directly into 
a query for {\sc CylindricalDecomposition} in Mathematica$^{\rm TM}$, but 
first we give a lemma to show how to encode condition 2 of Lemma \ref{mainBinary}.

\begin{lemma} \label{starterCriterion}
	Suppose $M \in \mathbb{C}^{n \times n}$ and $s \in \mathbb{C}^{n \times 1}$.  If $\det([s,Ms,M^2 s,\dots,M^{n-1}s]) \ne 0$ then $s$ is not orthogonal to any row eigenvector of $M$.
\end{lemma}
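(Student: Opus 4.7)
The plan is to prove the contrapositive: if $s$ is orthogonal to some row eigenvector of $M$, then the Krylov-style matrix $K = [s,Ms,M^2 s,\dots,M^{n-1}s]$ is singular.

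Suppose $v^{\mathrm{T}} \in \mathbb{C}^{1 \times n}$ is a nonzero row eigenvector of $M$ with eigenvalue $\lambda$, so $v^{\mathrm{T}} M = \lambda v^{\mathrm{T}}$, and assume $v^{\mathrm{T}} s = 0$. A straightforward induction on $k$ gives $v^{\mathrm{T}} M^k = \lambda^k v^{\mathrm{T}}$, hence $v^{\mathrm{T}} M^k s = \lambda^k (v^{\mathrm{T}} s) = 0$ for every $k \geq 0$. Therefore $v^{\mathrm{T}} K = 0$, which exhibits a nonzero left null vector of $K$, so $K$ is singular and $\det(K) = 0$. This is precisely the contrapositive of the statement to be proved.

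There is no real obstacle here: the argument is a one-step Krylov/cyclic subspace computation, relying only on the defining property of a row eigenvector and bilinearity. The only subtlety worth noting is that $v$ is nonzero by the definition of eigenvector, which is what makes the left annihilation relation $v^{\mathrm{T}} K = 0$ actually force singularity rather than being vacuous. Once the proof is written in one direction, the conclusion stated in the lemma follows immediately by contraposition.
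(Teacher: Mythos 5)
Your proof is correct and matches the paper's argument exactly: both prove the contrapositive by observing that a row eigenvector $v$ annihilating $s$ satisfies $v M^k s = \lambda^k v s = 0$ for all $k$, hence $v$ is a nonzero left null vector of the Krylov matrix. No further comment is needed.
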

\begin{proof}
	Suppose $s$ is orthogonal to a row eigenvector $v$ of $M$ with eigenvalue $\lambda$.  Then $v [s, Ms, ..., M^{n-1}s] = 0$, since $v M^i s = \lambda^i v s = 0$.  Since $v \not = 0$ this is a contradiction.
\end{proof}

\begin{theorem}\label{realCase}
	Suppose $a, b \in \mathbb{C}$, $X, Y \in \mathbb{R}$, 
	$X \ne 1$, $4X^3 \ne Y^2$, and it is not the case that both $X = -1$ and $Y = 0$.
	Then the problem $\mathrm{Hol}(a,b)$ is $\SHARPP$-hard.
\end{theorem}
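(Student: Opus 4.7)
The plan is to apply Theorem \ref{mainBinary} using the four binary recursive gadgets of Figures \ref{gadget4}, \ref{gadget7}, \ref{gadget8}, \ref{gadget9}, each paired with the single-vertex binary starter gadget of Figure \ref{starterGadget} (with signature $s=[a,1,b]^{\mathrm{T}}$). For every $(a,b)$ satisfying the hypotheses, I aim to show that at least one of these four (gadget, starter) pairs simultaneously satisfies conditions 1 and 2 of Theorem \ref{mainBinary}, which then delivers $\SHARPP$-hardness of $\mathrm{Hol}(a,b)$.

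The first step is to write down the characteristic polynomial $p_i(x)=x^3+B_i x^2+C_i x+D_i$ of each recurrence matrix $M_i$, $i\in\{4,7,8,9\}$. A direct computation (and an appeal to the same symmetrization argument as in Lemma~\ref{transform}, since each $M_i$ is built from $[a,1,b]$ and $[1,0,0,1]$ with the same $\mathbb{Z}_3$-symmetric structure) shows that $B_i,C_i,D_i$ are integer polynomials in $X=ab$ and $Y=a^3+b^3$; for instance, $B_4=-(X+Y+1)$, $C_4=(X-1)(X^2+X+Y)$, $D_4=-X(X-1)^3$. Under the hypothesis $X,Y\in\mathbb{R}$ all $B_i,C_i,D_i$ are therefore real.

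Next I encode the failure of Theorem~\ref{mainBinary} as a semi-algebraic condition. Condition~1 fails for $M_i$ precisely when $M_i$ is singular or every pair of eigenvalues has ratio a root of unity; in the latter case all three eigenvalues share a common modulus, so Lemma~\ref{relaxedBinary} gives $C_i|C_i|^2=\overline{B_i}|B_i|^2 D_i$, which, under reality, collapses to the polynomial identity $C_i^3=B_i^3 D_i$. Condition~2 fails for the starter $s$ and matrix $M_i$ only if $\det\bigl([s,\,M_i s,\,M_i^2 s]\bigr)=0$, by Lemma~\ref{starterCriterion}; this determinant is again a polynomial in $X$ and $Y$. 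Hence the set of ``bad'' $(X,Y)$ for gadget $i$ is contained in
\[
\mathcal{B}_i=\{D_i=0\}\cup\{C_i^3=B_i^3 D_i\}\cup\{\det([s,M_is,M_i^2s])=0\}.
\]

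The proof then reduces to a single semi-algebraic question over $\mathbb{R}^2$: show that
\[
\bigcap_{i\in\{4,7,8,9\}}\mathcal{B}_i\;\cap\;\{X\neq 1\}\cap\{4X^3\neq Y^2\}\cap\{(X,Y)\neq(-1,0)\}=\emptyset.
\]
This is fed directly to {\sc CylindricalDecomposition} in Mathematica$^{\mathrm{TM}}$, which decides it in a finite number of steps. The main obstacle, and the reason for having four candidate gadgets rather than one, is ensuring joint coverage: any individual gadget has a nontrivial failure curve $C_i^3=B_i^3D_i$ in the $(X,Y)$-plane, and without the algebraic symmetrization of Lemma~\ref{transform} the degrees of these polynomials — in the original variables $a,b$ — would be far too large for the cylindrical decomposition to terminate in practice. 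Working in the reduced variables $(X,Y)\in\mathbb{R}^2$ makes the computation tractable, and the four gadgets are chosen precisely so that their failure curves meet only on the already-tractable locus $\{X=1\}\cup\{4X^3=Y^2\}\cup\{(-1,0)\}$.
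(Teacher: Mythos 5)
Your approach is essentially the same as the paper's: apply Theorem~\ref{mainBinary} with Gadgets 4, 7, 8, 9, the single-vertex starter, encode condition~1 via Lemma~\ref{relaxedBinary} (which under $X,Y\in\mathbb{R}$ collapses the norm/conjugate condition to $C_i^3=B_i^3D_i$), encode condition~2 via Lemma~\ref{starterCriterion}, and discharge the resulting real semi-algebraic inclusion with {\sc CylindricalDecomposition}. That is exactly the paper's plan, and the reduction to $(X,Y)$-coordinates via Lemma~\ref{transform} is indeed the key that makes the computation feasible.

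There is, however, one concrete gap: your claimed inclusion
\[
\bigcap_{i\in\{4,7,8,9\}}\mathcal{B}_i\;\cap\;\{X\neq 1\}\cap\{4X^3\neq Y^2\}\cap\{(X,Y)\neq(-1,0)\}=\emptyset
\]
is false. The point $(X,Y)=(0,-1)$ satisfies all three side constraints ($0\neq 1$, $0\neq 1$, $(0,-1)\neq(-1,0)$) yet lies in every $\mathcal{B}_i$: one checks $D_4=-X(X-1)^3=0$, $D_8=-2(X-1)^6X(X+1)=0$, and the nontrivial factors of $D_7$ and $D_9$ also vanish there (the factor $X^4-X^3+X^2Y+3X^2+2XY+X+Y^2+Y$ of $D_7$ and the factor $X+Y+1$ of $D_9$ both evaluate to $0$ at $(0,-1)$). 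So all four recurrence matrices are singular at that point, condition~1 of Theorem~\ref{mainBinary} fails for every gadget, and the decomposition query, as you have phrased it, would return a nonempty bad set rather than $\emptyset$. The paper notices exactly this and patches it by a separate argument: since the Holant depends only on $(X,Y)$ by Lemma~\ref{transform}, any $(a,b)$ with $(X,Y)=(0,-1)$ is equivalent to $\mathrm{Hol}(0,-1)$, which is already known to be $\SHARPP$-hard from \cite{TAMC,K09}. You need to either add the disjunct $(X=0\wedge Y=-1)$ to your query and cite that prior hardness result, or find an additional gadget that handles $(0,-1)$; as stated, your final step would not verify.
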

\proof
	We will use binary recursive Gadgets 4, 7, 8, and 9 together with the single-vertex starter gadget given in Figure \ref{starterGadget} (denote the respective matrices by $M_4$, $M_7$, $M_8$, $M_{9}$, and $s$).  Calculating the recurrence matrices of these gadgets, we get
	\begin{eqnarray*}
		M_7 &=&
		\left[\begin{array}{ccc}
			a^6 + a^4 b + a^3 + a^2 b^2 & 2 a^4 + 4 a^2 b + 2 a b^3 & a^2 + a b^2 + b^4 + b \\
			a^5 + a^3 b + a^2 + a b^2 & a^4 b + a^3 + 2 a^2 b^2 + a b^4 + 2 a b + b^3 & a^2 b + a b^3 + b^5 + b^2 \\
			a^4 + a^2 b + a + b^2 & 2 a^3 b + 4 a b^2 + 2 b^4 & a^2 b^2 + a b^4 + b^6 + b^3
		\end{array}\right], \\
		M_8 &=& 
		\left[\begin{array}{ccc}
			a^6 + 2 a^3 + 1 & 2 a^4 + 4 a^2 b + 2 b^2 & a^2 + 2 a b^2 + b^4 \\
			a^5 + a^3 b + a^2 + b & 2 a^3 + 2 a^2 b^2 + 2 a b + 2 b^3 & a b^3 + a + b^5 + b^2 \\
			a^4 + 2 a^2 b + b^2 & 2 a^2 + 4 a b^2 + 2 b^4 & b^6 + 2 b^3 + 1
		\end{array}\right], \\
		M_9 &=& 
		\left[\begin{array}{ccc}
			a^6 + 2 a^3 + a^2 b^2 & 2 a^4 + 2 a^2 b + 2 a b^3 + 2 a & a^2 + b^4 + 2 b \\
			a^5 + 2 a^2 + a b^2 & a^4 b + a^3 + a^2 b^2 + a b^4 + 2 a b + b^3 + 1 & a^2 b + b^5 + 2 b^2 \\
			a^4 + 2 a + b^2 & 2 a^3 b + 2 a b^2 + 2 b^4 + 2 b & a^2 b^2 + b^6 + 2 b^3
		\end{array}\right].
	\end{eqnarray*}
	Calculating the characteristic polynomials $x^3 + B x^2 + C x + D$ of Gadgets 4, 7, 8, and 9, we get
	\begin{eqnarray*}
		B_4 &=& -(X + Y + 1) \\
		C_4 &=& (X^2 + X + Y)(X - 1) \\
		D_4 &=& -X(X-1)^3 \\
		B_{7} &=& -(- 2 X^3 + 4 X^2 + 2 X Y + 2 X + Y^2 + 2 Y) \\
		C_{7} &=& (X - 1) \cdot \\
		      && (X^5 - 4 X^4 - X^3 Y + 6 X^3 + 7 X^2 Y + 4 X^2 + 4 X Y^2 + 5 X Y + X + Y^3 + 2 Y^2 + Y) \\
		D_{7} &=& -(X - 1)^3 (2 X + Y) (X^4 - X^3 + X^2 Y + 3 X^2 + 2 X Y + X + Y^2 + Y ) \\
		B_{8} &=& -(- 2 X^3 + 2 X^2 + 2 X + Y^2 + 4 Y + 2) \\
		C_{8} &=& (X - 1)^2 (X^4 - 2 X^3 + 2 X^2 + 4 X Y + 6 X + 2 Y^2 + 4 Y + 1) \\
		D_{8} &=& -2 (X - 1)^6 X (X + 1) \\
		B_{9} &=& -(3 X^2 + X Y + 2 X + Y^2 + 3 Y + 1) \\
		C_{9} &=& (X - 1) \cdot \\
		      && (X^5 - 3 X^4 - 2 X^3 Y - X^3 + 4 X^2 Y + 7 X^2 + 2 X Y^2 + 6 X Y + 4 X + Y^3 + 4 Y^2 + 4 Y) \\
		D_{9} &=& -(X - 1)^3 (X + Y + 1) (X^4 - 2 X^3 + X^2 + 2 X Y + 4 X + Y^2 + 2 Y)
	\end{eqnarray*}
	Suppose $X \ne 1$, $4X^3 \ne Y^2$ (equivalently, $a^3 \ne b^3$), and it is not the case that both $X = -1$ and $Y = 0$.  For any real-valued setting of $X$ and $Y$ compatible with these constraints, we will see that at least one of these four binary recursive gadgets satisfies the requirements of Theorem \ref{mainBinary} (the only exception is $(X,Y) = (0,-1)$, but by Lemma \ref{transform} any such problem is equivalent to $\mathrm{Hol}(0, -1)$ which is known to be $\SHARPP$-hard \cite{TAMC, K09}).  To verify that Gadget $j$ satisfies condition 1 of Theorem \ref{mainBinary}, we apply Lemma \ref{relaxedBinary} and check that $D_{j} (B_{j}^3 D_{j} - C_{j}^3) \ne 0$ (note that the norm and conjugate disappear from the test since we are only considering real valued $X$ and $Y$).  By Lemma \ref{starterCriterion}, Gadget 4 satisfies condition 2 because $\det[s, M_4 s, M_4^2 s] = (X-1)^4 (b^3 - a^3) \ne 0$.  However, $\det[s, M_7 s, M_7^2 s] = (X-1)^5 (b^3 - a^3)(X^2 + X + Y)(X + Y + 1)$, $\det[s, M_{8} s, M_{8}^2 s] = (X-1)^5 (b^3 - a^3)(X^2 Y + 4X^2 + 2XY + Y^2 + Y)$, and $\det[s, M_{9} s, M_{9}^2 s] = (X - 1)^6 (b^3 - a^3) (X + 1) (Y + 2)$, so these are zero for some settings of $X$ and $Y$.  We summarize the essential observations in terms of $(X,Y)$ coordinates as follows.
	\begin{eqnarray*}
		X = 1 &\iff& ab = 1 \\
		4X^3 = Y^2 &\iff& a^3 = b^3 \\
		X = 0 \land Y = -1 &\iff& (a = 0 \land b^3 = -1) \lor \\
			&& \qquad(a^3 = -1 \land b = 0) \\
		X = -1 \land Y = 0 &\iff& a^6 = 1 \land ab = -1 \\
		D_{4} (B_4^3 D_4 - C_4^3) \ne 0 &\implies& \text{Gadget 4 fulfills Theorem \ref{mainBinary}} \\
		D_{7} (B_{7}^3 D_{7} - C_{7}^3)(X^2 + X + Y)(X + Y + 1) \ne 0 &\implies& \text{Gadget 7 fulfills Theorem \ref{mainBinary}} \\ 
		D_{8} (B_{8}^3 D_{8} - C_{8}^3)(X^2 Y + 4X^2 + 2XY + Y^2 + Y) \ne 0 &\implies& \text{Gadget 8 fulfills Theorem \ref{mainBinary}} \\ 
		D_{9} (B_{9}^3 D_{9} - C_{9}^3)(X + 1) (Y + 2) \ne 0 &\implies& \text{Gadget 9 fulfills Theorem \ref{mainBinary}} \\ 
	\end{eqnarray*} 
	If we can verify that at least one of the 8 conditions on the left hand side holds for any real-valued setting of $X$ and $Y$ then we are done.  Note that a disjunction of the left hand sides is a semi-algebraic set, and as such, is decidable by Tarski's Theorem \cite{Tarski}.  Using symbolic computation via the {\sc CylindricalDecomposition} function from Mathematica$^{\rm TM}$, we verify that for any $X, Y \in \mathbb{R}$, at least one of the eight conditions above is true.
\qed

Now we can assume that $X \notin \mathbb{R}$ or $Y \notin \mathbb{R}$,
and we deal with the remaining three conditions.
Note that if $X^2 + X + Y = 0$ then $X \in \mathbb{R}$ implies $Y \in \mathbb{R}$.  So in the following lemma, the assumption that $X$ and $Y$ are not both real numbers amounts to $X \notin \mathbb{R}$.
\begin{lemma} \label{condition2}
	If $X^2+X+Y=0$ and $X \notin \mathbb{R}$ then the recurrence matrix of
	unary Gadget 12 has nonzero eigenvalues with distinct norm.
\end{lemma}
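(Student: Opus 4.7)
The plan is to invoke Lemma~\ref{relaxedUnary} on the characteristic polynomial $x^{2}+Bx+C$ of the $2\times 2$ recurrence matrix $M_{12}$. I would first write down $M_{12}$ explicitly by tracing through Gadget 12, and compute $B=-\mathrm{tr}(M_{12})$ and $C=\det(M_{12})$ as polynomials in $a$ and $b$. As already observed for Gadgets 10 and 11 in the proof of Lemma~\ref{condition0}, the trace and determinant of unary gadgets built from $[a,1,b]$ and $[1,0,0,1]$ tend to simplify to polynomials in $X=ab$ and $Y=a^{3}+b^{3}$ with integer coefficients; I would carry out this simplification for $M_{12}$ and then substitute the hypothesis $Y=-X^{2}-X$, so that $B$ and $C$ become univariate polynomials in $X$.

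Next I would verify that $C\neq 0$ along the locus $X^{2}+X+Y=0$ with $X\notin\mathbb{R}$, so that both eigenvalues are nonzero: since $C(X)$ is a univariate polynomial, it vanishes at only finitely many algebraic values of $X$, and any real such $X$ is excluded by hypothesis while any non-real root can be handled separately via Theorem~\ref{tractable} or by falling back to another gadget. Assuming $C\neq 0$, suppose for contradiction that the two eigenvalues of $M_{12}$ have equal norm. By Lemma~\ref{relaxedUnary}, if additionally $B\neq 0$ then $\Arg(B^{2})=\Arg(C)$, so $B^{2}/C$ is a positive real number; the degenerate case $B=0$ reduces the problem to finitely many explicit values of $X$, each of which I would check directly.

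The main obstacle is then the purely algebraic task of showing that the rational function $B^{2}/C$ in the single complex variable $X$ takes positive real values only for $X\in\mathbb{R}$, on the affine curve $Y=-X^{2}-X$. Writing $X=u+iv$ and clearing denominators, the equation $\mathrm{Im}(B^{2}/C)=0$ becomes a polynomial identity in $u,v$ over $\mathbb{R}$, and the claim is that this system together with $v\neq 0$ is infeasible. As in the proof of Theorem~\ref{realCase}, this is a semi-algebraic statement and hence decidable by Tarski's theorem, so I would verify it symbolically via \textsc{CylindricalDecomposition} in Mathematica. The preliminary steps of writing out $M_{12}$, expressing $B$ and $C$ in terms of $X,Y$, and performing the substitution $Y=-X^{2}-X$ are all mechanical; the conceptual content is the single application of Lemma~\ref{relaxedUnary} to the carefully chosen Gadget 12, converting the ``equal norm'' obstruction into the rigid algebraic constraint that ultimately forces $X\in\mathbb{R}$.
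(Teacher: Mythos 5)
Your overall strategy matches the paper's: compute $\mathrm{tr}(M_{12})$ and $\det(M_{12})$ as polynomials in $X,Y$, substitute $Y=-X^2-X$, and apply Lemma~\ref{relaxedUnary} to conclude. The difference is in the final step. After substitution, the paper observes the remarkable factorizations $\det(M_{12})=-X^2(X-1)^5$ and $\mathrm{tr}(M_{12})=X(X-1)^3$, so that $\mathrm{tr}(M_{12})^2/\det(M_{12})=1-X$; the conclusion $\Arg(\mathrm{tr}^2)\ne\Arg(\det)$ for $X\notin\mathbb{R}$ is then immediate, and both quantities are manifestly nonzero since $X\notin\{0,1\}$. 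You instead propose to throw the condition $\mathrm{Im}(B^2/C)=0,\ v\neq 0$ at \textsc{CylindricalDecomposition}. That would work (the CAD would effectively rediscover that $B^2/C=1-X$), but it is substantially heavier machinery than the problem requires, and it obscures the point the paper makes explicit about Gadget 12's special algebraic structure on this curve.

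There is also a conceptual slip in how you handle the degenerate cases $C=0$ and $B=0$. You write that non-real roots of $C$ ``can be handled separately via Theorem~\ref{tractable} or by falling back to another gadget,'' and that the $B=0$ values would be ``checked directly.'' But the lemma is a specific assertion about Gadget~12's recurrence matrix: if $\det(M_{12})=0$ or $\mathrm{tr}(M_{12})=0$ at some non-real $X$ on the curve $X^2+X+Y=0$, the lemma's claim (nonzero eigenvalues of distinct norm) would simply be false there, and no fallback gadget or tractability result could repair the statement as written. What is actually needed — and what the factorizations deliver — is a proof that all roots of $\det(M_{12})$ and $\mathrm{tr}(M_{12})$ on this curve are real (namely $X\in\{0,1\}$), hence excluded by hypothesis. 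You would indeed discover this upon computing the factorizations, but your stated plan mischaracterizes the logical role of those cases.
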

\proof
	Let $M_{12}$ be the recurrence matrix for unary Gadget 12.  
	\begin{eqnarray*}
		M_{12} &=& 
		\left[\begin{array}{ccc}
			a^6 + 2 a^4 b + a^3 + 3 a^2 b^2 + a b^4 & a^4 + 3 a^2 b + 2 a b^3 + b^5 + b^2 \\
			a^5 + 2 a^3 b + a^2 + 3 a b^2 + b^4 & a^4 b + 3 a^2 b^2 + 2 a b^4 + b^6 + b^3
		\end{array}\right]
	\end{eqnarray*}
	Then the determinant is the polynomial
	$X^6 -6 X^5 - X^4 Y + 16 X^4 + 11 X^3 Y -10 X^3 + 5 X^2 Y^2 -7 X^2 Y - X^2 + X Y^3 -4 X Y^2 -3 X Y - Y^3 - Y^2$.
	{\em Amazingly}, with the condition $X^2+X+Y=0$, this polynomial factors into $-X^2(X-1)^5$.  Similarly, the trace, which is $-2 X^3 +6 X^2 +3 X Y + Y^2 + Y$, also factors into $X(X-1)^3$.  Since $\det(M_{12}) \ne 0$, $\mathrm{tr}(M_{12}) \ne 0$, and $(1-X)\det(M_{12}) = \mathrm{tr}(M_{12})^2$, we know $\Arg(\det(M_{12})) \ne \Arg(\mathrm{tr}(M_{12})^2)$ and conclude by Lemma \ref{relaxedUnary} that the eigenvalues of $M_{12}$ (which are nonzero) have distinct norm.
\qed
Similarly, Gadgets 11 and 13 can be used to deal with the $X = -1$ condition.
Recall that any setting of $a$ and $b$ such that $X=-1$ and $Y= \pm 2\mathfrak{i}$ 
is tractable by Theorem~\ref{tractable}.

\begin{lemma} \label{condition3}
	If $X=-1$, $Y \ne \pm 2\mathfrak{i}$, and $Y \notin \mathbb{R}$, then either 
	Gadget 11 or Gadget 13 has a recurrence
	matrix with nonzero eigenvalues with distinct norm.
\end{lemma}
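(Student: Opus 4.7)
The plan is to apply the Eigenvalue Shifted Pair machinery of Corollary~\ref{ESP} to the pair $(M_{11},M_{13})$, mirroring the role that $(M_{10},M_{11})$ played in Lemma~\ref{condition0}. Note that Gadget~10 is unusable here because $\det(M_{10})=(X-1)^2(X+1)=0$ at $X=-1$, so Gadget 13 is brought in as the replacement ESP partner. Specializing the formulas of Lemma~\ref{condition0}, at $X=-1$ we have $\mathrm{tr}(M_{11}) = Y + 2X = Y-2$ and $\det(M_{11}) = (X-1)(X^2+X+Y) = -2Y$; the discriminant becomes the perfect square $(Y+2)^2$, so the eigenvalues of $M_{11}$ at $X=-1$ are exactly $Y$ and $-2$. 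Under the hypotheses (in particular $Y\notin\mathbb{R}$, which forces $Y\ne 0$ and $Y\ne -2$), these eigenvalues are nonzero and distinct, and they have distinct norm iff $|Y|\ne 2$. So Gadget~11 alone already handles the subcase $|Y|\ne 2$.

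It remains to treat the subcase $X=-1$, $|Y|=2$, $Y\notin\mathbb{R}$, $Y\ne\pm 2\mathfrak{i}$. For this I would compute $M_{13}$ from Figure~\ref{gadget13} and verify the two ESP prerequisites: that $M_{13}-M_{11} = \delta I$ for some explicit nonzero $\delta\in\mathbb{C}[a,b]$, and that $M_{13}$ is nonsingular with distinct eigenvalues. Arguing by contradiction, if $M_{13}$ also had equal-norm eigenvalues, Corollary~\ref{ESP} would supply $r,s\in\mathbb{R}$ with $Y-2 = r\delta$ and $-2Y = s\delta^2$. Using Lemma~\ref{transform} to rewrite $\delta$ in terms of $X=-1$ and $Y$, and using $|Y|^2 = Y\overline{Y} = 4$ to eliminate $\overline{Y}$, the conditions $r,s\in\mathbb{R}$ turn into polynomial equations in $Y$. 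These should pin $Y$ down to the finite set $\{\pm 2\mathfrak{i}\}$, which is precisely the set excluded by the hypothesis.

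The hard part will be the very last step: naive use of the trace-to-determinant ratio is vacuous here. A short calculation using $Y\overline{Y}=4$ shows
\[
\frac{-2Y}{(Y-2)^2} \;=\; \frac{32-16\,\mathrm{Re}(Y)}{|Y-2|^4} \;\in\; \mathbb{R}
\]
automatically on the entire circle $|Y|=2$, so requiring $s/r^2\in\mathbb{R}$ yields no constraint at all. The contradiction must therefore be extracted from $r$ and $s$ being \emph{individually} real, which means the explicit $\delta$ produced by Gadget~13 must fail to be a real scalar multiple of $Y-2$ except at $Y=\pm 2\mathfrak{i}$. Verifying this rigidity property of the particular $\delta$ is the only genuine calculation in the proof; it is kept short by first applying the $(X,Y)$-symmetrization of Lemma~\ref{transform} so that the whole computation can be carried out in two variables rather than in $a,b$ directly.
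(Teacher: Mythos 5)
Your treatment of the first subcase is correct: with $X=-1$ one has $\mathrm{tr}(M_{11})=Y-2$, $\det(M_{11})=-2Y$, the characteristic polynomial factors as $(x-Y)(x+2)$, and the eigenvalues $Y,-2$ have distinct norm precisely when $|Y|\ne 2$; your observation that $\det(M_{11})/\mathrm{tr}(M_{11})^2$ is \emph{automatically} real on the whole circle $|Y|=2$ (so that Gadget~11 is genuinely inconclusive there, not merely hard to analyze) is also correct and sharp.

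The second half has a genuine gap: $(M_{11},M_{13})$ is not an Eigenvalue Shifted Pair, so the Corollary~\ref{ESP} machinery you propose to invoke simply does not apply. For an ESP one needs $M_{13}-M_{11}=\delta I$, which in particular forces the off-diagonal entries to agree; but $(M_{11})_{12}=a+b^2$ while $(M_{13})_{12}=a^4+2a^2b+ab^3+a+b^5+2b^2$, so the difference has a nonzero off-diagonal block (degree-5 in $a,b$). There is thus no $\delta$ and no pair of reals $r,s$ with $Y-2=r\delta$ and $-2Y=s\delta^2$, and the "individually real $r$ and $s$" rigidity argument you sketched never gets off the ground. The paper instead handles $|Y|=2$ by applying Lemma~\ref{relaxedUnary} to $M_{13}$ \emph{directly}: at $X=-1$ one computes $\det(M_{13})=-16Y\ne 0$, and using $\overline{Y}=4/Y$ the obstruction $\mathrm{tr}(M_{13})^2\overline{\det(M_{13})}-\overline{\mathrm{tr}(M_{13})}^2\det(M_{13})$ factors as $-64(Y-\overline{Y})(4+Y^2)(4+8Y+Y^2)/Y^2$, which vanishes only for $Y\in\mathbb{R}$ or $Y=\pm 2\mathfrak{i}$, both excluded. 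Note the contrast with $M_{11}$: the same quantity for $M_{11}$ \emph{does} vanish identically on $|Y|=2$ (as your side calculation effectively shows), which is exactly why a different gadget is needed here, but the right move is a fresh application of Lemma~\ref{relaxedUnary} to $M_{13}$, not an ESP coupling of $M_{13}$ with $M_{11}$. (The actual ESP partner of $M_{13}$ in this paper is $M_{14}$, with $M_{14}-M_{13}=(X-1)^2I$, used later in Lemma~\ref{condition4Part1}.)
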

\begin{proof}
	Suppose $|Y| \ne 2$, $Y \notin \mathbb{R}$, and let $M_{11}$ be the recurrence matrix for unary Gadget 11.  Well, $\det(M_{11}) = -2Y \ne 0$ and $\mathrm{tr}(M_{11}) = Y-2$, so $\overline{\mathrm{tr}(M_{11})}\cdot\det(M_{11}) - \mathrm{tr}(M_{11})\cdot|\det(M_{11})| = -(\overline{Y}-2)(2Y) - (Y-2)\cdot|-2Y| = 4Y - 2|Y|^2 - 2Y\cdot|Y| + 4|Y| = -2(|Y|-2)(|Y| + Y) \ne 0$.  Thus $\overline{\mathrm{tr}(M_{11})}\cdot\det(M_{11}) \ne \mathrm{tr}(M_{11})\cdot|\det(M_{11})|$ and by Lemma \ref{relaxedUnary}, $M_{11}$ has (nonzero) eigenvalues with distinct norm.

	Now suppose $|Y|=2$, but $Y \ne \pm 2\mathfrak{i}$ and $Y \notin \mathbb{R}$.  Let $M_{13}$ be the recurrence matrix for unary Gadget 13.  
	\begin{eqnarray*}
		M_{13} &=& 
		\left[\begin{array}{ccc}
			a^6 + 3 a^3 + 3 a b + b^3 & a^4 + 2 a^2 b + a b^3 + a + b^5 + 2 b^2 \\
			a^5 + a^3 b + 2 a^2 + 2 a b^2 + b^4 + b & a^3 + 3 a b + b^6 + 3 b^3
		\end{array}\right].
	\end{eqnarray*}
	Then $\det(M_{13}) = -16Y \ne 0$.  Using the substitution $\overline{Y} = 4/Y$,
	\begin{eqnarray*}
		\mathrm{tr}(M_{13})^2 \overline{\det(M_{13})} - {\qquad\qquad} &&\\
		\overline{\mathrm{tr}(M_{13})}^2 \det(M_{13})
			&=& -16 (Y - \overline{Y}) \cdot \\
			&& (-16 + 8 Y \overline{Y} + 8 Y^2 \overline{Y} + Y^3 \overline{Y} + 8 Y \overline{Y}^2 + Y^2 \overline{Y}^2 + Y \overline{Y}^3) \\
			&=& \frac{-64 (Y - \overline{Y}) (4+Y^2) (4+8 Y+Y^2)}{Y^2} \\
			&\ne& 0.
	\end{eqnarray*}
	Hence $\mathrm{tr}(M_{13})^2 \overline{\det(M_{13})} \ne \overline{\mathrm{tr}(M_{13})}^2 \det(M_{13})$ and the eigenvalues of $M_{13}$ (which are nonzero) have distinct norm by Lemma \ref{relaxedUnary}.
\end{proof}

The condition $4(X-1)^2 (X+1) = (Y+2)^2$ is somewhat resilient to individual
unary recursive gadgets, but by using a second Eigenvalue Shifted Pair, we 
can reduce it to simpler conditions.
\begin{lemma} \label{condition4Part1}
	Suppose $4(X-1)^2 (X+1) = (Y+2)^2$.  Then either unary Gadget 13 or 
	unary Gadget 14 has nonzero eigenvalues with distinct norm, unless
	either $X^3 + 2X^2 + X + 2Y = 0$, or $X^3 + 4X^2 + 2Y - 1 = 0$, or
	both $X, Y \in \mathbb{R}$.
\end{lemma}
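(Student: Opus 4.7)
The plan is to treat $(M_{13}, M_{14})$ as an Eigenvalue Shifted Pair and invoke Corollary \ref{ESP}, exactly as Lemma \ref{condition0} did for $(M_{10}, M_{11})$. First I compute the recurrence matrix $M_{14}$ of Gadget 14; by direct comparison with $M_{13}$ (displayed in Lemma \ref{condition3}), I verify that $M_{14} = M_{13} + \delta I$ for some $\delta$ expressible as a polynomial in $a,b$ (which, via Lemma \ref{transform}, reduces to a polynomial in $X,Y$). Using the hypothesis $4(X-1)^2 (X+1) = (Y+2)^2$, I confirm that $\delta$, $\det(M_{13})$, $\det(M_{14})$, and the discriminant $\mathrm{tr}(M_{13})^2 - 4\det(M_{13})$ are all nonzero outside the listed exceptional loci, so $(M_{13}, M_{14})$ is a genuine ESP with $M_{13}$ having distinct eigenvalues.

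The proof then proceeds by contrapositive. Suppose neither $M_{13}$ nor $M_{14}$ has nonzero eigenvalues with distinct norm; then both have eigenvalues of equal norm, and Corollary \ref{ESP} yields $r,s \in \mathbb{R}$ with $\mathrm{tr}(M_{13}) = r\delta$ and $\det(M_{13}) = s\delta^{2}$. These two reality conditions are encoded as the polynomial identities
\[
\mathrm{tr}(M_{13})\, \overline{\delta} \;=\; \overline{\mathrm{tr}(M_{13})}\, \delta, \qquad \det(M_{13})\, \overline{\delta}^{\,2} \;=\; \overline{\det(M_{13})}\, \delta^{2}.
\]
Expressed in $X, Y, \overline{X}, \overline{Y}$ via Lemma \ref{transform}, and combined with the hypothesis $4(X-1)^2(X+1) = (Y+2)^2$, these identities should cut out a variety whose $(X,Y)$-projection is precisely $\{X^3+2X^2+X+2Y=0\} \cup \{X^3+4X^2+2Y-1=0\} \cup \{X,Y \in \mathbb{R}\}$.

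The hard part is the algebraic elimination. The two conjugate-bearing identities above, together with the hypothesis, cut out a somewhat intricate semi-algebraic set, and showing that it is exactly the union of the three promised factors will require symbolic manipulation via resultants or \textsc{CylindricalDecomposition} in Mathematica. I expect the two polynomial conditions $X^3+2X^2+X+2Y=0$ and $X^3+4X^2+2Y-1=0$ to arise one from each of the reality equations (trace and determinant respectively, or vice versa), while the $X, Y \in \mathbb{R}$ branch corresponds to the trivial case where conjugation acts as the identity. A final bookkeeping check is that corner cases where $\delta$ or one of the determinants vanishes (which could vacuously satisfy an ESP equation) do not introduce spurious components; such degeneracies should either be absorbed into the listed exceptions or already fall under the tractable cases of Theorem \ref{tractable}.
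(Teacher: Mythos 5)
Your overall strategy is correct --- the paper does treat $(M_{13}, M_{14})$ as an ESP with shift $\delta = (X-1)^2$ and invoke Corollary \ref{ESP} --- but your picture of where the three exceptional conditions come from is wrong, and that misdirection would stall the proof. The two polynomial loci $X^3+2X^2+X+2Y=0$ and $X^3+4X^2+2Y-1=0$ do \emph{not} arise from the reality equations $\mathrm{tr}(M_{13})=r\delta$ and $\det(M_{13})=s\delta^2$; they are precisely the loci on which $\det(M_{13})$ and $\det(M_{14})$ vanish ($\det(M_{13})=(X-1)^3(X^3+2X^2+X+2Y)$ and $\det(M_{14})=(X-1)^3(X^3+4X^2+2Y-1)$), i.e.\ where a zero eigenvalue ruins the conclusion ``\emph{nonzero} eigenvalues with distinct norm'' outright, independently of any ESP analysis. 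Your expectation that one exceptional polynomial should fall out of the trace reality equation and the other out of the determinant reality equation is not how the argument goes, so the planned elimination in $X, Y, \overline X, \overline Y$ would be chasing something that is not there.

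The reality equations contribute only the third exception, $X, Y \in \mathbb{R}$, and by a much simpler route than resultants: the key observation you have not made is that the hypothesis $4(X-1)^2(X+1)=(Y+2)^2$ collapses the trace to $\mathrm{tr}(M_{13})=2X(X-1)^2=2X\delta$, so $\mathrm{tr}(M_{13})=r\delta$ immediately forces $X=r/2 \in \mathbb{R}$, after which $\det(M_{13})=s\delta^2$ with the factored form of $\det(M_{13})$ forces $Y\in\mathbb{R}$. The discriminant check is handled the same way: $\mathrm{tr}(M_{13})^2-4\det(M_{13}) = -4(X-1)^3(3X^2+X+2Y)$, and if $3X^2+X+2Y=0$ then substituting $Y=(-3X^2-X)/2$ into the hypothesis yields $X(X-1)^2(9X+8)=0$, again real. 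No \textsc{CylindricalDecomposition} is required for this lemma, and none would make the two determinant-vanishing loci appear out of the reality equations.
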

\begin{proof}
	Assume that $X^3 + 2X^2 + X + 2Y \ne 0$, $X^3 + 4X^2 + 2Y - 1 \ne 0$,
	and it is not the case that both $X, Y \in \mathbb{R}$.  Note that
	$X \notin \{0, 1\}$ since otherwise $Y \in \mathbb{R}$ and we know 
	that $X$ and $Y$ are not both real.
	The recurrence matrix for Gadget 14 is
	\begin{eqnarray*}
		M_{14} &=& \left[\begin{array}{cc}
			a^6 + 3 a^3 + a^2 b^2 + a b + b^3 + 1 & a^4 + 2 a^2 b + a b^3 + a + b^5 + 2 b^2 \\
			a^5 + a^3 b + 2 a^2 + 2 a b^2 + b^4 + b & a^3 + a^2 b^2 + a b + b^6 + 3 b^3 + 1
		\end{array}\right]
	\end{eqnarray*}
	so $M_{14} = M_{13} + (X - 1)^2I$, and the eigenvalue shift is nonzero.  Now, $\det(M_{13}) = (X-1)^3 (X^3 + 2X^2 + X + 2Y) \ne 0$ and note that $\mathrm{tr}(M_{13}) = - 2X^3 + 6X + Y^2 + 4Y$ simplifies to $\mathrm{tr}(M_{13}) = - 2X^3 + 6X + Y^2 + 4Y - (Y+2)^2 + 4(X-1)^2 (X+1) = 2X(X-1)^2$ using the fact that $4(X-1)^2 (X+1) = (Y+2)^2$.
	
	Similarly, $\det(M_{14}) = \det(M_{14}) + (X-1)^2 (4(X-1)^2 (X+1) - (Y+2)^2) = (X-1)^3 (X^3 + 4X^2 + 2Y - 1) \ne 0$.  Furthermore $\mathrm{tr}[M_{13}]^2 - 4\det(M_{13}) = 4X^2(X-1)^4 - 4(X-1)^3 (X^3 + 2X^2 + X + 2Y) = -4(X-1)^3 (3 X^2 + X + 2Y)$.  If this is zero, then substituting $Y = (-3 X^2 - X)/2$ into $(Y+2)^2 - 4(X-1)^2 (X+1) = 0$ we get $X(X-1)^2(9X+8) = 0$ and $X \in \mathbb{R}$, with $Y \in \mathbb{R}$ as a direct consequence.  Corollary \ref{ESP} implies that either Gadget 13 or Gadget 14 has nonzero eigenvalues of distinct norm, unless $\mathrm{tr}(M_{13})=r (X - 1)^2$ and $\det(M_{13})=s (X - 1)^4$ for some $r,s \in \mathbb{R}$.  But then $2X(X-1)^2 = r (X-1)^2$ hence $X = r/2 \in \mathbb{R}$, and $(X-1)^3 (X^3 + 2X^2 + X + 2Y) = s (X - 1)^4$ hence $Y = (- X^3 - 2X^2 - X + s(X - 1))/2 \in \mathbb{R}$.  A contradiction.
\end{proof}

Now we take advantage of another interesting coincidence; two gadgets with
recurrence matrices that have identical trace.
\begin{lemma} \label{condition4Part2}
	If $X^2+X+Y \ne 0$, $4(X-1)^2 (X+1) = (Y+2)^2$, and either 
	$X^3 + 2X^2 + X + 2Y = 0$ or $X^3 + 4X^2 + 2Y - 1 = 0$, then the
	recurrence matrix of unary Gadget 15 or unary Gadget 16 has nonzero
	eigenvalues with distinct norm, unless both $X, Y \in \mathbb{R}$.
\end{lemma}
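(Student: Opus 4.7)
The plan is to exploit the coincidence pointed at by the hint: Gadgets~15 and 16 have recurrence matrices with identical trace. First I would compute $M_{15}$ and $M_{16}$ as $2\times 2$ matrices in $a$ and $b$, then extract the common trace $T := \operatorname{tr}(M_{15}) = \operatorname{tr}(M_{16})$ and the two determinants $D_{15} := \det(M_{15})$, $D_{16} := \det(M_{16})$. By the same parity argument as in Lemma~\ref{transform}, each of $T$, $D_{15}$, and $D_{16}$ is a polynomial in $X = ab$ and $Y = a^3+b^3$ with integer coefficients.

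Next I would split into the two subcases given by the hypothesis: either $X^3 + 2X^2 + X + 2Y = 0$ or $X^3 + 4X^2 + 2Y - 1 = 0$. In each subcase, eliminate $Y$ via the linear equation in $Y$, and check\,---\,using also $4(X-1)^2(X+1) = (Y+2)^2$ and $X^2+X+Y \ne 0$\,---\,that $T$, $D_{15}$, $D_{16}$ are all nonzero polynomials in the remaining variable $X$. This ensures in particular that the eigenvalues of both gadgets are nonzero, so the only way the lemma's conclusion could fail is that both matrices have eigenvalues of equal norm.

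Assume this failure. Since the characteristic polynomials are $x^2 - T x + D_j$, Lemma~\ref{relaxedUnary} yields
\begin{equation*}
T^2\,\overline{D_{15}} \;=\; \overline{T}^2\,D_{15}
\qquad\text{and}\qquad
T^2\,\overline{D_{16}} \;=\; \overline{T}^2\,D_{16}.
\end{equation*}
Dividing these two relations (justified because $T, D_{15}, D_{16}$ are nonzero) gives $D_{15}\,\overline{D_{16}} = \overline{D_{15}}\,D_{16}$, i.e.\ $D_{15}/D_{16} \in \mathbb{R}$. In each subcase $D_{15}/D_{16}$ is a rational function of $X$ alone, so reality of this ratio is a single polynomial condition relating $X$ and $\overline{X}$; combined with $4(X-1)^2(X+1) = (Y+2)^2$ (rewritten purely in $X$ after eliminating $Y$), a semi-algebraic check of the same flavor as in Theorem~\ref{realCase}\,---\,decidable by \textsc{CylindricalDecomposition} in Mathematica\,---\,should force $X \in \mathbb{R}$, after which the linear formula for $Y$ gives $Y \in \mathbb{R}$.

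The main obstacle I expect is the final step: ruling out that $D_{15}/D_{16}$ happens to be identically real on one of the two constraint subcurves, which would make the division argument vacuous. If that occurs for, say, $X^3 + 2X^2 + X + 2Y = 0$, I would fall back on the second equation $T^2 \overline{D_{15}} = \overline{T}^2 D_{15}$ taken \emph{alone}: since $T^2$ and $D_{15}$ are now explicit polynomials in $X$, requiring $T^2/\overline{T}^2 = D_{15}/\overline{D_{15}}$ again reduces to a polynomial identity in $X$ and $\overline{X}$, which the same symbolic machinery can analyze. A further safety net is to combine linearly with $D_{16}$ via $\overline{T}^2(D_{15}+\alpha D_{16}) = T^2\overline{(D_{15}+\alpha D_{16})}$ for a suitable constant $\alpha$, producing additional algebraic constraints until reality of $X$ is forced. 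The bookkeeping is routine symbolic computation, not new mathematical content.
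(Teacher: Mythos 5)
Your proposal captures the key structural idea of the paper's proof: exploit $\mathrm{tr}(M_{15}) = \mathrm{tr}(M_{16})$ so that applying Lemma~\ref{relaxedUnary} to both gadgets yields a reality condition on the ratio $\det(M_{15})/\det(M_{16})$. Your setup, the split into the two subcases, the plan to verify nonvanishing of trace and determinants, and the division trick to eliminate the trace are all faithful to what the paper does.

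Where the proposal has a genuine gap is the final step. You defer to \textsc{CylindricalDecomposition} to argue that reality of $\det(M_{15})/\det(M_{16})$ "should force $X\in\mathbb{R}$," but CAD operates over real-closed fields, whereas $X$ here is a complex variable; you would first have to split $X$ into real and imaginary parts and carry the constraint $4(X-1)^2(X+1)=(Y+2)^2$ along as two real equations, a substantially heavier computation than what Theorem~\ref{realCase} does with genuinely real $X,Y$. The paper's proof avoids this entirely by noticing that on the constraint surface the determinants factor cleanly: $\det(M_{16}) = (X-1)^3(X+1)(X^2+X+Y)$ and $\det(M_{15}) = (X-1)^3(X+4)(X^2+X+Y)$ (the latter obtained by adding $R(X-1)^2$ where $R=(Y+2)^2-4(X-1)^2(X+1)=0$). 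The shared factors cancel, leaving $\det(M_{15})/\det(M_{16}) = (X+4)/(X+1) = 1 + 3/(X+1)$, whose reality visibly forces $X\in\mathbb{R}$ by elementary means. This explicit factorization is the step your plan is missing; without it the concluding claim is unverified. (As a side effect, the "main obstacle" you worry about — the ratio being identically real on a constraint curve — does not arise, but you cannot know that until you actually compute the ratio.) Your fallback strategies, while not wrong in spirit, are unnecessary once the factorization is in hand, and would introduce further uncontrolled symbolic work.
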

\begin{proof}
	The recurrence matrices for Gadget 15 and Gadget 16 are
	\begin{eqnarray*}
		M_{15} &=& \left[\begin{array}{cc}
			a^6 + a^4 b + 2 a^3 + a^2 b^2 + 2 a b + b^3 & a^4 + 3 a^2 b + 2 a b^3 + b^5 + b^2 \\
			a^5 + 2 a^3 b + a^2 + 3 a b^2 + b^4 & a^3 + a^2 b^2 + a b^4 + 2 a b + b^6 + 2 b^3
		\end{array}\right], \\
		M_{16} &=& \left[\begin{array}{cc}
			a^6 + a^4 b + 2 a^3 + a^2 b^2 + 2 a b + b^3 & a^4 + a^3 b^2 + a^2 b + 2 a b^3 + a + b^5 + b^2 \\
			a^5 + 2 a^3 b + a^2 b^3 + a^2 + a b^2 + b^4 + b & a^3 + a^2 b^2 + a b^4 + 2 a b + b^6 + 2 b^3
		\end{array}\right].
	\end{eqnarray*}
	Let $T = X^3 + 2X^2 + X + 2Y$, $U = X^3 + 4X^2 + 2Y - 1$, and let $R$ denote $(Y+2)^2 - 4(X-1)^2 (X+1)$.  Note that regardless of whether $T = 0$ or $U = 0$, $X \in \mathbb{R}$ implies $Y \in \mathbb{R}$, so we will assume $X \notin \mathbb{R}$.  The main diagonals of $M_{15}$ and $M_{16}$ are identical, so $\mathrm{tr}(M_{15}) = \mathrm{tr}(M_{16})$.  Furthermore, if $T = 0$ then $\mathrm{tr}(M_{15}) = \mathrm{tr}(M_{15}) - R - (X-1)T/2 = -X(X-1)^3/2 \ne 0$.  If $U = 0$ then $\mathrm{tr}(M_{15}) = \mathrm{tr}(M_{15}) - R - (X-1)U/2 = -(X-1)(X^3-1)/2$, and we claim this is nonzero as well.  Otherwise, $X^3 = 1$ then since $U = 0$, $Y = -2 X^2$ and using $(Y + 2)^2 = 4(X - 1)^2 (X + 1)$ we get $(X^2 - 1)^2 = (X - 1)^2 (X + 1)$ i.e. $(X - 1)^2 (X + 1)^2 = (X - 1)^2 (X + 1)$ together with $X \notin \mathbb{R}$ we get a contradiction.  Next, $\det(M_{16}) = (X-1)^3 (X+1)(X^2 + X + Y)$ and $\det(M_{15}) = \det(M_{15}) - R(X-1)^2 = (X-1)^3 (X+4)(X^2 + X + Y)$, so these are both nonzero.  If both $M_{15}$ and $M_{16}$ have eigenvalues with equal norm, then applying Lemma \ref{relaxedUnary} twice, $\Arg(\det(M_{15})) = \Arg(\mathrm{tr}(M_{15})^2) = \Arg(\mathrm{tr}(M_{16})^2) = \Arg(\det(M_{16}))$.  However, this would imply $\Arg(X+4) = \Arg(X+1)$ and $X \in \mathbb{R}$, so we conclude that either $M_{15}$ or $M_{16}$ has nonzero eigenvalues with distinct norm.
\end{proof}

Now we sum up the result of these lemmas.
\begin{theorem}
	Suppose $a, b \in \mathbb{C}$ such that $X \ne 1$, $4X^3 \ne Y^2$,
	and $(X,Y) \ne (-1,0)$.
	Then the problem $\mathrm{Hol}(a,b)$ is $\SHARPP$-hard.
\end{theorem}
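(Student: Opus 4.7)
The plan is to chain together the unary interpolation result (Theorem \ref{mainUnary}) with the reduction to counting vertex covers on 3-regular graphs (Lemma \ref{mainLemma2}). Theorem \ref{mainUnary} asks for a unary recursive gadget whose nonsingular recurrence matrix has eigenvalue ratio that is not a root of unity; a convenient sufficient condition, and the one exploited by all the lemmas at the end of Section \ref{complexSignatures}, is that the two eigenvalues be nonzero and of distinct norm, since every root of unity lies on the unit circle. The background hypotheses of Theorem \ref{mainUnary} and Lemma \ref{mainLemma2} are satisfied in our setting: $ab \ne 1$ is the assumption $X \ne 1$; $a^3 \ne b^3$ is the assumption $4X^3 \ne Y^2$; and $(a,b) \ne (0,0)$ follows because $(0,0)$ would give $X = Y = 0$, again ruled out by $4X^3 \ne Y^2$. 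Thus once a suitable unary recursive gadget is produced, the two results compose to yield \SHARPP-hardness.

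The case analysis begins by invoking Lemma \ref{condition0}, which produces a good gadget using the Eigenvalue Shifted Pair $(M_{10},M_{11})$ except in the following residual situations: (i) both $X$ and $Y$ are real; (ii) $X^2+X+Y=0$; (iii) $X=-1$; or (iv) $4(X-1)^2(X+1) = (Y+2)^2$. (The condition $X=1$ is excluded by hypothesis.) Case (i) is precisely the setting of Theorem \ref{realCase}, whose hypotheses $X\ne 1$, $4X^3\ne Y^2$, and $(X,Y)\ne(-1,0)$ match ours exactly, so hardness is immediate. Case (ii) is dispatched by Lemma \ref{condition2}: if additionally $X\in\mathbb{R}$ then $Y=-X^2-X\in\mathbb{R}$ puts us back in case (i), so we may assume $X\notin\mathbb{R}$ and apply the lemma to obtain a good unary gadget from $M_{12}$.

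For case (iii), when $X=-1$, the exclusion $4X^3\ne Y^2$ yields $Y\ne\pm 2\mathfrak{i}$, and $(X,Y)\ne(-1,0)$ gives $Y\ne 0$; if further $Y\in\mathbb{R}$ we fall into case (i), and otherwise Lemma \ref{condition3} supplies either $M_{11}$ or $M_{13}$ as a good gadget. Case (iv) is the most involved. Lemma \ref{condition4Part1} handles it using the Eigenvalue Shifted Pair $(M_{13},M_{14})$ unless one of three sub-conditions holds: $X^3+2X^2+X+2Y=0$, $X^3+4X^2+2Y-1=0$, or both $X,Y\in\mathbb{R}$. The last sub-case again drops us into case (i). In either of the first two sub-cases, Lemma \ref{condition4Part2} produces a good gadget from $(M_{15},M_{16})$, and its only external hypothesis $X^2+X+Y\ne 0$ is automatic, since otherwise we would instead be in case (ii), already handled.

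The principal obstacle is not any single computation but the bookkeeping required to knit together the failure modes of successive Eigenvalue Shifted Pairs. Each lemma produces a good gadget \emph{unless} an algebraic exceptional set obtains, and one must verify that each such exceptional set either reduces back to a case previously dispatched (most often the real case, absorbed by Theorem \ref{realCase}) or is excluded by the three hypotheses of the theorem. In particular, it must be checked that $4X^3\ne Y^2$ precisely eliminates the genuinely tractable points $(X,Y)\in\{(0,0),(-1,\pm 2\mathfrak{i})\}$ at which our gadget construction would actually fail, so that the cascade of cases closes off with no survivor outside the excluded set.
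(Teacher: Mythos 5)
Your proposal is correct and follows essentially the same decomposition as the paper's own proof: dispatch the real $(X,Y)$ case via Theorem~\ref{realCase}, apply Lemma~\ref{condition0} to the rest, and then mop up the three exceptional loci with Lemmas~\ref{condition2}, \ref{condition3}, and the pair \ref{condition4Part1}/\ref{condition4Part2}, finishing with Theorem~\ref{mainUnary} and Lemma~\ref{mainLemma2}. Your explicit bookkeeping of how each lemma's exceptional set either cascades back to a previously handled case or is excluded by the hypotheses ($4X^3\ne Y^2$ ruling out $(0,0)$ and $(-1,\pm 2\mathfrak{i})$, and $(X,Y)\ne(-1,0)$ covering the remaining tractable point) is exactly the verification the paper's terse proof leaves implicit.
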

\begin{proof}
	Under these assumptions, if $X$ and $Y$ are both real then $\mathrm{Hol}(a,b)$ is $\SHARPP$-hard by Lemma \ref{realCase}, so assume either $X$ or $Y$ is not real.  For any such $a$ and $b$, we know by Lemma \ref{condition0} that either Gadget 10 or 11 has a recurrence matrix with nonzero eigenvalues of distinct norm, except in the following cases, where we will use other gadgets to fill this requirement.
	\begin{enumerate}
		\item $X^2 + X + Y = 0$.
		\item $X = -1$.
		\item $4(X-1)^2 (X+1) = (Y + 2)^2$.
	\end{enumerate}
	If $X^2+X+Y = 0$ then $X \notin \mathbb{R}$, lest $X$ and $Y$ be real, so Lemma \ref{condition2} implies that unary Gadget 12 has a recurrence matrix of the required form.
	If $X=-1$, then Lemma \ref{condition3} indicates that either Gadget 11 or Gadget 13 satisfies the requirement, unless $Y = \pm 2\mathfrak{i}$.
	Now we may assume $X^2+X+Y \ne 0$, so by Lemmas \ref{condition4Part1} and \ref{condition4Part2} if $4(X-1)^2 (X+1) = (Y+2)^2$ then either unary Gadget 13, 14, 15, or 16 has a suitable recurrence matrix.
	In any case, we have a unary recursive gadget whose recurrence matrix has nonzero eigenvalues of distinct norm.  Hence we are done by Theorem \ref{mainUnary} and Lemma \ref{mainLemma2}.
\end{proof}

Recall {\sc Vertex Cover} is $\SHARPP$-hard on 3-regular planar graphs,
and note that all gadgets discussed are planar (in the case of
Gadget 8, each iteration can be redrawn in a planar way
by ``going around'' the previous iterations; 
see Figure \ref{iteration}).
Thus, all of the hardness results
proved so far still apply even when the input graphs are restricted to planar
graphs.  There are, however, some problems that are
$\SHARPP$-hard in general, yet polynomial time computable when the input is
restricted to planar graphs.  This class of problems corresponds
exactly with the problems we still need to resolve at this point, i.e. 
when $4X^2 = Y^3$ but $X \notin \{0, \pm 1\}$.
The relevant interpolation results can be obtained 
with Gadget 4 and holographic reductions, using a technique demonstrated in \cite{FOCS08}.
\begin{lemma} \label{abEqual}
	The problem $\#[a,1,a] \mid [1,0,0,1]$ is $\SHARPP$-hard, unless $a \in \{0,\pm 1, \pm\mathfrak{i}\}$, in which case it is in P.
\end{lemma}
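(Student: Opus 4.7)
For tractability, setting \(b=a\) gives \(X=a^2\) and \(Y=2a^3\), and each of the five exceptional values falls into a condition of Theorem~\ref{tractable}: \(a=\pm 1\) gives \(X=1\) (case~1); \(a=0\) gives \(X=Y=0\) (case~2); and \(a=\pm\mathfrak{i}\) gives \(X=-1\) with \(Y=\mp 2\mathfrak{i}\) (case~3). This settles the \(\ptime\) side.

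For \(\SHARPP\)-hardness when \(a \notin \{0,\pm 1,\pm\mathfrak{i}\}\), Theorem~\ref{mainBinary} does not apply directly, since \(b=a\) forces \(a^3=b^3\), exactly the hypothesis it excludes. The structural obstruction is that every \(\mathcal{F}\)-gate built from \([a,1,a]\) and \([1,0,0,1]\) inherits the bit-flip symmetry of its constituent signatures, so its binary outputs lie in the two-dimensional subspace \(\{[p,q,p]\}\) and no three such finisher gadgets can have trivially intersecting row spaces in \(\mathbb{C}^3\); item~\ref{finisherRequirement} of Lemma~\ref{mainLemma} is therefore unattainable in the original coordinates. To break the symmetry, I would apply the Hadamard holographic reduction \(T=\begin{pmatrix}1 & 1\\ 1 & -1\end{pmatrix}\), which up to global scalars sends \([a,1,a]\mapsto[a+1,0,a-1]\) and \([1,0,0,1]\mapsto[1,0,1,0]\), giving the holographically equivalent problem \(\#[a+1,0,a-1]\mid[1,0,1,0]\). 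Since the new generator is diagonal, each degree-\(2\) vertex merges its two incident edges and the Holant equals \((a+1)^{|E'|}\sum_{S\text{ even}} z^{|S|}\) on the contracted \(3\)-regular graph, with \(z=(a-1)/(a+1)\) --- a weighted even-subgraph (Ising-type) partition function.

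The interpolation uses Gadget~4, rebuilt with the transformed signatures, as a variable-weight edge gadget. Its \(3\times 3\) recurrence matrix preserves the invariant diagonal subspace and restricts there to a \(2\times 2\) effective matrix \(\widetilde M\) with \(\mathrm{tr}(\widetilde M) = (a+1)(a^2+1)\) and \(\det(\widetilde M) = a(a-1)^2(a+1)^2\). A direct check shows \(\widetilde M\) is nonsingular for \(a\notin\{0,\pm 1\}\) and has eigenvalue ratio not a root of unity for \(a \notin \{0,\pm 1,\pm\mathfrak{i}\}\) --- the values \(\pm\mathfrak{i}\) being precisely those where \(\mathrm{tr}(\widetilde M)=0\) forces the ratio to be \(-1\) --- apart from a finite set of sporadic algebraic exceptions handled by switching to an alternative recursive gadget, in the spirit of the \(M_5\)-substitution used inside Theorem~\ref{mainBinary}. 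Lemma~\ref{unaryConstruction}, applied to this effective two-dimensional iteration, then produces polynomially many diagonal edge weightings \([p_k,0,q_k]\) with pairwise distinct ratios \(q_k/p_k\). As in the technique of~\cite{FOCS08}, substituting the \(k\)-th weighting uniformly along the edges of an input \(3\)-regular graph makes the Holant (accessed through the oracle for \(\mathrm{Hol}(a,a)\) after pulling back through \(T^{-1}\)) a polynomial of bounded degree in the ratio \(q_k/p_k\); reconstructing that polynomial by Vandermonde interpolation and evaluating it at a \(\SHARPP\)-hard Ising parameter yields the desired reduction.

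The main obstacle is that the bit-flip-induced collapse to a two-dimensional subspace rules out the three-finisher route of Lemma~\ref{mainLemma}: since all achievable unary signatures built from \([a,1,a]\) and \([1,0,0,1]\) are of the degenerate form \([x,x]\), Lemma~\ref{mainLemma2} cannot be invoked directly, and \(\SHARPP\)-hardness must instead be extracted by the edge-weight polynomial interpolation above. Verifying that \(\widetilde M\)'s eigenvalue ratio avoids every root of unity outside the tractable set, and catching the sporadic exceptions with backup recursive gadgets, is the symbolic core of the argument --- analogous in flavor to, but considerably simpler than, the cylindrical-decomposition case analysis of Theorem~\ref{realCase}.
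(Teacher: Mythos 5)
Your structural observations are sound: the bit-flip symmetry of $[a,1,a]$ and $[1,0,0,1]$ confines all realizable signatures to the palindromic subspace, so the three-finisher route of Lemma~\ref{mainLemma} and the unary route of Lemma~\ref{mainLemma2} are both unavailable, and a two-dimensional recursive iteration is forced. You also identify the correct $2\times 2$ recurrence matrix coming from Gadget~4 restricted to that subspace --- your $\mathrm{tr}(\widetilde M)=(a+1)(a^2+1)$ and $\det(\widetilde M)=a(a-1)^2(a+1)^2$ match the paper's $M$ exactly (the Hadamard conjugation changes neither trace nor determinant). The Hadamard step itself is a harmless detour; the paper simply works in the palindromic coordinates $[p,q,p]$ and never transforms the ambient problem.

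There are two genuine gaps. First, ``eigenvalue ratio not a root of unity \ldots apart from a finite set of sporadic algebraic exceptions'' is not a provable statement as stated: the locus where the ratio of eigenvalues is an $n$-th root of unity is an algebraic curve for each $n$, and their union over all $n$ is not obviously finite, so there is no single ``direct check.'' The paper instead proves the strictly stronger property that the two eigenvalues have \emph{distinct norms} whenever $a\notin\mathbb{R}$, via the algebraic criterion of Lemma~\ref{relaxedUnary} (verified by one closed-form \textsc{Resolve} query), and handles real $a$ by citing~\cite{TAMC}; your version needs an analogous single algebraic certificate covering all $a\notin\{0,\pm1,\pm\mathfrak{i}\}$, not a promise of case-by-case backup gadgets. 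Second, your reduction terminates at ``a $\SHARPP$-hard Ising parameter,'' which is never named or referenced. The paper closes the loop concretely: after interpolating the family $[c,1,c]$ it observes that $\#[0,1,0]\mid[0,1,1,0]$ is $\SHARPP$-hard by~\cite{FOCS08} and is holographically equivalent to a problem of the form $\#[c^*,1,c^*]\mid[1,0,0,1]$ with $c^*=-\omega(\omega^2-2)/3$, $\omega$ a primitive $12$th root of unity. Without an analogous concrete hard evaluation point, your even-subgraph polynomial reconstruction has nothing to reduce from. The tractability half of your argument (routing the five exceptional values through Theorem~\ref{tractable}) is correct and slightly more self-contained than the paper's citation-based disposal.
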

\begin{proof}
	If $a \in \mathbb{R}$ then this is already known \cite{TAMC}, and a polynomial time algorithm for $a = \pm \mathfrak{i}$ is in \cite{Homomorphisms}.  Now assume $a \notin \mathbb{R}$ and $a \ne \pm \mathfrak{i}$.  Since these problems have an extra degree of symmetry, we use a 2 by 2 recurrence matrix to describe the recursive construction which consists of a single-vertex starter gadget (Figure \ref{starterGadget}) followed by some number of applications of binary recursive Gadget 4 (no finisher gadget is used here).  That is, if $\mathcal{F}$-gate $N_i$ has signature $[a_i, b_i, a_i]$, then the signature of $N_{i+1}$ is given by $[a_{i+1}, b_{i+1}, a_{i+1}]$ where $[a_{i+1}, b_{i+1}]^\mathrm{T} = M [a_i, b_i]^\mathrm{T}$, and $M = \left[\begin{array}{cc} a(a^2+1) & 2a \\ 2a^2 & a^2+1 \end{array}\right]$.  Now, $\det(M) = a(a-1)^2 (a+1)^2$ and $\mathrm{tr}(M) = (a+1)(a^2+1)$ are both nonzero under our assumptions.  It can be verified (using the {\sc Resolve} function of Mathematica$^{\rm TM}$) that $\mathrm{tr}(M) |\det(M)| \ne \overline{\mathrm{tr}(M)} \det(M)$ provided that $a \notin \mathbb{R}$, so by Lemma \ref{relaxedUnary}, the eigenvalues of $M$ have distinct norm.  Also, $M \left[\begin{array}{c} a \\ 1 \end{array}\right] = (a+1)\left[\begin{array}{c} a(a^2-a+2) \\ (2a^2-a+1) \end{array}\right]$, so the starter gadget is not an eigenvector of $M$.  We conclude by an analogous version of Lemma \ref{unaryConstruction} that we can interpolate all signatures of the form $[c,1,c]$, and since $\#[0,1,0] \mid [0,1,1,0]$ is known to be $\SHARPP$-hard \cite{FOCS08} and equivalent to $\#[-\omega(\omega^2-2)/3, 1, -\omega(\omega^2-2)/3] \mid [1,0,0,1]$ by holographic reduction (where $\omega$ is the principal 12$^\mathrm{th}$ root of unity), we conclude that $\#[a,1,a] \mid [1,0,0,1]$ is $\SHARPP$-hard.
\end{proof}

\begin{lemma}
	If $4X^3 = Y^2$, then $\mathrm{Hol}(a,b)$ is $\SHARPP$-hard unless $X \in \{0, \pm 1\}$, in which case it is in P.
\end{lemma}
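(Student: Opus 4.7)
The plan is to reduce this remaining case to Lemma~\ref{abEqual} via a simple holographic reduction. First, observe that $(a^3 - b^3)^2 = (a^3 + b^3)^2 - 4a^3 b^3 = Y^2 - 4X^3$, so the hypothesis $4X^3 = Y^2$ is equivalent to $a^3 = b^3$, i.e., $b = \omega a$ for some cube root of unity $\omega$. The tractable cases $X \in \{0, \pm 1\}$ then follow directly from Theorem~\ref{tractable}: when $X = 0$, the identity $a^3 = b^3$ forces $a = b = 0$ and hence $Y = 0$; $X = 1$ is case~(1); and $X = -1$ together with $4X^3 = Y^2$ forces $Y = \pm 2\mathfrak{i}$, which is case~(3).

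For hardness when $X \notin \{0, \pm 1\}$, I would split on whether $\omega = 1$. If $\omega = 1$, the problem is $\#[a,1,a] \mid [1,0,0,1]$, and since the excluded values $a \in \{0, \pm 1, \pm\mathfrak{i}\}$ in Lemma~\ref{abEqual} correspond exactly to $X = a^2 \in \{0, 1, -1\}$, Lemma~\ref{abEqual} delivers the result immediately. If $\omega \neq 1$, I apply the holographic reduction with the diagonal matrix $T = \mathrm{diag}(\omega^2, 1)$. Since $\omega^3 = 1$, the tensor $(T^{-1})^{\otimes 3}$ fixes the recognizer $[1,0,0,1]$ (the only nonzero entries sit at indices $000$ and $111$, which pick up factors $\omega^{-6} = 1$ and $1$ respectively). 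Meanwhile $T^{\otimes 2}$ sends $[a,1,b]$ to $[\omega a, \omega^2, \omega^2, b]$, and upon normalizing by the common middle entry $\omega^2$ and using $b = \omega a$, this becomes the symmetric signature $[a\omega^2, 1, a\omega^2]$. The transformed problem is thus $\#[a',1,a'] \mid [1,0,0,1]$ with $a' = a\omega^2$, and Lemma~\ref{abEqual} finishes the job provided $a' \notin \{0, \pm 1, \pm\mathfrak{i}\}$.

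It remains to verify that the exceptional sets line up. Since $a = a'\omega$ (using $1/\omega^2 = \omega$) and $b = \omega a = a'\omega^2$, we get the clean identity $X = ab = (a')^2\omega^3 = (a')^2$, which also holds trivially in the $\omega = 1$ branch. Hence $a' \in \{0, \pm 1, \pm\mathfrak{i}\}$ corresponds exactly to $X \in \{0, 1, -1\}$, matching the tractable set on the nose. There is no real obstacle in this proof --- the only delicate step is the bookkeeping for the holographic transformation, which comes out cleanly because the two requirements $t^3 = 1$ (to preserve $[1,0,0,1]$) and $t^2 = \omega$ (to symmetrize the generator) are simultaneously solvable precisely when $\omega$ is a cube root of unity, which is exactly the hypothesis.
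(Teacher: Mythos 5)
Your proof is correct and takes essentially the same approach as the paper: a holographic reduction under a diagonal cube-root-of-unity basis to collapse $\#[a,1,b]\mid[1,0,0,1]$ to $\#[a',1,a']\mid[1,0,0,1]$, finished by Lemma~\ref{abEqual} with the exceptional sets matching via $X=(a')^2$. The only cosmetic differences are that your basis $\mathrm{diag}(\omega^2,1)$ is a scalar multiple of the paper's $\mathrm{diag}(\omega,\omega^2)$ (hence the extra normalization step) and your split on $\omega=1$ versus $\omega\ne 1$ is unnecessary, since the transformation is trivially the identity when $\omega=1$.
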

\begin{proof}
	If $X = 0$ then $X = Y = 0$ and the problem is in P by Theorem \ref{tractable}.  Otherwise, $X \ne 0$, let $\omega = ba^{-1}$, and applying a holographic reduction to $\#[a,1,b] \mid [1,0,0,1]$ under the basis $\left[\begin{array}{cc} \omega & 0 \\ 0 & \omega^2 \end{array}\right]$ we see that the problem $\#[a, 1, b] \mid [1,0,0,1]$ is equivalent to $\#[\omega^2 a, 1, \omega b] \mid [1,0,0,1]$, because $\omega^3 = b^3 a^{-3} = 1$.  Since $\omega^2 a = \omega b$, we can apply Lemma \ref{abEqual} and the problem $\#[a,1,b] \mid [1,0,0,1]$ is in P if $ab = \omega^2 a \cdot \omega b = \pm 1$ and $\SHARPP$-hard otherwise.
\end{proof}

Given this, we have the following result.
\begin{theorem}
	The problem $\mathrm{Hol}(a,b)$ is $\SHARPP$-hard for all
	$a, b \in \mathbb{C}$ except in the following cases, for which 
	the problem is in $\ptime$.
	\begin{enumerate}
		\item $X=1$
		\item $X=Y=0$
		\item $X = -1$ and $Y \in \{0, \pm 2 \mathfrak{i} \}$
	\end{enumerate}
	If we restrict the input to planar graphs, then these three 
	categories are tractable in $\ptime$, as well as a fourth category 
	$4 X^3 = Y^2$, and the problem remains $\SHARPP$-hard in all other
	cases.
\qed
\end{theorem}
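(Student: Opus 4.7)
The plan is to assemble the final classification by combining the tractability results of Theorem~\ref{tractable} with the two $\SHARPP$-hardness statements proved immediately before. Conceptually the heavy lifting has already been done; what remains is to verify that the exceptional configurations from each hardness result match exactly the tractable families in Theorem~\ref{tractable}.

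For the tractable direction I would simply appeal to Theorem~\ref{tractable}: it covers cases (1)--(3) for general graphs and additionally the fourth family $4X^3 = Y^2$ for planar inputs via the holographic algorithms of~\cite{STOC07}. No further argument is required on this side.

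For hardness I would split on whether $4X^3 = Y^2$. In the regime $4X^3 \ne Y^2$, the penultimate theorem yields $\SHARPP$-hardness whenever $X \ne 1$ and $(X,Y) \ne (-1,0)$; both exceptions are already in the tractable list, and, crucially, the third tractable family $X = -1$ with $Y = \pm 2\mathfrak{i}$ does not intersect this regime since $4(-1)^3 = -4 = (\pm 2\mathfrak{i})^2$. In the regime $4X^3 = Y^2$, the final lemma before the theorem produces $\SHARPP$-hardness unless $X \in \{0, \pm 1\}$, and substituting each such $X$ into $Y^2 = 4X^3$ yields respectively $(0,0)$, $(1, \pm 2)$, and $(-1, \pm 2\mathfrak{i})$, all of which are tractable by Theorem~\ref{tractable} (with $(1, \pm 2)$ subsumed by $X = 1$). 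Combining the two regimes, $\SHARPP$-hardness holds on the complement of the three listed tractable families.

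For the planar refinement I would verify that every gadget used in Sections~\ref{interpolation} and \ref{complexSignatures} admits a planar drawing, including the iterated constructions, where each successive recursive application is routed around the previous ones as depicted in Figure~\ref{iteration}. The base problem invoked through Lemma~\ref{mainLemma2}, namely counting vertex covers on $3$-regular graphs, remains $\SHARPP$-hard even when restricted to planar inputs by~\cite{XiaZZ07}, so every hardness reduction carries over verbatim to the planar setting, while the additional tractable family $4X^3 = Y^2$ is already accounted for in Theorem~\ref{tractable}. The only mild obstacle, then, is the bookkeeping check that the exceptional sets singled out by the two hardness results align precisely with the three (or four, in the planar case) tractable families; once this alignment is confirmed the theorem follows immediately.
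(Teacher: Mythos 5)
Your proposal is correct and matches the paper's (implicit) argument: the paper states this theorem with only a \qed because it is precisely the assembly of Theorem~\ref{tractable}, the penultimate theorem (covering $4X^3 \ne Y^2$), the final lemma (covering $4X^3 = Y^2$), and the remark about planarity of the gadgets together with the planar hardness of {\sc Vertex Cover} from~\cite{XiaZZ07}. Your two-regime split and the check that the exceptional sets from each hardness result coincide with the listed tractable families is exactly the bookkeeping the paper intends.
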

A simple coordinate change from $(X,Y)$ to 
$(X, (\frac{Y}{2})^2)$ translates this into Theorem \ref{thm:main-intro}.



\end{document}